\newtheorem{theorem}{Theorem}[section]
\newtheorem{proposition}[theorem]{Proposition}
\newtheorem{corollary}[theorem]{Corollary}
\newtheorem{lemma}[theorem]{Lemma}
\theoremstyle{definition}
\newtheorem{definition}[theorem]{Definition}
\theoremstyle{remark}
\numberwithin{equation}{section}
\newcommand{\C}{\mathbb{C}}
\newcommand{\R}{\mathbb{R}}
\newcommand{\Z}{\mathbb{Z}}
\newcommand{\pain}[1]{\operatorname{P}_{\mathrm{#1}}}
\newcommand{\defeq}{\vcentcolon=}
\begin{document}

\title[Delay Painlev\'e-I equation, associated polynomials and Masur-Veech volumes]{Delay Painlev\'e-I equation, associated polynomials and Masur-Veech volumes}

\author[J. Gibbons]{J. Gibbons$^{1}$}

\address{$^1$Department of Mathematics, Imperial College London, London SW7 2AZ, UK.}
\email{gibbonssoinne@aol.com}

\author[A. Stokes]{A. Stokes$^{2,3}$}

\address{$^{2}$Graduate School of Mathematical Sciences, The University of Tokyo, 3-8-1 Komaba Meguro-ku Tokyo 153--8914, Japan.}
\address{$^{3}$Faculty of Mathematics, Informatics and Mechanics, University of Warsaw, ul. Banacha 2, 02-097, Warsaw, Poland.}

\email{stokes@mimuw.edu.pl} 

\author[A. P. Veselov]{A. P. Veselov$^{4}$}

\address{$^{4}$Department of Mathematical Sciences, Loughborough University, Loughborough LE11 3TU, UK.}
\email{A.P.Veselov@lboro.ac.uk}


\maketitle

\begin{abstract}
	We study a delay-differential analogue of the first Painlev\'e equation obtained as a delay periodic reduction of Shabat's dressing chain. 
	We construct formal entire solutions to this equation and introduce a new family of polynomials (called Bernoulli-Catalan polynomials), which are defined by a nonlinear recurrence of Catalan type, and which share properties with Bernoulli and Euler polynomials.
We also discuss meromorphic solutions and describe the singularity structure of this delay Painlev\'e-I equation in terms of an affine Weyl group of type $A_1^{(1)}$.
As an application we demonstrate the link with the problem of calculation of the Masur-Veech volumes of the moduli spaces of meromorphic quadratic differentials by re-deriving some of the known formulas. 
\end{abstract}

\section{Introduction}
In 1992 Shabat \cite{Shabat} considered the following remarkable dynamical system, which he called dressing chain: 
\begin{equation}
f_{k+1}' + f_k' = f_{k+1}^2 - f_k^2 + \alpha_k, \qquad k \in \Z, \quad f_k = f_k(z), \quad \alpha_k \in \C
\end{equation}
and used it to construct new one-dimensional Schr\"odinger operators with explicit spectrum.
Soon after, Shabat and Veselov  \cite{VS} studied the periodic closures with $f_{k+N(z)}=f_k(z),$ revealing deep relations with algebraic geometry, spectral theory and Painlev\'e equations. 
In particular, for $N=3$ and $\alpha=\sum_{k=1}^N\alpha_k \neq 0$ we have a system which is equivalent to the Painlev\`e-IV equation, and for general $N$ we have higher analogues of Painlev\`e-IV and Painlev\`e-V equations \cite{VS,Adler}.
Shabat studied also the closure of the dressing chain with the self-similar condition
$f_{j+1}(z)=q f_j(qz)$
(see Section 3 in \cite{Shabat}).

Soon after that, Gibbons and Veselov \cite{GV} began to study the solutions of the delay-periodic closure of the dressing chain assuming that
\begin{equation}
f_{k+1} = T f_k, \qquad T \varphi(z) = \varphi(z+h),
\end{equation}
which leads to the delay-differential equation
\begin{equation} \label{delayP1}
(T+1) f' = (T-1) f^2 + \alpha.
\end{equation}
This equation is also a special case of a delay-differential equation identified by Grammaticos, Ramani and Moreira \cite{GRM93} via a kind of singularity confinement criterion as an analogue of the 
first Painlev\'e equation. 
Indeed, if we substitute
\begin{equation}
f(z)= \frac{1}{h} + h w(z), \qquad \alpha = \frac{h^3}{6}, 
\end{equation}
in equation \eqref{delayP1}, in the limit $h \to 0$ we have 
\begin{equation}
w_{z z z} = 12 w w_{z} + 1, 
\end{equation}
which after integrating once leads without loss of generality to the first Painlev\'e equation in the form
\begin{equation}
w_{z z  } = 6 w^2 + z.
\end{equation}
For this reason we will call equation \eqref{delayP1} the {\it delay Painlev\'e-I equation.}
The $\alpha=0$ case of this equation was shown by Berntson \cite{Ber18} to admit a multi-parameter family of elliptic function solutions, in parallel with autonomous limits of the Painlev\'e equations being solved by elliptic functions.

The equation \eqref{delayP1} can also be derived from the intermediate long wave (ILW) equation in the form
\begin{equation}
u_t = 2 u u_x - \frac{T+1}{T-1} u_{x x},
\end{equation}
where $Tu(t,x) = u(t,x+h)$ is the shift in $x$-direction by $h=iD$, $D$ being the depth of the water.
This important equation was written in this form by Chen and Lee (see formula (8) in \cite{Chen_Lee}), but in an equivalent form it appeared earlier in the paper \cite{Joseph} by Joseph. It interpolates between the famous Korteweg-de Vries (shallow water) and Benjamin-Ono (deep water) equations and has been extensively studied starting from \cite{SAK} (see the references in the recent survey \cite{KS21}).
Substituting in this equation
$$u(x,t) = -\frac{\alpha}{h} t + f(x -\frac{\alpha}{h} t^2)$$ 
and applying to both sides the operator $T-1,$ we see that $f$ must satisfy
equation \eqref{delayP1}.


The results of Gibbons and Kupershmidt \cite{GK} provide the following Lax pair for the equation \eqref{delayP1}:
\begin{subequations}
\begin{gather}
(\partial_x+T+u)\psi=-\frac{\alpha}{h}\lambda \psi, \\
\partial_\lambda \psi=((T+u)^2 - v)\psi, \quad v=\frac{T+1}{T-1}\partial_x u.
\end{gather}
\end{subequations}
In this paper we will study the solutions of the delay Painlev\'e-I  equation \eqref{delayP1}, continuing unfinished work \cite{GV}. Without loss of generality we will assume that the shift parameter $h=1.$

One of the main outcomes are new families of interesting polynomials, sharing some properties with the famous Bernoulli and Euler polynomials. In contrast to the classical case their generating function is not an elementary function but a (formal) solution of the delay Painlev\'e-I equation. A particularly interesting case, corresponding to special initial data, is given by the family of polynomials $Q_n(z)$, which we call {\it Bernoulli-Catalan polynomials}. They satisfy the following nonlinear recurrence relation of Catalan type:
\begin{equation}
Q_n(z)=\mathcal K \sum_{k=1}^{n-1}Q_k(z)Q_{n-k}(z),\quad n>1, \quad  Q_1(z):=z-\frac{1}{2},
\end{equation}
with
$$
\mathcal K= K(\tfrac{d}{dz}), \quad K(t)=\frac{1}{t}\tanh\frac{t}{2}=\sum_{n=1}^\infty\frac{2^{2n}-1}{(2n)!}B_{2n}t^{2n-2},
$$
where $B_{2n}$ are the classical Bernoulli numbers. 

We demonstrate their importance by linking them with the calculation of the Masur-Veech volumes, using the recent results of Yang, Zagier and Zhang \cite{YZZ}.
The Masur-Veech volume 
$\operatorname{Vol} \mathcal Q_{g,n}$ 
measures the moduli space $\mathcal Q_{g,n}$ of algebraic curves of genus $g$ with $n$ marked points, supplied with a meromorphic quadratic differential having simple poles at these points. 
Yang, Zagier and Zhang \cite{YZZ} derived the equations for a corresponding generating function for the family of $\operatorname{Vol} \mathcal Q_{g,n}$ as $g$ and $n$ vary, one of which is closely related to our equation \eqref{delayP1} (see Section \ref{sec:masurveech}). This gives an opportunity to apply our results to the problem of the explicit calculation of these volumes, which has been the subject of active research in recent years (see e.g. \cite{ADGZZ} and references therein).

We start the paper with a conservation form of the delay Painlev\'e-I equation in Section \ref{sec:conservationform}, then in Section \ref{sec:existence} we prove the existence of a large class of formal entire solutions via a recursive procedure. 

In Sections \ref{sec:polynomials} and \ref{sec:alls_izero} we study the polynomials which appear in the case of zero initial seed. In the particular special case with zero parameters we have the new family of Bernoulli-Catalan polynomials. 

In Section \ref{sec:meromorphicsols} we discuss the meromorphic solutions and the singularity structure of the delay Painlev\'e-I equation, turned out to be related with an action of the affine Weyl group of type $A_1^{(1)}$.

Finally in Section \ref{sec:masurveech} we establish the link with the calculation of the Masur-Veech volumes, providing, in particular, one more derivation of the Kontsevich formula for the Masur-Veech volume of $\mathcal Q_{0,n.}$ 
We conclude with the discussion of some open questions and the relation with the discrete Painlev\'e-I equation.

\section{The conservation form of the equation} \label{sec:conservationform}

Assume from now on that the shift parameter $h=1$, so  $T \varphi(z) = \varphi(z+1)$. 
We start with the following simple observation.

\begin{proposition}
The delay Painlev\'e-I equation \eqref{delayP1} can be rewritten in the conservation form 
\begin{equation}
\label{conserved1}
\frac{d}{dz}S=0,
\end{equation}
where
\begin{equation}
\label{conserved2}
S:=(T+1)f(z)-\int_{z}^{z+1} [f^2(t)+\alpha(t- \tfrac{1}{2})]dt.
\end{equation}
In other words, $S$ is a conserved quantity of the delay Painlev\'e-I  equation.
\end{proposition}

Indeed,
$$
\frac{d}{dz}S(z)= (T+1)\frac{d}{dz}f(z)-(T-1) f^2 (z)-\alpha=0
$$
if and only if $f$ satisfies \eqref{delayP1}. The shift by $1/2$ in the integrand is chosen here for convenience.

We will use this form of the equation and the conserved quantity $S=S(\alpha)$ in Section \ref{sec:polynomials}.

\section{Formal entire solutions} \label{sec:existence}

Consider first the case of equation \eqref{delayP1} when parameter $\alpha = 0$:
\begin{equation} \label{alpha0}
(T+1)f' = (T-1) f^2, \qquad T \varphi(z) = \varphi(z+1).
\end{equation}
Let $V_0$ be the vector space of entire 2-periodic functions:
\begin{equation}
\left. (T^2 - 1) \right|_{V_0} = 0.
\end{equation}
We have the spectral decomposition for $T$
\begin{equation}
V_0 = V_0^{+} \oplus V_0^{-}, \qquad
V_0^{\pm} = \left\{ \varphi \in V_0 ~:~ T \varphi = \pm \varphi \right\}.
\end{equation}
Then any function $f \in V_0^{-}$ satisfies \eqref{alpha0}. The question of whether there are other entire solutions is interesting, but we will not discuss this here. 

Consider now the following solutions of the delay Painlev\'e-I  equation
\begin{equation} \label{1}
(T+1) f' = (T-1) f^2 +  \alpha, \qquad T \varphi(z) = \varphi(z+1),
\end{equation}
with $\alpha\neq 0$, as a formal series in $\alpha$:
\begin{equation} \label{2}
f(z) = \phi_0 (z) + \sum_{k = 1}^{\infty} \phi_k(z)  \alpha^k, \qquad \phi_0 \in V_0^{-}.
\end{equation}
Our first main result is that such formal solutions do exist, with each $\phi_k$ being an entire function of $z$. 
Substitution of \eqref{2} into \eqref{1} gives $(T+1)\phi_0' = (T-1) \phi_0^2$, which is satisfied automatically, then we have recursion relations for the functions
\begin{equation}
\psi_n (z) \defeq \phi_{n}(z+1) + \phi_n (z) = (T+1) \phi_n (z), 
\end{equation}
which take the form 
\begin{subequations}
\begin{align}
&\psi_1 '  + 2 \phi_0 (z) \psi_1 - 1 = 0, \label{psi1} \\
&\psi_n' + 2 \phi_0 (z) \psi_n + (1-T)\sum_{i=1}^{n-1} \phi_i(z) \phi_{n-i}(z) = 0, \quad n \geq 2. \label{psin} 
\end{align}
\end{subequations}
The general solution of \eqref{psi1} is
\begin{equation}
\psi_1(z) = C(z) e^{\rho(z)}, \qquad \rho(z) \defeq  - 2 \int^z \phi_0(\xi) d \xi, \qquad C(z) \defeq \int^z e^{ - \rho(\xi) d \xi}.
\end{equation}
Note that $\rho(z)$ is 2-periodic since $\int_0^2 \phi_0(\xi) d \xi = 0$. 
Since $C(z)$ is a sum of linear and 2-periodic functions and $e^{\rho(z)}$ is purely 2-periodic, we introduce the vector space of entire functions
\begin{equation}
V_1 = \left\{ z p_1 (z) +p_0 (z) ~:~ p_{0,1} \in V_0 \right\},
\end{equation}
and see that $\psi_1 \in V_1$.
To find $\phi_1$ we need to solve the equation
\begin{equation} \label{psiphi1}
(T+1) \phi_1 = \psi_1.
\end{equation}
The solution may not exist in $V_1$, but we claim that it always exists in 
\begin{equation}
V_2 = \left\{ z^2 p_2(z) + z p_1 (z) + p_0(z) ~:~ p_i \in V_0 \right\}.
\end{equation}
We will prove a more general fact. Let 
\begin{equation}
V_n = \left\{ z^n p_i(z) +\dots + p_0(z) ~:~ p_i \in V_0 \right\},
\end{equation}
be the vector space of `quasi-polynomials' of degree less than or equal to $n$. 
\begin{lemma} \label{lemmaquasipolynomial}
For every $\psi \in V_n$, there exists $\phi \in V_{n+1}$ such that $(T+1)\phi = \psi$.
\end{lemma}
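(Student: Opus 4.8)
The plan is to induct on $n$, constructing $\phi$ by matching the two highest-degree coefficients of $(T+1)\phi$ to those of $\psi$ and then invoking the inductive hypothesis on the remainder. First I would record the relevant structure of $T$ on $V_0$. Since a $2$-periodic entire function shifted by $1$ is again $2$-periodic, $T$ restricts to an involution on $V_0$, giving the splitting $V_0 = V_0^+ \oplus V_0^-$ already recorded in the text. On these eigenspaces $T = \pm 1$, so $(T+1)$ acts as multiplication by $2$ on $V_0^+$ and vanishes on $V_0^-$; thus $(T+1)\colon V_0 \to V_0$ has image exactly $V_0^+$ and kernel exactly $V_0^-$. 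This is the heart of the matter: one cannot solve $(T+1)\phi = \psi$ coefficient-by-coefficient inside $V_n$, because the induced map on leading coefficients is precisely $(T+1)\colon V_0\to V_0$, which misses the whole subspace $V_0^-$.

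The key computation I would carry out is the effect of $(T+1)$ on a single quasi-monomial $z^m p$ with $p \in V_0$. Writing $T(z^m p) = (z+1)^m\,(Tp)$ and expanding the binomial gives
\begin{equation}
(T+1)(z^m p) = z^m\,(T+1)p + \sum_{k=1}^{m}\binom{m}{k}\, z^{m-k}\,(Tp),
\end{equation}
so $(T+1)$ preserves $V_{n+1}$ and lowers degree by exactly one precisely when $(T+1)p = 0$, i.e. when $p \in V_0^-$. This is the mechanism that repairs the cokernel defect: taking the top coefficient of $\phi$ in $V_0^-$ kills the spurious degree-$(n+1)$ term while feeding a nonzero degree-$n$ contribution $-(m)\,p$ (using $Tp=-p$) back into the leading coefficient, which is exactly where the $V_0^-$-part lives.

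With this in hand the induction runs as follows. For the base case $n=0$ I would write $\phi = z p_1 + p_0$, compute $(T+1)\phi = z\,(T+1)p_1 + \big(Tp_1 + (T+1)p_0\big)$, force $p_1\in V_0^-$ to remove the linear term, decompose $\psi = \psi^+ + \psi^-$ and solve the constant equation $(T+1)p_0 = \psi + p_1$ by taking $p_1 = -\psi^-$ and $p_0 = \tfrac12\psi^+$. For the inductive step, given $\psi\in V_n$ with degree-$n$ coefficient $q = q^+ + q^- \in V_0^+\oplus V_0^-$, I would set $\phi_{\mathrm{top}} = z^{n+1} r + z^n s$ with $r = -\tfrac{1}{n+1}q^- \in V_0^-$ and $s = \tfrac12 q^+ \in V_0^+$. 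By the displayed formula, $(T+1)\phi_{\mathrm{top}}$ has vanishing degree-$(n+1)$ part and degree-$n$ coefficient equal to $-(n+1)r + 2s = q^- + q^+ = q$, so $\psi - (T+1)\phi_{\mathrm{top}}$ lies in $V_{n-1}$. Applying the inductive hypothesis yields $\chi \in V_n$ with $(T+1)\chi = \psi - (T+1)\phi_{\mathrm{top}}$, and $\phi = \phi_{\mathrm{top}} + \chi \in V_{n+1}$ solves $(T+1)\phi = \psi$.

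I expect no analytic obstacle, since every coefficient stays in $V_0$ and is therefore entire, and the degree bounds are respected throughout. The only genuine subtlety, and the step I would flag, is that the cokernel of $(T+1)$ on $V_0$ is exactly $V_0^-$: this is what forces the passage from $V_n$ up to $V_{n+1}$ and dictates the split choice of $r\in V_0^-$ (reaching $q^-$ through the shift relation $Tr=-r$) and $s\in V_0^+$ (reaching $q^+$) above. Note that $n+1\neq 0$ makes the division defining $r$ legitimate, so the construction never breaks down.
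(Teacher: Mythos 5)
Your proof is correct and follows essentially the same route as the paper: both hinge on choosing the degree-$(n+1)$ coefficient in $V_0^-$ (so it contributes $-(n+1)q^-$ at degree $n$ rather than anything at degree $n+1$) and the $V_0^+$-part of the degree-$n$ coefficient as $\tfrac12 q^+$. The paper writes out the full system of coefficient equations and observes it is triangular; your induction on $n$ is just that triangularity repackaged as peeling off the top coefficient, so the two arguments are the same in substance.
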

\begin{proof}
We need to show that any $\psi = p_n z^n +\dots+ z p_1 + p_0$ with $p_k \in V_0$ can be realised as $(T+1)\phi$.
Let $\phi = z^{n+1} q_{n+1} + \dots + z q_1 + q_0 \in V_{n+1}$, with $q_{n+1} \in V_0^{-}$.
Direct computation shows that
 \begin{equation}
(T+1) \phi = (T+1) \sum_{i=0}^{n+1}  z^i q_{i} = \sum_{k=0}^n z^k \left\{ -\binom{n+1}{k} q_{n+1} + \sum_{i=k}^{n} \binom{i}{k} Tq_i + q_k \right\}.
\end{equation}
Split the remaining 2-periodic coefficients in $\phi$ according to $q_k = q_k^{+} + q_k^{-}$, where $q_k^{\pm} \in V_0^{\pm}$. Explicitly, $q_k^{+} = \frac{1}{2} (q_k + T q_k)$,  $q_k^{-} = \frac{1}{2} (q_k - T q_k)$. Then 
 \begin{equation*}
(T+1) \phi =z^n \left( -(n+1) q_{n+1} + 2 q_n^+ \right) + \sum_{k=0}^{n-1} z^k \left\{ -\binom{n+1}{k} q_{n+1} + \sum_{i=k+1}^{n} 2 \binom{i}{k} q_i^+ + 3q_k^+ + q_k^- \right\}.
\end{equation*}
To obtain $\psi = \sum_{k=0}^{n} z^k p_k$ as $(T+1) \phi$, we again decompose $p_k = p_k^+ + p_k^-$, $p_k^{\pm} \in V_0^{\pm}$, so we require
\begin{subequations}
\begin{align}
p_n &= p_n^{+} + p_n^- =  -(n+1) q_{n+1} + 2 q_n^+, \\
p_k &= p_k^{+} + p_k^- =  \binom{n+1}{k} q_{n+1} + \sum_{i=k-1}^n  2\binom{i}{k} q_i^+ + 3q_k^+ + q_k^-,
\end{align}
\end{subequations}
for $k=0,\dots, n-1$. Noting that $q_{n+1} \in V_0^{-}$, we can choose
\begin{equation}
q_{n+1} = - \frac{p_n^-}{n+1}, \qquad q_n^+ = \frac{p_n^+}{2},
\end{equation}
then the remaining $n$ equations are of the form $p_k = F( q_k^-, q_k^+, q_{k+1}^-, \dots , q_n^+, q_{n+1})$, so the system is triangular and may be solved for $q_k^+, q_k^-$. Note that for $\psi$ such that $p_n \in V_0^+$, $q_{n+1}$ can be chosen to be zero and therefore in such case $\phi \in V_n$.
\end{proof}

Summarising all this, we have the following main result of this section.

\begin{theorem}
For every function $\phi_0(z) \in V_0^{-}$ there exists a formal series solution of the delay Painlev\'e-I  equation \eqref{1}
of the form
\begin{equation}
f(z) = \phi_0(z) + \sum_{k=1}^{\infty} \phi_k(z)  \alpha^k,
\end{equation}
with quasi-polynomial entire functions $\phi_k(z) \in V_{3k-1}$.
\end{theorem}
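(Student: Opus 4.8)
The plan is to construct the coefficients $\phi_n$ one at a time by strong induction on $n$, following the two-step recursive scheme already set up. For each $n$ I would first solve the linear first-order ODE \eqref{psin} (with \eqref{psi1} as the $n=1$ case) for $\psi_n = (T+1)\phi_n$ using the integrating factor $e^{-\rho}$, and then invoke Lemma \ref{lemmaquasipolynomial} to recover a quasi-polynomial $\phi_n$ from $(T+1)\phi_n = \psi_n$. The existence and entirety of each $\phi_n$ are then automatic from the lemma; the real content of the theorem is the degree bound $\phi_n \in V_{3n-1}$, so the whole argument is a careful bookkeeping of how the degree of a quasi-polynomial changes across one cycle of the recursion.

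I would first record the elementary degree rules. Multiplication respects the filtration, $V_a \cdot V_b \subseteq V_{a+b}$, with the top coefficient of a product being the product of the top coefficients. Solving the ODE raises the degree by at most one: writing $\psi_n = e^{\rho}\bigl(-\int g_n e^{-\rho} + \mathrm{const}\bigr)$ with $g_n \defeq (1-T)\sum_{i=1}^{n-1}\phi_i\phi_{n-i}$, the antiderivative of a quasi-polynomial of degree $m$ is again a quasi-polynomial of degree at most $m+1$ (the extra degree coming only from the mean value of the periodic part), and multiplication by the $2$-periodic factor $e^{\rho}$ preserves degree. Finally, Lemma \ref{lemmaquasipolynomial} raises the degree by at most one. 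Since the product $\phi_i\phi_{n-i}$ lies a priori in $V_{3n-2}$, this naive count gives only $\phi_n \in V_{3n}$, overshooting the claim by exactly one.

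The key is therefore to recover one degree, and this is where the operator $(1-T)$ must be used. The final remark in the proof of Lemma \ref{lemmaquasipolynomial} shows that the top coefficient it produces always lies in $V_0^-$, which suggests strengthening the induction hypothesis to: $\phi_n \in V_{3n-1}$ and the coefficient of $z^{3n-1}$ in $\phi_n$ lies in $V_0^-$. Granting this below $n$, the top coefficients of $\phi_i$ and $\phi_{n-i}$ are both odd, so by the parity rule $V_0^-\cdot V_0^- \subseteq V_0^+$ the coefficient of $z^{3n-2}$ in $\sum_{i=1}^{n-1}\phi_i\phi_{n-i}$ is even. Then $(1-T)$ annihilates the top degree, since its effect on a term $z^{m}c$ with $c \in V_0^+$ leaves leading coefficient $c - Tc = 2c^- = 0$; thus $g_n \in V_{3n-3}$ rather than $V_{3n-2}$. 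Feeding this through the ODE ($\psi_n \in V_{3n-2}$) and Lemma \ref{lemmaquasipolynomial} ($\phi_n \in V_{3n-1}$) gives exactly the claimed bound, while the explicit form of the top coefficient in the lemma places the new leading coefficient in $V_0^-$, closing the strengthened induction. The base case $n=1$ is checked directly from \eqref{psi1}: $\psi_1 \in V_1$, whence Lemma \ref{lemmaquasipolynomial} yields $\phi_1 \in V_2 = V_{3\cdot 1 - 1}$ with top coefficient in $V_0^-$.

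The main obstacle is conceptual rather than computational: one must notice that the degree bound fails by one under the naive count, and that the missing degree is recovered only by exploiting the $V_0^-$-parity of the leading coefficients. The crucial point to verify is that the quadratic nonlinearity, built from functions whose top coefficients are odd, produces a quantity whose top coefficient is even, so that $(1-T)$ lowers its degree; combined with the fact that the integration in the ODE step raises the degree by at most one, this is exactly what prevents the saved degree from being given back. The oddness of the output $\phi_n$ needed to continue the induction is then supplied automatically by the final remark in Lemma \ref{lemmaquasipolynomial}, so no parity tracking through the ODE is required. The only genuinely delicate bookkeeping is confirming that in the product $\phi_i\phi_{n-i}$ only the top-times-top terms reach the critical degree $3n-2$, so that its parity is governed entirely by the induction hypothesis.
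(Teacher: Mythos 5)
Your proof is correct and follows essentially the same route as the paper's: strong induction, solving the linear ODE \eqref{psin} for $\psi_n=(T+1)\phi_n$, then applying Lemma \ref{lemmaquasipolynomial}, with the degree count $V_{3n-2}\to V_{3n-3}\to V_{3n-2}\to V_{3n-1}$. The one difference is that you make explicit a step the paper's proof leaves implicit: the assertion that $(T-1)\sum_{i=1}^{n-1}\phi_i\phi_{n-i}$ lies in $V_{3n-3}$ rather than $V_{3n-2}$ is valid only because, as you correctly argue via the strengthened induction hypothesis, the leading coefficients of the $\phi_k$ lie in $V_0^-$ (this is guaranteed by the choice $q_{n+1}=-p_n^-/(n+1)$ in the proof of Lemma \ref{lemmaquasipolynomial}), so the top coefficient of the product lies in $V_0^+$ and is annihilated by $T-1$.
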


\begin{proof}
By induction. If $\phi_k \in V_{3k-1}$ for $k\leq n-1$, then $\sum_{i=1}^{n-1} \phi_i \phi_{n-i} \in V_{3n-2}$, so $(T-1)\sum_{i=1}^{n-1} \phi_i \phi_{n-i} \in V_{3n-3}$. Then solving the differential equation \eqref{psin} gives $\psi_n \in V_{3n-2}$, after which Lemma \ref{lemmaquasipolynomial} ensures that $\phi_n \in V_{3n-1}$ and the proof is complete.
\end{proof}

\section{Polynomials from the trivial seed}
\label{sec:polynomials}

The simplest choice of entire antiperiodic `seed' function $\phi_0(z)$ to determine the coefficients $\phi_n(z)$ by the recursive procedure of the previous Section is $\phi_0(z) = 0$. Moreover, this will lead to polynomial $\phi_n=\Phi_n$. 
In order to simplify some expressions later we make a slight scaling of the perturbation of the $\alpha=0$ case from Section \ref{sec:existence} and instead let 
$\alpha=2\beta$
 and consider
\begin{equation} \label{scaledperturbation}
(T+1) f' = (T-1) f^2 + 2 \beta,
\end{equation}
\begin{equation} \label{22}
f(z) = \sum_{n = 0}^{\infty} \Phi_n(z)  \beta^n, \quad \Phi_0(z) \equiv 0.
\end{equation}
In this case the equations corresponding to \eqref{psi1}, \eqref{psin} for the recursive procedure are
\begin{equation} \label{recurrencephi}
\Psi_1' = 2, \qquad 
\Psi_n' = (T-1) \sum_{i=1}^{n-1} \Phi_i \Phi_{n-i}, 
\qquad \Psi_n = (T+1)\Phi_n.
\end{equation}
Note that because $\Phi_0(z) \equiv 0$ one could include in the summation the zero term with $i=n$ to make the recurrence look more natural, but we prefer not to do this.

In decomposing $\Psi_n= (T+1)\Phi_n$, we note that $(T+1)$ is invertible on polynomials, and that the degree of $\Phi_n$ is $n$. At each step we perform an integration to compute $\Psi_n$, leaving the constant of integration 
$$\tau_n=\Psi_n(0)=\Phi_n(1)+\Phi_n(0)$$
arbitrary at this stage. 
The first few of the polynomials $\Phi_n=\Phi_n(z;\tau)$ are as follows:

\begin{align*}
\Phi_0(z;\tau)&=0,\\
\Phi_1(z;\tau)&= z + \tfrac{1}{2}(\tau_1 - 1), \\
\Phi_2(z;\tau) &= \tfrac{1}{2} z^2 + \tfrac{1}{2} ( \tau_1 - 1)z + \tfrac{1}{4}(-\tau_1 + 2 \tau_2), \\
\Phi_3(z;\tau) &= \tfrac{1}{2} z^3 + \tfrac{3}{4} ( \tau_1 - 1) z^2 + \tfrac{1}{4}(-3\tau_1+  \tau_1^2 +2 \tau_2) z + \tfrac{1}{8}(1- \tau_1^2 - 2\tau_2 + 4 \tau_3), \\
\Phi_4(z;\tau) &= \tfrac{5}{8} z^4 + \tfrac{5}{4} ( \tau_1 - 1) z^3 + \tfrac{1}{8}(-1 - 15\tau_1 +6 \tau_1^2 + 6 \tau_2) z^2 \\
&\quad + \tfrac{1}{8} ( \tau_1^3 - 6 \tau_1^2  - \tau_1 + 4 \tau_1 \tau_2 - 6 \tau_2 + 4\tau_3 + 6) z + \tfrac{1}{16}( - \tau_1^3 + 6 \tau_1 -4 \tau_1 \tau_2 - 4 \tau_3 + 8 \tau_4).
\end{align*}

Let the coefficients of the polynomials $\Phi_n(z;\tau)$ written in the standard basis of monomials in $z$ be $a_{0}^{(n)}, \dots, a_{n}^{(n)}$, so 
\begin{equation} \label{phicoeffs}
\Phi_n(z ; \tau) = \sum_{k=0}^n a^{(n)}_k z^k.
\end{equation}
It can be shown directly from the defining recurrence \eqref{recurrencephi} that the coefficients satisfy the following relations for $m=0,\dots, n-1$:
\begin{equation} \label{coeffrelations}
(m+1) \left[ 2 a_{m+1}^{(n)} + \sum_{i=m+2}^{n} \binom{i}{m+1} a_i^{(n)} \right] = \sum_{i=1}^{n-1} \sum_{l=m+1}^{n} 
\sum_{\underset{0\leq k \leq n-i}{\underset{0\leq j \leq i}{j+k=l}}}
 \binom{l}{m} a_j^{(i)} a_k^{(n-i)}. 
\end{equation}
The relations \eqref{coeffrelations} can be used to determine the coefficients of $\Phi_n$ in terms of those of $\Phi_1, \dots, \Phi_{n-1}$ as well as 
$\tau_n$.
The first property of these polynomials we will establish is the following.
\begin{proposition} \label{prop:generalsymmetry}
The polynomials $\Phi_n(z;\tau)$ satisfy the symmetry relation
 \begin{equation}
\Phi_n(1-z; \tau) = (-1)^n \Phi_n(z;\tilde \tau),
\end{equation}
where $\tilde{\tau}=(\tilde{\tau}_1,\dots, \tilde{\tau}_n)$ with $\tilde{\tau}_i = (-1)^{i}\tau_i.$
In particular, when all odd $\tau_{2l-1}=0$ the corresponding polynomials satisfy the symmetry property of the Bernoulli polynomials:
 \begin{equation}
 \label{sym}
\Phi_n(1-z) = (-1)^n \Phi_n(z).
\end{equation} 
\end{proposition}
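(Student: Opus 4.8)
The plan is to deduce the symmetry from a discrete symmetry of the equation \eqref{scaledperturbation} itself, rather than grinding through the coefficient recursion \eqref{recurrencephi}. Writing \eqref{scaledperturbation} pointwise as $f'(z+1) + f'(z) = f(z+1)^2 - f(z)^2 + 2\beta$ and evaluating this identity at the point $-z$, a short computation shows that $g(z) := f(1-z)$ satisfies
\[
(T+1)g' = (T-1)g^2 - 2\beta,
\]
i.e. the same equation with $\beta$ replaced by $-\beta$. Thus the reflection $z \mapsto 1-z$ combined with $\beta \mapsto -\beta$ maps solutions to solutions; this is the structural input that drives the whole argument.

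I would then feed the formal series into this symmetry. If $f(z) = \sum_{n} \Phi_n(z;\tau)\beta^n$ is the series solution with integration constants $\tau$, then $g(z) = \sum_n \Phi_n(1-z;\tau)\beta^n$ solves the $(-\beta)$-equation; setting $\beta' = -\beta$ rewrites this as $g(z) = \sum_n (-1)^n \Phi_n(1-z;\tau)(\beta')^n$. Since $\Phi_0 \equiv 0$ is reflection-invariant, the polynomials $(-1)^n\Phi_n(1-z;\tau)$ have degree $n$ and satisfy precisely the recursion \eqref{recurrencephi} for the $(-\beta)$-equation, that recursion being nothing but the order-by-order expansion of the ODE. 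Because the recursion fixes $\Psi_n'$, hence $\Psi_n = (T+1)\Phi_n$ up to its value at $0$, and $(T+1)$ is invertible on polynomials, each coefficient is uniquely determined by the constants $\tau_1,\dots,\tau_n$. Consequently $g$ must coincide with the standard series $\sum_n \Phi_n(z;\sigma)(\beta')^n$ for some sequence $\sigma$, giving the identity $(-1)^n\Phi_n(1-z;\tau) = \Phi_n(z;\sigma)$.

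The crux --- and the step I expect to require the most care --- is to verify that $\sigma = \tilde\tau$. For this I would use $\tau_n = \Psi_n(0) = \Phi_n(1) + \Phi_n(0)$: the integration constant of $g$ at order $n$ in $\beta'$ equals
\[
\sigma_n = (-1)^n\big[\Phi_n(1-z;\tau)\big|_{z=0} + \Phi_n(1-z;\tau)\big|_{z=1}\big] = (-1)^n\big[\Phi_n(1;\tau) + \Phi_n(0;\tau)\big] = (-1)^n\tau_n,
\]
so indeed $\sigma_n = \tilde\tau_n$, yielding $\Phi_n(1-z;\tau) = (-1)^n\Phi_n(z;\tilde\tau)$. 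The special case is then immediate: if all odd $\tau_{2l-1}$ vanish then $\tilde\tau_i = (-1)^i\tau_i$ agrees with $\tau_i$ for every $i$ (unchanged at even indices, zero either way at odd ones), so $\tilde\tau = \tau$ and the relation collapses to the Bernoulli-type symmetry $\Phi_n(1-z) = (-1)^n\Phi_n(z)$. An alternative, more computational route would prove the statement by induction directly on \eqref{coeffrelations}, but the equation-symmetry argument is both shorter and more transparent, isolating the sign bookkeeping of the $\tau_n$ as the only delicate point.
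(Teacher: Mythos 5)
Your proof is correct, but it takes a genuinely different route from the paper's. The paper argues by induction on $n$ directly from the recurrence \eqref{recurrencephi}: assuming the symmetry for all indices below $N$, it computes $\tilde{\Psi}_N = (T+1)\Phi_N(z;\tilde{\tau})$ from the quadratic sum, checks that $\tilde{\Psi}_N'$ agrees with $(-1)^{N+1}$ times (the reflection of) $\Psi_N'$, and then integrates once using $\tau_N = \Psi_N(0)$ to close the induction. You instead isolate the structural source of the symmetry: the involution $z\mapsto 1-z$, $\beta\mapsto-\beta$ maps solutions of \eqref{scaledperturbation} to solutions --- your computation that $g(z)=f(1-z)$ satisfies $(T+1)g'=(T-1)g^2-2\beta$ is correct --- and you then appeal to uniqueness of the polynomial-coefficient formal series solution given the constants $\tau_n=\Psi_n(0)$. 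That uniqueness claim does hold and is the one point you rightly flag as needing care: each order of the expansion fixes $\Psi_n'$ from lower-order data, hence $\Psi_n$ up to the constant $\tau_n$, and then $\Phi_n$ uniquely because $(T+1)$ is invertible on polynomials; your identification $\sigma_n=(-1)^n\tau_n$ and the specialisation to $\tilde{\tau}=\tau$ when the odd parameters vanish are both right. What your approach buys is transparency --- it explains \emph{why} the symmetry exists and avoids manipulating the nonlinear sum --- at the cost of having to make explicit the uniqueness of the series solution, which the paper's inductive computation uses only implicitly. Either argument is acceptable; yours would arguably make the cleaner write-up.
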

\begin{proof}
By induction. Assume that the symmetry holds for all $n < N$ and consider $\tilde{\Psi}_N(z) = (T+1)\Phi_N(z; \tilde{\tau})$. Under this inductive hypothesis it can be shown directly from the recurrence \eqref{recurrencephi} that 
\begin{equation}
\tilde{\Psi}_N ' = (-1)^{N+1} \Psi_N',
\end{equation}
so after integrating once and noting that $\tau_N = \Psi_N(0)$, we have the result.
\end{proof}

The leading coefficients of the polynomials $\Phi_n(z;\tau)$ can be expressed in terms of the 
famous {\it Catalan numbers}
\begin{equation}\label{Catalan}
C_n:=\frac{1}{n+1}\binom{2n}{n}, \quad  \, n=0,1,2,\dots,
\end{equation}
which play a very prominent role in combinatorics \cite{Stanley}:
$$
C_n=1,1,2,5,14,42,132,429,1430,4862, 16796, 58786, \dots
$$
They can be defined also by the recurrence relation
\begin{equation}\label{recCat}
C_n=\sum_{i=1}^n C_{i-1}C_{n-i},\quad C_0=1,
\end{equation}
or by the generating function
\begin{equation}\label{genCat}
\sum_{n=0}^\infty C_nx^n=\frac{1-\sqrt{1-4x}}{2x}.
\end{equation}

\begin{proposition} \label{prop:leadingcoeffs}
\begin{enumerate}[(a)] 
\item The highest coefficient $A_n:= a_n^{(n)}$ of the polynomial $\Phi_n(z;\tau), \, n\geq 1$ is 
\begin{equation}\label{ann}
A_n= \frac{(2n-3)!!}{n!}=\frac{(2n-2)!}{2^{n-1} n! (n-1)!}=2^{1-n}C_{n-1},
\end{equation}
where $C_n$ is the corresponding Catalan number.
\item For $n\geq1$, the coefficient of $z^{n-1}$ in the polynomial $\Phi_n(z;\tau)$ is given by 
\begin{equation}
a_{n-1}^{(n)} =  \frac{n}{2}A_n (\tau_1 - 1) =  n2^{-n}C_{n-1} (\tau_1 - 1).
\end{equation}
\item When $\tau_1=0$ the coefficient of $z^{n-2}$ in the polynomial $\Phi_n(z;\tau)$ equals 
\begin{equation}
a_{n-2}^{(n)} =  (n-1)A_{n-1}\frac{\tau_2}{2}+\frac{n(n-1)}{8}A_n-\frac{(n-1)}{24}A_{n-1}-\frac{2^{n-2}(n-1)}{12}.
\end{equation}
For general $\tau_1$ we have for $n\geq2$ the formula
\begin{equation}
a_{n-2}^{(n)} =  (n-1)A_{n-1}\frac{\tau_2}{2}+\frac{n(n-1)}{8}A_n-\frac{(n-1)}{24}A_{n-1}-\frac{2^{n-2}(n-1)}{12} + L_n \tau_1^2 - L_{n+1} \tau_1,
\end{equation}
where $L_n$ are rational numbers defined by the recurrence
\begin{equation}
L_n = 2^{n-5} + \frac{1}{2}\sum_{i=1}^{n-1} \left( A_{n-i} L_i + A_i L_{n-i}\right), \qquad L_1=L_2=0,
\end{equation} 
so in particular $L_3=\frac{1}{4}$, $L_4=\frac{3}{4}$, $L_5=\frac{15}{8}$, $L_6=\frac{35}{8}$, $L_7=\frac{315}{32}$, $\dots$ 
\end{enumerate}
\end{proposition}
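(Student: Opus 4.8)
The plan is to prove the three parts together by strong induction on $n$, comparing the top three coefficients on each side of the defining recurrence \eqref{recurrencephi}. Write $P_n:=\sum_{i=1}^{n-1}\Phi_i\Phi_{n-i}$ and let $p_l:=[z^l]P_n$. Since $(T+1)z^m=(z+1)^m+z^m$ and $(T-1)z^m=(z+1)^m-z^m$, extracting the coefficient of $z^l$ from $\Psi_n=(T+1)\Phi_n$ gives $2a^{(n)}_l+\sum_{m>l}\binom{m}{l}a^{(n)}_m$, while $\Psi_n'=(T-1)P_n$ followed by one integration gives $[z^l]\Psi_n=\tfrac1l\sum_{k\ge l}\binom{k}{l-1}p_k$ for $l\ge 1$. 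Equating these two expressions for $l=n,n-1,n-2$ expresses $A_n$, $a^{(n)}_{n-1}$, $a^{(n)}_{n-2}$ through the $p_k$, which by the inductive hypothesis involve only coefficients of $\Phi_1,\dots,\Phi_{n-1}$.

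\emph{Parts (a) and (b).} Taking $l=n$ collapses everything to $2A_n=p_n=\sum_{i=1}^{n-1}A_iA_{n-i}$, with $A_1=1$. To solve this I pass to $g(x):=\sum_{n\ge1}A_nx^n$; the recurrence reads $g^2=2g-2x$, so $g(x)=1-\sqrt{1-2x}$, and comparing with the Catalan generating function \eqref{genCat} yields $g(x)=\sum_{n\ge1}2^{1-n}C_{n-1}x^n$, which is exactly \eqref{ann}. For part (b) I take $l=n-1$ and use part (a) together with the inductive form $a^{(i)}_{i-1}=\tfrac{i}{2}A_i(\tau_1-1)$: the terms equal to $nA_n$ on the two sides cancel, leaving $a^{(n)}_{n-1}=\tfrac12 p_{n-1}$, where $p_{n-1}=\sum_i\big(A_ia^{(n-i)}_{n-i-1}+a^{(i)}_{i-1}A_{n-i}\big)=nA_n(\tau_1-1)$; here I use the symmetric-sum evaluation $\sum_i iA_iA_{n-i}=\tfrac n2\sum_iA_iA_{n-i}=nA_n$.

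\emph{Part (c).} Taking $l=n-2$, the contribution $(n-1)a^{(n)}_{n-1}$ cancels between the two sides and the $z^n$-level terms combine, leaving the clean intermediate relation
\[
a^{(n)}_{n-2}=\tfrac12 p_{n-2}-\tfrac{n(n-1)}{12}A_n,\qquad p_{n-2}=2\sum_iA_{n-i}a^{(i)}_{i-2}+\sum_i a^{(i)}_{i-1}a^{(n-i)}_{n-i-1}.
\]
All of the new $\tau_1$-dependence sits in the product term $\sum_i a^{(i)}_{i-1}a^{(n-i)}_{n-i-1}=\tfrac14(\tau_1-1)^2\sum_i i(n-i)A_iA_{n-i}$, for which I use the key closed form
\[
\sum_{i=1}^{n-1}i(n-i)A_iA_{n-i}=2^{n-2},
\]
obtained from $xg'(x)=x(1-2x)^{-1/2}$ and $(xg')^2=x^2/(1-2x)=\sum_{n\ge2}2^{n-2}x^n$. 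Collecting the coefficient of $\tau_1^2$ in $a^{(n)}_{n-2}$ then gives precisely $L_n=2^{n-5}+\sum_iA_{n-i}L_i$, i.e. the stated recurrence (using $\tfrac12\sum_i(A_{n-i}L_i+A_iL_{n-i})=\sum_iA_{n-i}L_i$); collecting the coefficient of $\tau_1$ gives $L_{n+1}=2^{n-4}+\sum_iA_{n-i}L_{i+1}$ after the shift $i\mapsto i+1$, which is why the linear term is $-L_{n+1}\tau_1$. The $\tau_2$-linear part reduces through $\sum_i(i-1)A_{i-1}A_{n-i}=(n-1)A_{n-1}$, the $x^{n-1}$-coefficient of $xg'\cdot g$, to $(n-1)A_{n-1}\tfrac{\tau_2}{2}$; the remaining $\tau$-free convolutions assemble the three rational terms $\tfrac{n(n-1)}{8}A_n-\tfrac{n-1}{24}A_{n-1}-\tfrac{2^{n-2}(n-1)}{12}$, and the $\tau_1=0$ statement is just the result with the $\tau_1$-contributions dropped. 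I would check the base cases $n=2,3$ directly against the explicit $\Phi_2,\Phi_3$.

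The hardest part is the organization in (c): not the individual convolution identities, which become routine once $g=1-\sqrt{1-2x}$ is available, but showing that the quadratic and linear dependence on $\tau_1$ are both controlled by the single sequence $L_n$ (with the index shift producing $L_{n+1}$). The symmetry $\Phi_n(1-z;\tau)=(-1)^n\Phi_n(z;\tilde\tau)$ with $\tilde\tau_1=-\tau_1$ from Proposition~\ref{prop:generalsymmetry} provides a useful consistency check on the relative signs of these two terms.
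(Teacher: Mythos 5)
Your proposal is correct and follows essentially the same route as the paper, which proves the proposition by induction on $n$ using the coefficient recurrence \eqref{coeffrelations} (your two displayed identities for $[z^l]\Psi_n$ are exactly that relation) and in particular the derived convolution $2A_n=\sum_{i=1}^{n-1}A_iA_{n-i}$. The paper leaves the details as a sketch; your generating-function evaluations of the convolutions (via $g(x)=1-\sqrt{1-2x}$) and the bookkeeping of the $\tau_1$-dependence through $L_n$ and $L_{n+1}$ fill in that sketch correctly.
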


The above can be proved inductively using the recurrence relations \eqref{coeffrelations}, implying, in particular, that
$$
A_n = \frac{1}{2} \sum_{i=1}^{n-1} A_i A_{n-i},  \quad A_{n-1} = \frac{1}{n-1}\sum_{i=1}^{n-1} (n-1-i) A_i A_{n-1-i}.
$$


\subsection{Conserved quantity and parameters $\tau$}

From Section \ref{sec:conservationform}, the equation \eqref{scaledperturbation} has the conserved quantity \eqref{conserved2}:
\begin{equation}
\label{cons3}
S(z,\beta):=(T+1)f(z)-\int_z^{z+1} [f^2(t)+2\beta(t-\tfrac{1}{2})]dt,
\end{equation}
which depends only on $\beta$ but not on $z$:
\begin{equation}
\label{sk}
S(z,\beta)= S(\beta)=\sum_{k=1}^\infty s_k \beta^k.
\end{equation}
In particular, setting $z=0$ we have the relation
$$
S(\beta)=f(1)+f(0)-\int_0^{1} [f^2(t)+2\beta(t-\tfrac{1}{2})]dt =\sum_{k=1}^\infty s_k \beta^k.
$$
Substituting here the expansion (\ref{22}) and using the definition $\tau_k=\Psi_k(0)=\Phi_k(1)+\Phi_k(0)$ we have
\begin{equation}
\label{stau}
\sum_{k=1}^\infty \tau_k \beta^k-\int_0^1\left(\sum_{n = 1}^{\infty} \Phi_n(z; \tau)  \beta^n\right)^2dz=\sum_{k=1}^\infty s_k \beta^k.
\end{equation}
This allows us to express the integration constants $s_k$ via initial value parameters $\tau_j$.

\begin{proposition}
The integration constants $s_k$ are related to the parameters $\tau_j$ by a triangular polynomial transformation
\begin{equation} \label{change2}
s_n=\tau_n+ S_n(\tau_1,\dots,\tau_{n-1}), \quad S_n \in \mathbb Q[\tau_1,\dots,\tau_{n-1}].
\end{equation}
Polynomials $S_n(\tau)$ have the symmetry property
 \begin{equation}
 S_n(\tilde\tau) = (-1)^n S_n(\tau),
\end{equation}
where $\tilde{\tau}_i = (-1)^{i}\tau_i.$ In particular, $S_{2k-1}(0)=0$ for all $k\in \mathbb N.$
\end{proposition}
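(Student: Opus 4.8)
The plan is to read off the relation between the $s_k$ and the $\tau_j$ by comparing coefficients of $\beta^n$ in the generating-function identity \eqref{stau}. Expanding the square $\left(\sum_{n\geq 1}\Phi_n \beta^n\right)^2$ and matching powers of $\beta$ yields directly
\begin{equation*}
s_n = \tau_n - \int_0^1 \sum_{i=1}^{n-1} \Phi_i(z;\tau)\,\Phi_{n-i}(z;\tau)\, dz,
\end{equation*}
so that $S_n(\tau_1,\dots,\tau_{n-1}) = -\int_0^1 \sum_{i=1}^{n-1}\Phi_i(z;\tau)\,\Phi_{n-i}(z;\tau)\, dz$. The triangular structure then follows from the observation that in the recursive construction of Section \ref{sec:polynomials} each $\Phi_m(z;\tau)$ is a polynomial in $z$ whose coefficients depend polynomially, with rational coefficients, on $\tau_1,\dots,\tau_m$ only, the integration constant $\tau_m$ entering linearly at the $m$-th step. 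Since every product $\Phi_i\Phi_{n-i}$ in the sum has both indices $i,n-i\leq n-1$, the integrand depends only on $\tau_1,\dots,\tau_{n-1}$; integrating a polynomial in $z$ over $[0,1]$ preserves rationality of the coefficients, so $S_n \in \mathbb{Q}[\tau_1,\dots,\tau_{n-1}]$, confirming \eqref{change2}.

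For the symmetry I would substitute the relation $\Phi_m(z;\tilde\tau) = (-1)^m \Phi_m(1-z;\tau)$, which is exactly the content of Proposition \ref{prop:generalsymmetry}, into this explicit formula for $S_n$. Applying it to both factors,
\begin{equation*}
\Phi_i(z;\tilde\tau)\,\Phi_{n-i}(z;\tilde\tau) = (-1)^{i}(-1)^{n-i}\,\Phi_i(1-z;\tau)\,\Phi_{n-i}(1-z;\tau) = (-1)^n \Phi_i(1-z;\tau)\,\Phi_{n-i}(1-z;\tau),
\end{equation*}
so that $S_n(\tilde\tau) = -(-1)^n \int_0^1 \sum_{i=1}^{n-1}\Phi_i(1-z;\tau)\,\Phi_{n-i}(1-z;\tau)\, dz$. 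The change of variables $z\mapsto 1-z$ leaves the integral over $[0,1]$ invariant, turning the right-hand side into $(-1)^n\left(-\int_0^1 \sum_{i=1}^{n-1}\Phi_i(z;\tau)\,\Phi_{n-i}(z;\tau)\, dz\right) = (-1)^n S_n(\tau)$, which is the claimed relation.

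Finally, the last assertion is immediate from the symmetry: at $\tau=0$ we have $\tilde\tau = 0$ as well, so the relation $S_{2k-1}(\tilde\tau) = (-1)^{2k-1}S_{2k-1}(\tau)$ reads $S_{2k-1}(0) = -S_{2k-1}(0)$, forcing $S_{2k-1}(0)=0$. I do not expect a genuine obstacle here, since everything reduces to the explicit integral formula for $S_n$. The only point demanding care is the structural claim that $\Phi_m$ depends polynomially and rationally on exactly $\tau_1,\dots,\tau_m$, which underlies both the triangularity and the membership $S_n\in\mathbb{Q}[\tau_1,\dots,\tau_{n-1}]$; this is a direct consequence of the invertibility of $T+1$ on polynomials with rational inverse, together with the fact that each integration in the recurrence \eqref{recurrencephi} introduces the single new constant $\tau_m$ linearly.
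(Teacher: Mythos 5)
Your proof is correct and follows exactly the route the paper intends: the paper states the proposition immediately after the generating identity \eqref{stau} and leaves the coefficient comparison implicit, while you make it explicit, obtaining $S_n(\tau)=-\int_0^1\sum_{i=1}^{n-1}\Phi_i(z;\tau)\Phi_{n-i}(z;\tau)\,dz$ and deducing the symmetry from Proposition \ref{prop:generalsymmetry} together with the substitution $z\mapsto 1-z$. The supporting structural claim that $\Phi_m$ depends polynomially and rationally on $\tau_1,\dots,\tau_m$ is indeed what the recursive construction of Section \ref{sec:polynomials} provides, so there is no gap.
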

Here is the explicit form of $s_k$ in terms of parameters $\tau$ with $k\leq 7$:
\begin{equation} \label{change3}
\begin{aligned}
s_1&=\tau_1, \\
s_2&=\tau_2-\tfrac{\tau _1^2}{4}-\tfrac{1}{12}, \\
s_3&=\tau _3 - \tfrac{\tau _1 \tau _2}{2}, \\
s_4&=\tau_4+\tfrac{\tau _1^2}{16}-\tfrac{\tau _2^2}{4}-\tfrac{\tau _1 \tau _3}{2}+\tfrac{1}{24}, \\
s_5&=\tau _5+\tfrac{\tau _1^3}{12}+\tfrac{\tau _1 \tau _2}{8}-\tfrac{\tau _2 \tau _3}{2}-\tfrac{\tau _1 \tau _4}{2},\\
s_6&=\tau _6+\tfrac{5 \tau _1^4}{64}-\tfrac{\tau _3^2}{4}+\tfrac{\tau _2^2}{16}+\tfrac{ \tau _1^2\tau _2}{4} -\tfrac{7 \tau _1^2}{24}-\tfrac{\tau _2 \tau _4}{2}+\tfrac{\tau _1 \tau _3}{8}-\tfrac{\tau _1 \tau _5}{2}-\tfrac{41}{192},\\
s_7&= \tau_7 + \tfrac{\tau _1^5}{16}+\tfrac{5\tau _1^3\tau _2 }{16} -\tfrac{7 \tau _1^3}{8}+\tfrac{\tau _1^2\tau _3 }{4} +\tfrac{\tau _1\tau _2^2 }{4} +\tfrac{ \tau _1 \tau _4}{8}-\tfrac{7  \tau _1\tau _2}{12}+\tfrac{\tau _2 \tau _3}{8}-\tfrac{\tau _3 \tau _4}{2}-\tfrac{\tau _2 \tau _5}{2}-\tfrac{\tau _1\tau _6 }{2}.
\end{aligned}
\end{equation}
We will present the formulas expressing $\tau_i$ in terms of $s_k$ at the end of Section 4.

\subsection{Polynomials $\Phi_n(z)$ with all $\tau_i=0$}


There are two natural special cases of polynomials $\Phi_n$, corresponding to zero values of parameters $\tau_i$ and integration constants $s_k$ respectively.
We start first with the case
\begin{equation}
\Phi_n(z) \defeq \Phi_n(z; 0),
\end{equation}
where all $\tau_i$ are set to zero, leaving the case with $s_k=0$ for Section \ref{sec:alls_izero}.

The first few of the $\Phi_n(z)$ have the form
\begin{align*}
\Phi_0(z)&=0,\\
\Phi_1(z) &= z - \tfrac{1}{2}, \\
\Phi_2(z) &= \tfrac{1}{2} z^2 - \tfrac{1}{2}z, \\
\Phi_3(z) &= \tfrac{1}{2} z^3 - \tfrac{3}{4} z^2 + \tfrac{1}{8},\\
\Phi_4(z) &= \tfrac{5}{8} z^4 - \tfrac{5}{4} z^3 - \tfrac{1}{8}z^2+\tfrac{3}{4}z,\\
\Phi_5(z) &= \tfrac{7}{8}z^5 - \tfrac{35}{16} z^4 -\tfrac{7}{12} z^3 + \tfrac{49}{16} z^2 - \tfrac{7}{12},\\
\Phi_6(z) &= \tfrac{21}{16} z^6 - \tfrac{63}{16} z^5 - \tfrac{185}{96}z^4 + \tfrac{125}{12} z^3 + \tfrac{7}{12}z^2 - \tfrac{619}{96} z,\\
\Phi_7(z) &= \tfrac{33}{16} z^7 - \tfrac{231}{32} z^6 - \tfrac{11}{2} z^5 + \tfrac{2035}{64} z^4 + \tfrac{451}{96}z^3 - \tfrac{2717}{64} z^2 + \tfrac{1595}{192},\\
\Phi_8(z) &= \tfrac{429}{128} z^8 - \tfrac{429}{32} z^7 - \tfrac{2779}{192} z^6 + \tfrac{2891}{32} z^5 + \tfrac{9365}{384} z^4 - \tfrac{20639}{96} z^3 - \tfrac{1595}{192} z^2 + \tfrac{4259}{32} z.
\end{align*}

We present some plots of the polynomials $\Phi_n(z)$ for real $z$ in Figure \ref{fig:phiplots2}.

\begin{figure}[htb]
\begin{subfigure}{0.43\textwidth}
\includegraphics[width=\textwidth]{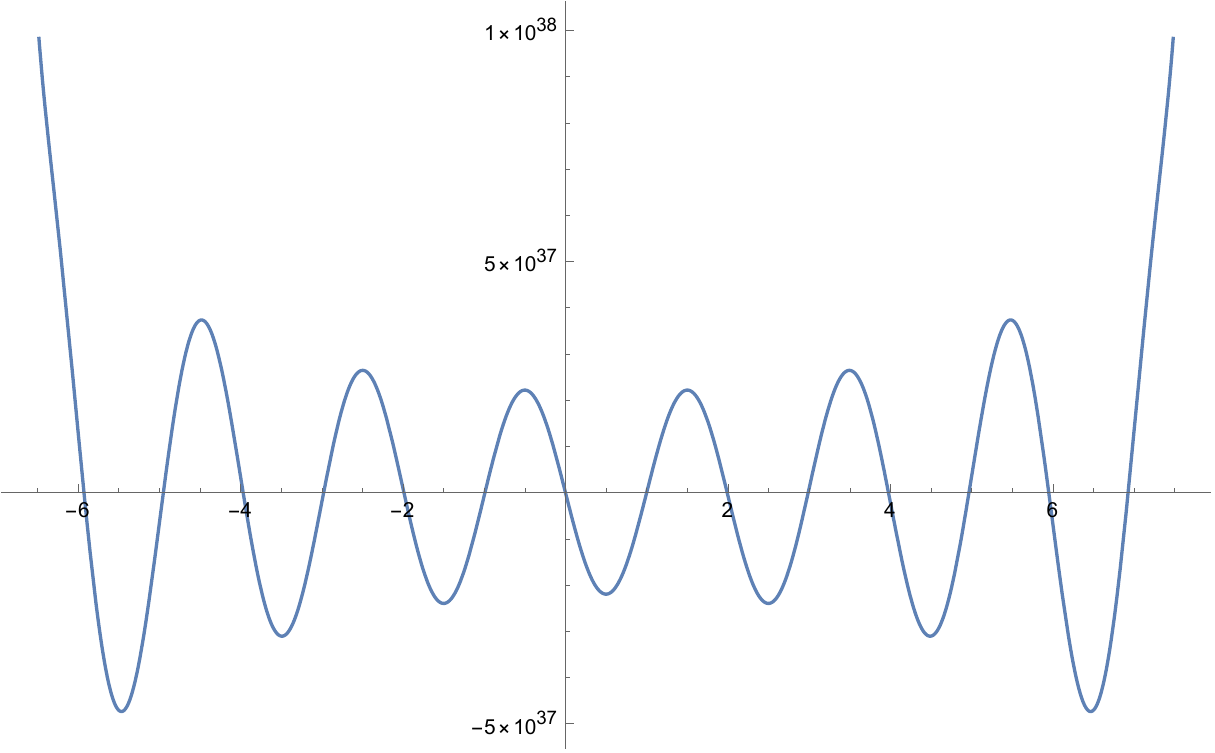}
\caption{$\Phi_{38}(z)$}
\label{fig:subim11}
\end{subfigure}
\quad
\begin{subfigure}{0.43\textwidth}
\includegraphics[width=\textwidth]{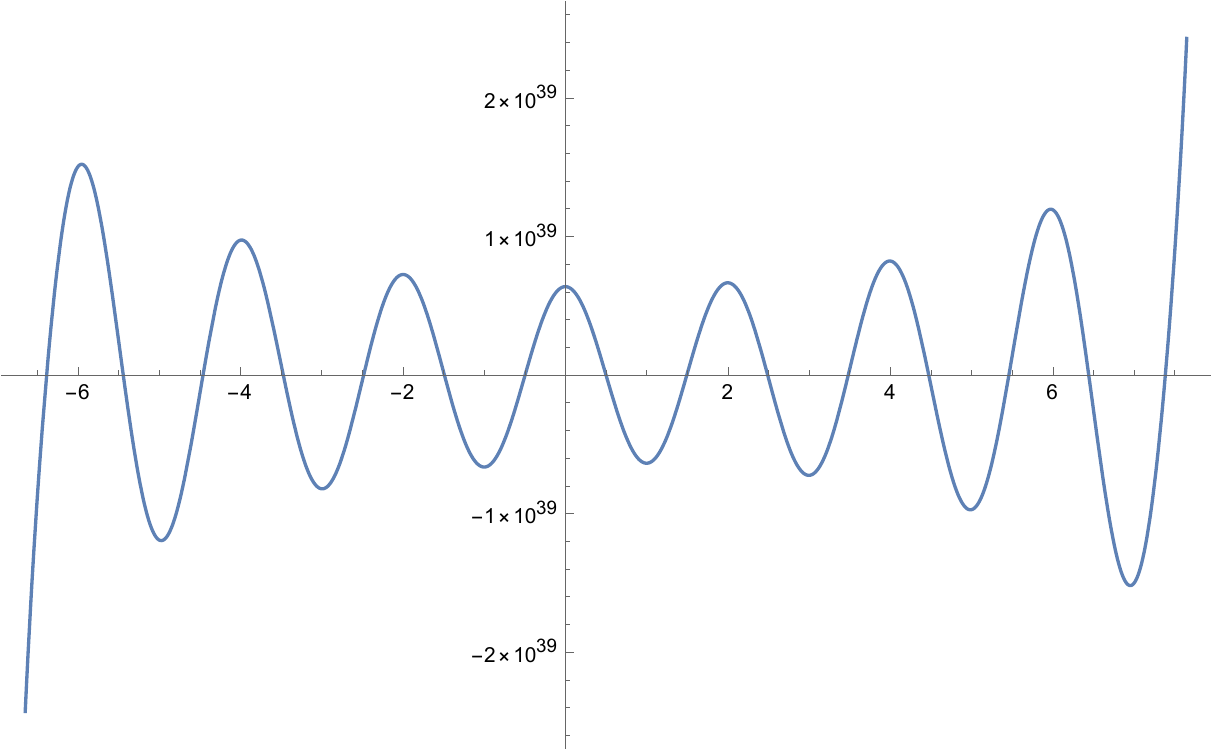}
\caption{$\Phi_{39}(z)$}
\label{fig:subim12}
\end{subfigure}
\\
\begin{subfigure}{0.43\textwidth}
\includegraphics[width=\textwidth]{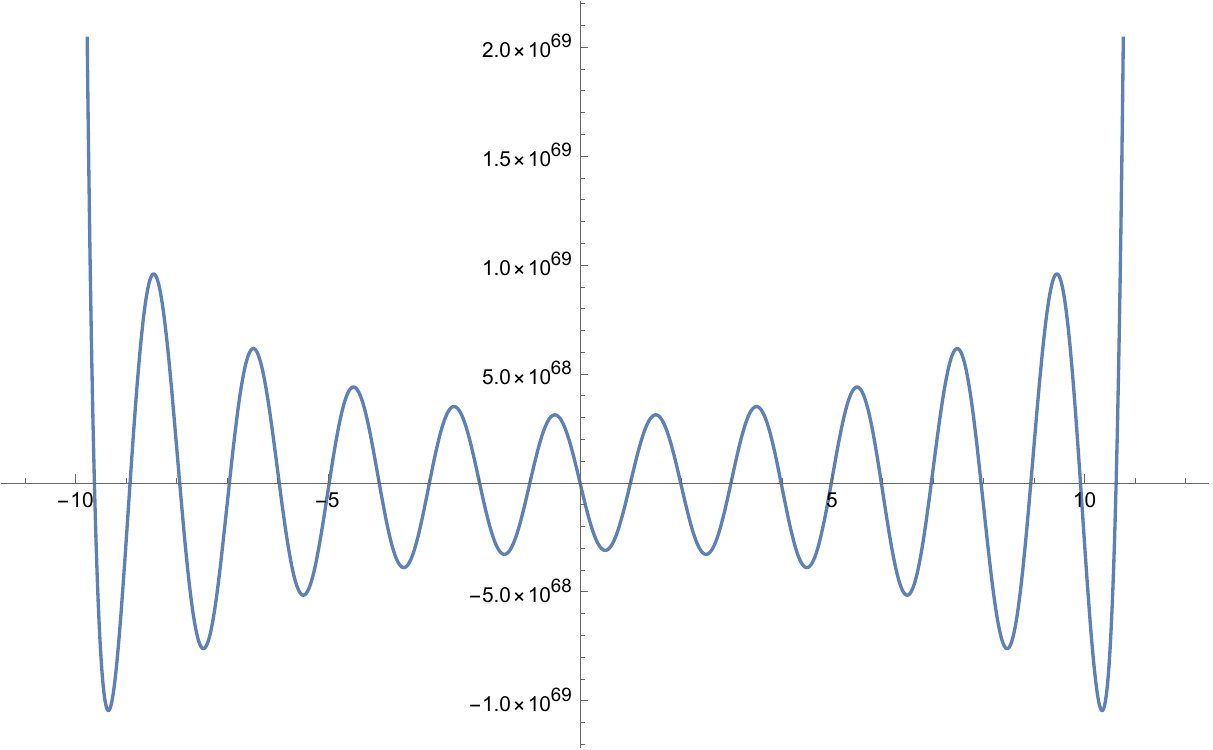}
\caption{$\Phi_{58}(z)$}
\label{fig:subim13}
\end{subfigure}
\quad
\begin{subfigure}{0.43\textwidth}
\includegraphics[width=\textwidth]{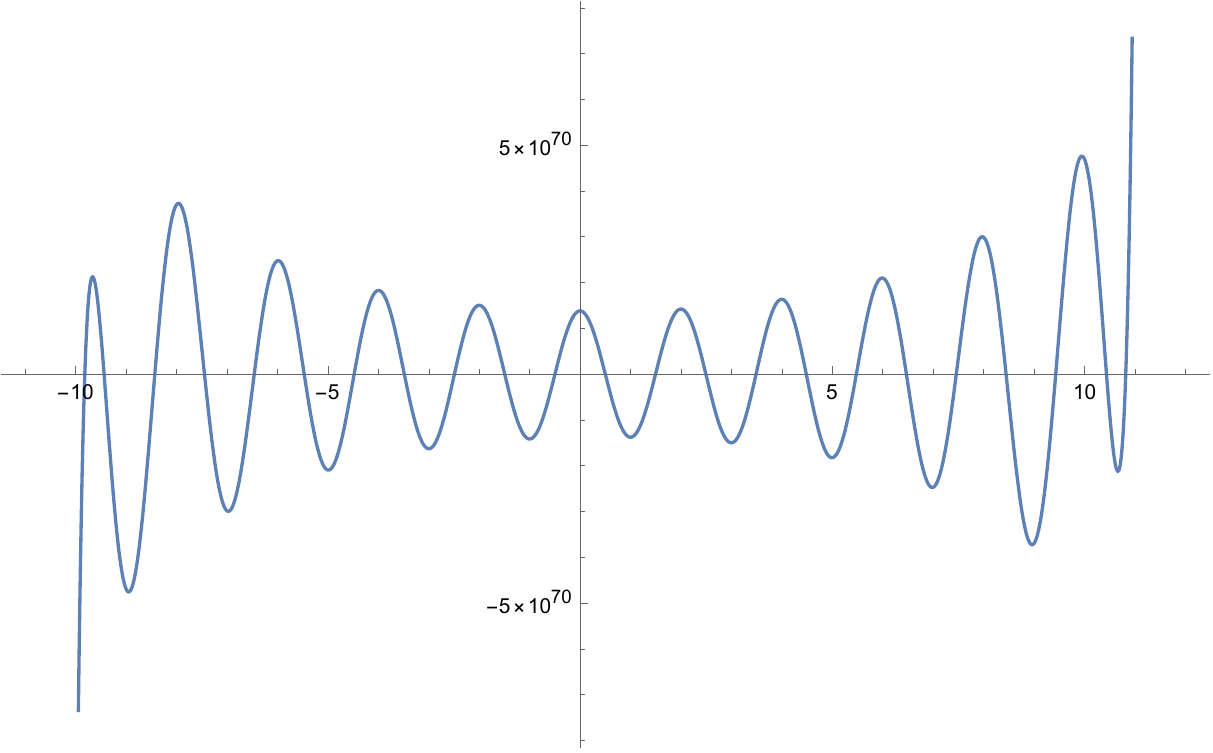}
\caption{$\Phi_{59}(z)$}
\label{fig:subim14}
\end{subfigure}
\caption{Plots of $\Phi_n(z)$ for real $z$}
\label{fig:phiplots2}
\end{figure}

\bigskip
The following proposition provides some information about the coefficients of these polynomials, the proof of which easily follows from the definition of $\tau_k=\Phi_{k}(1) + \Phi_{k}(0)$ and the symmetry property (\ref{sym}).

\begin{proposition}
For $k\geq1$, we have 
\begin{enumerate}[(a)]
\item  $\Phi_{2k}(0) = a_0^{(2k)}= \Phi_{2k}(1) = \sum_{i=0}^{2k} a_i^{(2k)}= 0$,
\item  $\Phi_{2k+1}(\tfrac{1}{2}) = \sum_{i=0}^{2k+1} 2^{-i} a_i^{(2k)} = 0$.
\item $\Phi_{2k+1}'(0)=a_1^{(2k+1)} = 0$.
\end{enumerate}
\end{proposition}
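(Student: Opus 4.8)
The plan is to read every assertion as a statement about the value of $\Phi_n$ or $\Phi_n'$ at one of the points $z\in\{0,\tfrac12,1\}$, and then to extract these values from the symmetry relation \eqref{sym}, $\Phi_n(1-z)=(-1)^n\Phi_n(z)$ (which holds here because all odd $\tau_i$ vanish), together with the normalisation $\tau_k=\Phi_k(1)+\Phi_k(0)=0$. Concretely, the constant term is $a_0^{(n)}=\Phi_n(0)$, the coefficient sum is $\sum_{i=0}^n a_i^{(n)}=\Phi_n(1)$, the value $\sum_{i=0}^n 2^{-i}a_i^{(n)}=\Phi_n(\tfrac12)$, and $a_1^{(n)}=\Phi_n'(0)$, so each of (a), (b), (c) becomes a pointwise statement.

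For (a) I would take $n=2k$ even and evaluate \eqref{sym} at $z=0$, giving $\Phi_{2k}(1)=\Phi_{2k}(0)$; combined with $\tau_{2k}=\Phi_{2k}(1)+\Phi_{2k}(0)=0$ this forces $\Phi_{2k}(0)=\Phi_{2k}(1)=0$, which is exactly (a). For (b) I would take $n=2k+1$ odd and evaluate \eqref{sym} at the fixed point $z=\tfrac12$ of the involution $z\mapsto1-z$: since $(-1)^{2k+1}=-1$ this reads $\Phi_{2k+1}(\tfrac12)=-\Phi_{2k+1}(\tfrac12)$, hence $\Phi_{2k+1}(\tfrac12)=\sum_{i=0}^{2k+1}2^{-i}a_i^{(2k+1)}=0$. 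Both of these are immediate.

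Part (c) is where I expect the genuine obstacle, because symmetry alone is not enough: differentiating \eqref{sym} with $n=2k+1$ yields only $\Phi_{2k+1}'(1-z)=\Phi_{2k+1}'(z)$, so at $z=0$ one learns $\Phi_{2k+1}'(0)=\Phi_{2k+1}'(1)$ but not that either vanishes. To pin down the value I would return to the defining recurrence \eqref{recurrencephi}, namely $\Psi_{2k+1}'=(T-1)\sum_{i=1}^{2k}\Phi_i\Phi_{2k+1-i}$ with $\Psi_{2k+1}=(T+1)\Phi_{2k+1}$, and evaluate it at $z=0$. The crux is that, since $i+(2k+1-i)=2k+1$ is odd, in every summand exactly one of the indices $i$, $2k+1-i$ is even and $\geq2$; by part (a) the corresponding factor vanishes at both $z=0$ and $z=1$, so $(T-1)\sum_i\Phi_i\Phi_{2k+1-i}$ vanishes at $z=0$, i.e. $\Psi_{2k+1}'(0)=0$.

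Combining this with the left-hand side then finishes (c): $\Psi_{2k+1}'(0)=\Phi_{2k+1}'(1)+\Phi_{2k+1}'(0)=2\Phi_{2k+1}'(0)$ by the differentiated symmetry noted above, so $\Phi_{2k+1}'(0)=a_1^{(2k+1)}=0$. The only points needing care are the index bookkeeping — checking that each product $\Phi_i\Phi_{2k+1-i}$ really carries an even-index factor of index at least $2$ so that part (a) applies — and the base case $k=1$ (where the sum is $\Phi_1\Phi_2+\Phi_2\Phi_1$ and $\Phi_2$ vanishes at $0$ and $1$); both are routine once the parity argument is in place.
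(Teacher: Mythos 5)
Your proposal is correct. Parts (a) and (b) are exactly the paper's argument: the paper proves this proposition in one line by citing the definition $\tau_k=\Phi_k(1)+\Phi_k(0)$ (here $=0$) together with the symmetry \eqref{sym}, which is precisely your evaluation of \eqref{sym} at $z=0$ and at the fixed point $z=\tfrac12$. For part (c) you correctly observe that these two ingredients alone only give $\Phi_{2k+1}'(0)=\Phi_{2k+1}'(1)$, and you supply the missing step — evaluating the recurrence \eqref{recurrencephi} at $z=0$ and using the parity of the indices together with part (a) to kill every product $\Phi_i\Phi_{2k+1-i}$ at both $z=0$ and $z=1$ — which the paper leaves implicit; your index bookkeeping (the even index always lies between $2$ and $2k$, so part (a) applies) is sound.
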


We believe  that the signs of $\Phi_{2k+1}(0)$ are alternating, but we cannot yet prove this (see, however, Proposition \ref{prop:alternatingsigns} below).



We would like to note that, despite the highly nonlinear nature of the recurrence \eqref{recurrencephi} defining $\Phi_n(z; \tau)$, in addition to the symmetry (\ref{sym}) there are more intriguing similarities between the properties of $\Phi_n(z;\tau)$ with $\tau_{2k+1}=0$ and of the Bernoulli polynomials
\begin{equation}
B_n(x) \defeq \frac{\partial}{T-1} x^n= \sum_{k=0}^{n}{n \choose k} B_k x^{n-k}, \quad n\geq 0:
\end{equation}
$$
B_0=1, \,\,\,\, B_1=x-\frac{1}{2},  \,\,\,\, B_2=x^2-x+\frac{1}{6},  \,\,\,\, B_3=x^3-\frac{3}{2}x^2+\frac{1}{2}x,  \,\,\,\, B_4=x^4-2x^3+x^2-\frac{1}{30}, 
$$
$$
B_5=x^5-\frac{5}{2}x^4+\frac{5}{3}x^3-\frac{1}{6}x, \,\,\,\, B_6=x^6-3x^5+\frac{5}{2}x^4-\frac{1}{2}x^2+\frac{1}{42}, \dots
$$
and the closely related Euler polynomials
\begin{equation}
E_n(x) \defeq \frac{2}{T + 1} x^n=\frac{2}{n+1}[B_{n+1}(x)-2^{n+1}B_{n+1}(\tfrac{x}{2})]:
\end{equation}
$$
E_0=1, \,\,\,\, E_1=x-\frac{1}{2},\,\,\,\, E_2=x^2-x, \,\,\,\, E_3=x^3-\frac{3}{2}x^2+\frac{1}{4}, \,\,\,\, E_4=x^4-2x^3+x, $$
$$
E_5=x^5-\frac{5}{2}x^4+\frac{5}{2}x^2-\frac{1}{2}, \,\,\,\, E_6=x^6-3x^5+5x^3-3x, \dots.
$$


For example, the polynomials $\Phi_n(z)$ seem to approximate trigonometric functions on some interval centred around $z=1/2$, but with varying amplitude (see Figure \ref{fig:phiplots2}) as opposed to the Bernoulli and Euler polynomials \cite{dilcher1, VW2} (see Figures \ref{fig:bernoulliplots} and \ref{fig:eulerplots}).
\begin{figure}[htb]
\centering
\begin{subfigure}{0.45\textwidth}
\includegraphics[width=\textwidth]{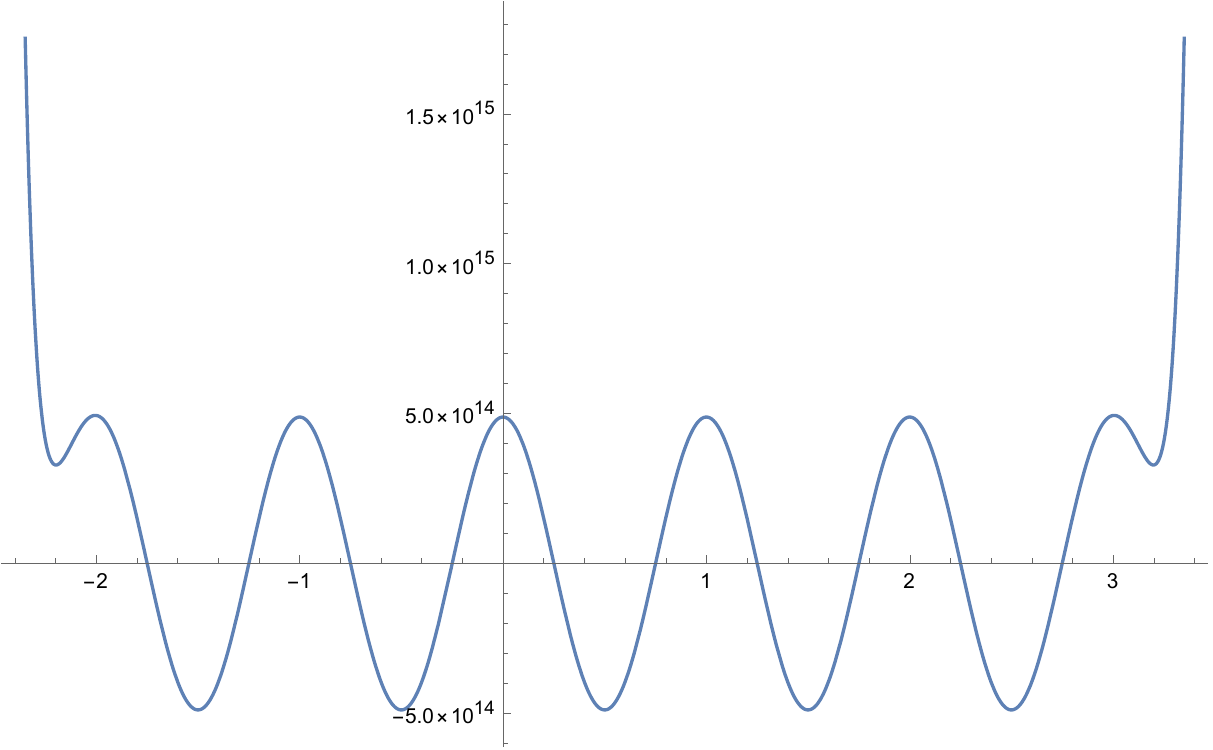}
\caption{$B_{38}(x)$}
\end{subfigure}
\quad
\begin{subfigure}{0.45\textwidth}
\includegraphics[width=\textwidth]{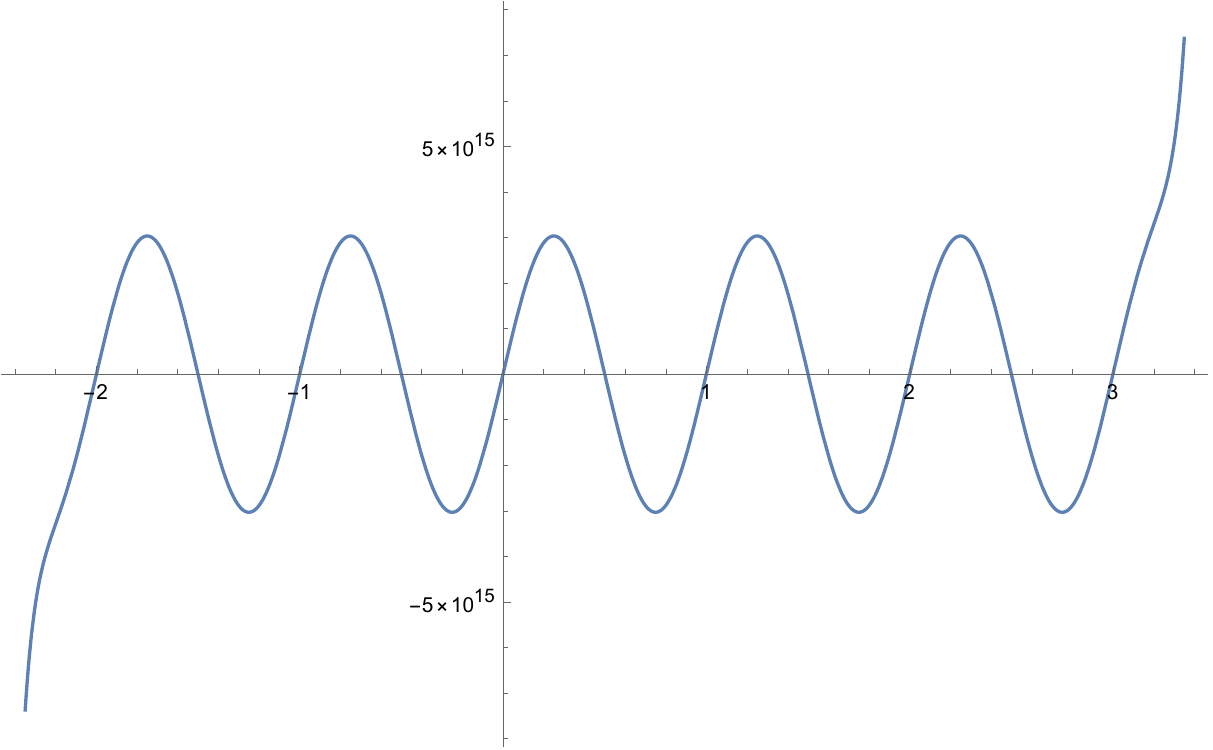}
\caption{$B_{39}(x)$}
\end{subfigure}
\caption{Plots of Bernoulli polynomials $B_n(x)$ for real $x$}
\label{fig:bernoulliplots}
\end{figure}

\begin{figure}[htb]
\centering
\begin{subfigure}{0.45\textwidth}
\includegraphics[width=\textwidth]{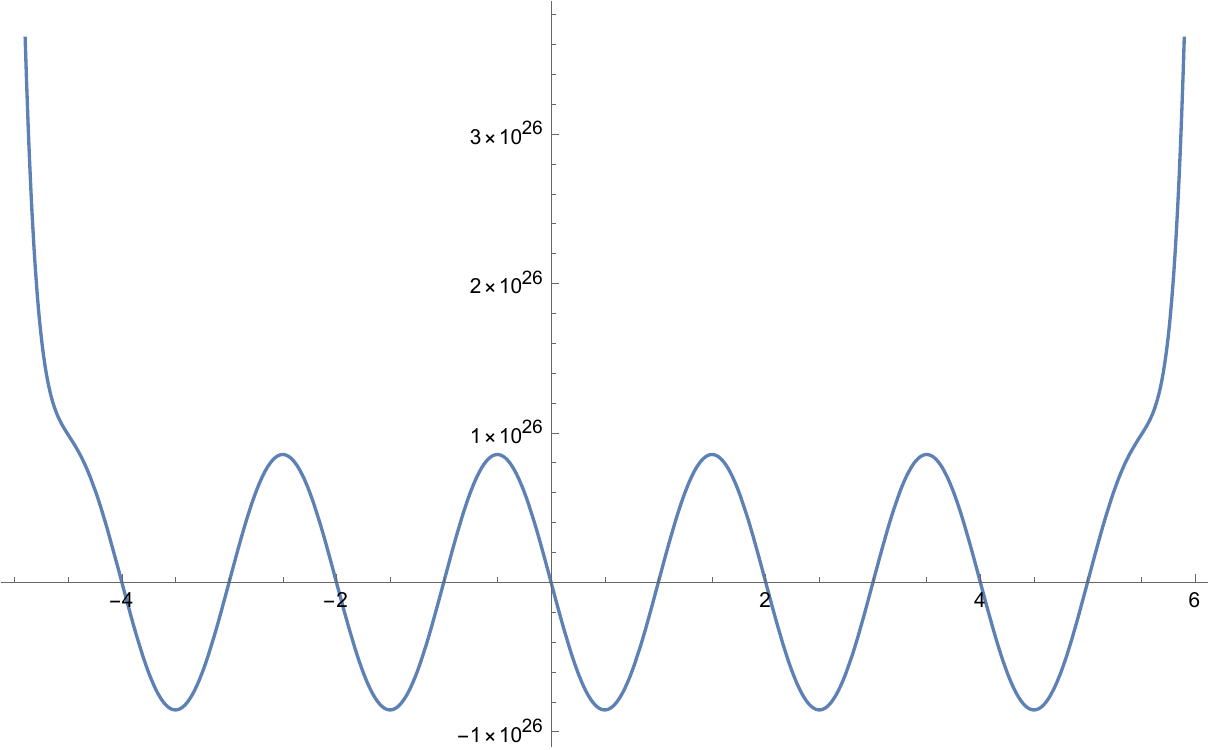}
\caption{$E_{38}(x)$}
\end{subfigure}
\quad
\begin{subfigure}{0.45\textwidth}
\includegraphics[width=\textwidth]{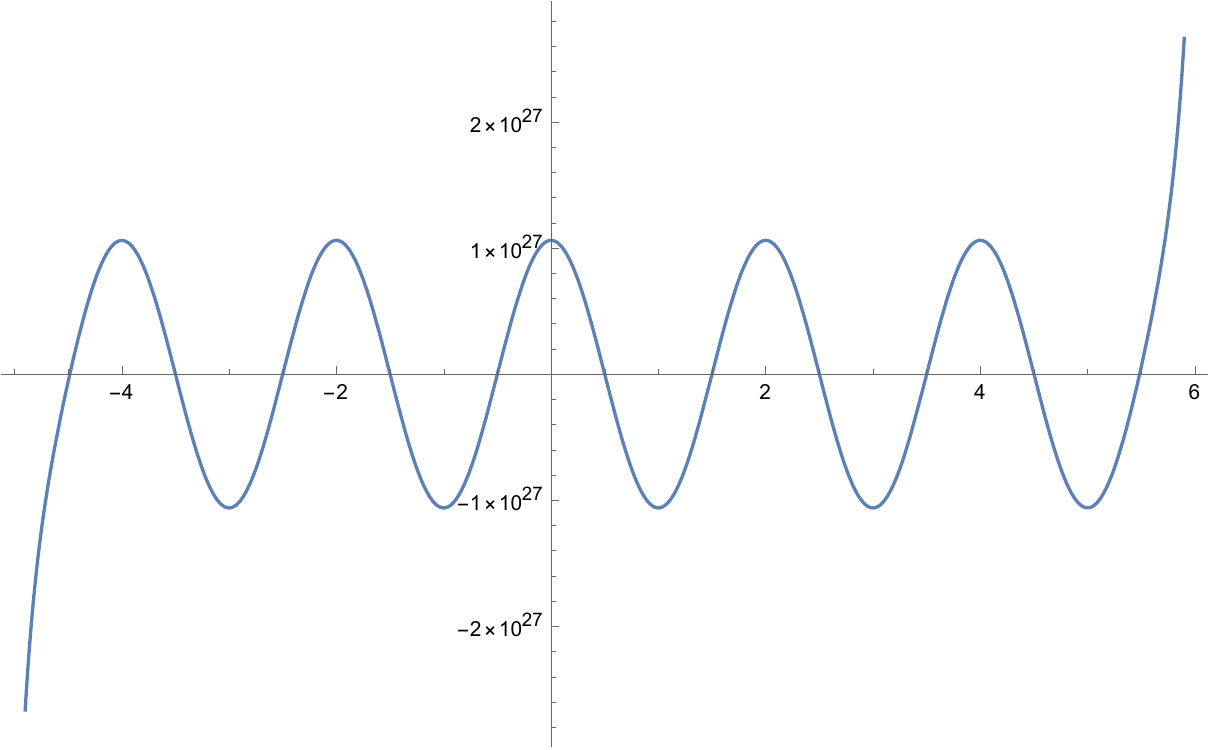}
\caption{$E_{39}(x)$}
\end{subfigure}
\caption{Plots of Euler polynomials $E_n(x)$ for real $x$}
\label{fig:eulerplots}
\end{figure}

 Further, the distributions of the zeroes of $\Phi_n(z)$ in the complex plane (see Figure \ref{fig:zeroplots}) appear similar to those of $B_n(x)$ and $E_n(x)$ (see Figure \ref{fig:bernoullizeroes}), with the largest real zero of $\Phi_n(z)$ appearing to grow approximately linearly with $n.$

\begin{figure}[htb]
\centering
\begin{subfigure}{0.43\textwidth}
\includegraphics[width=\textwidth]{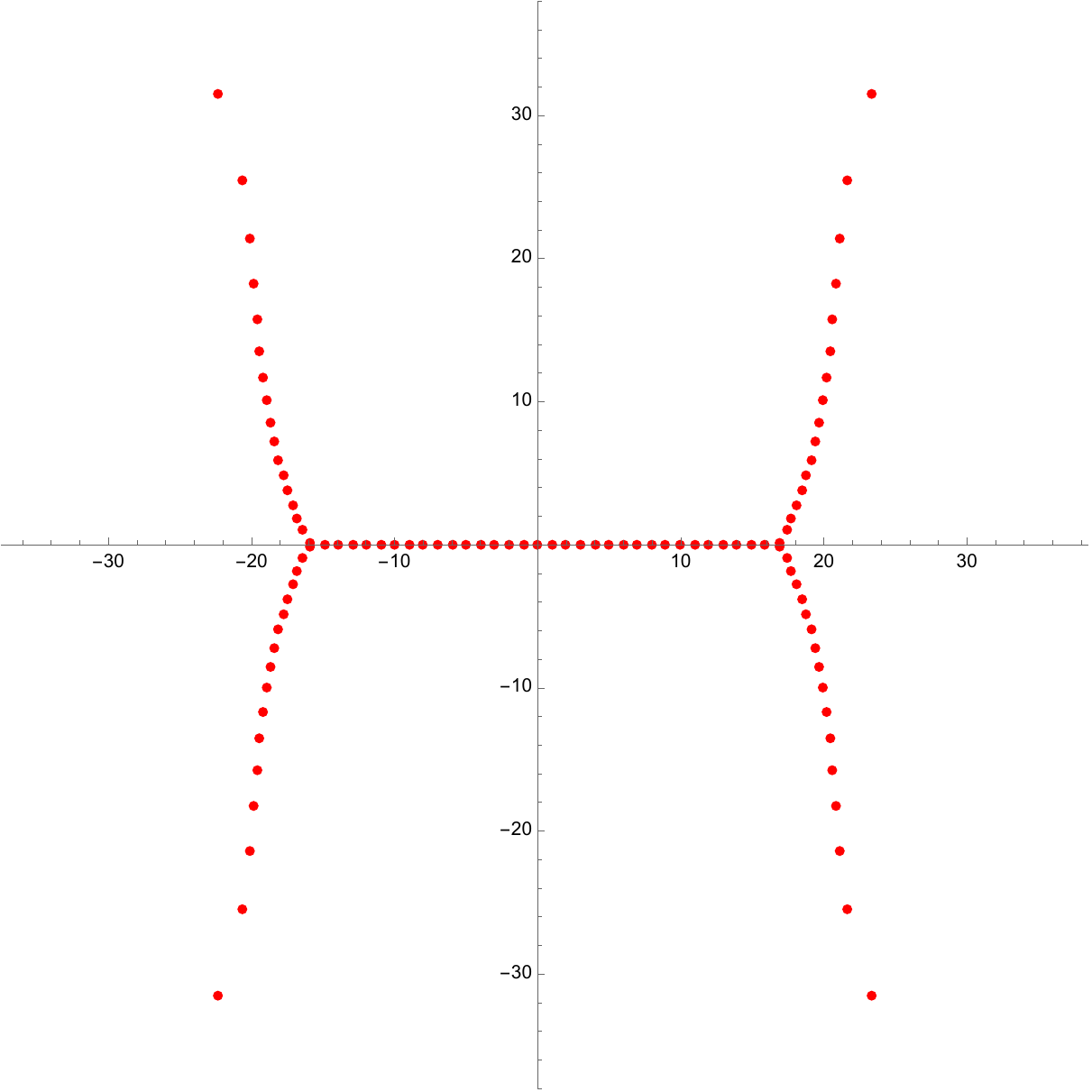}
\caption{$\Phi_{100}(z)$}
\end{subfigure}
\quad
\begin{subfigure}{0.43\textwidth}
\includegraphics[width=\textwidth]{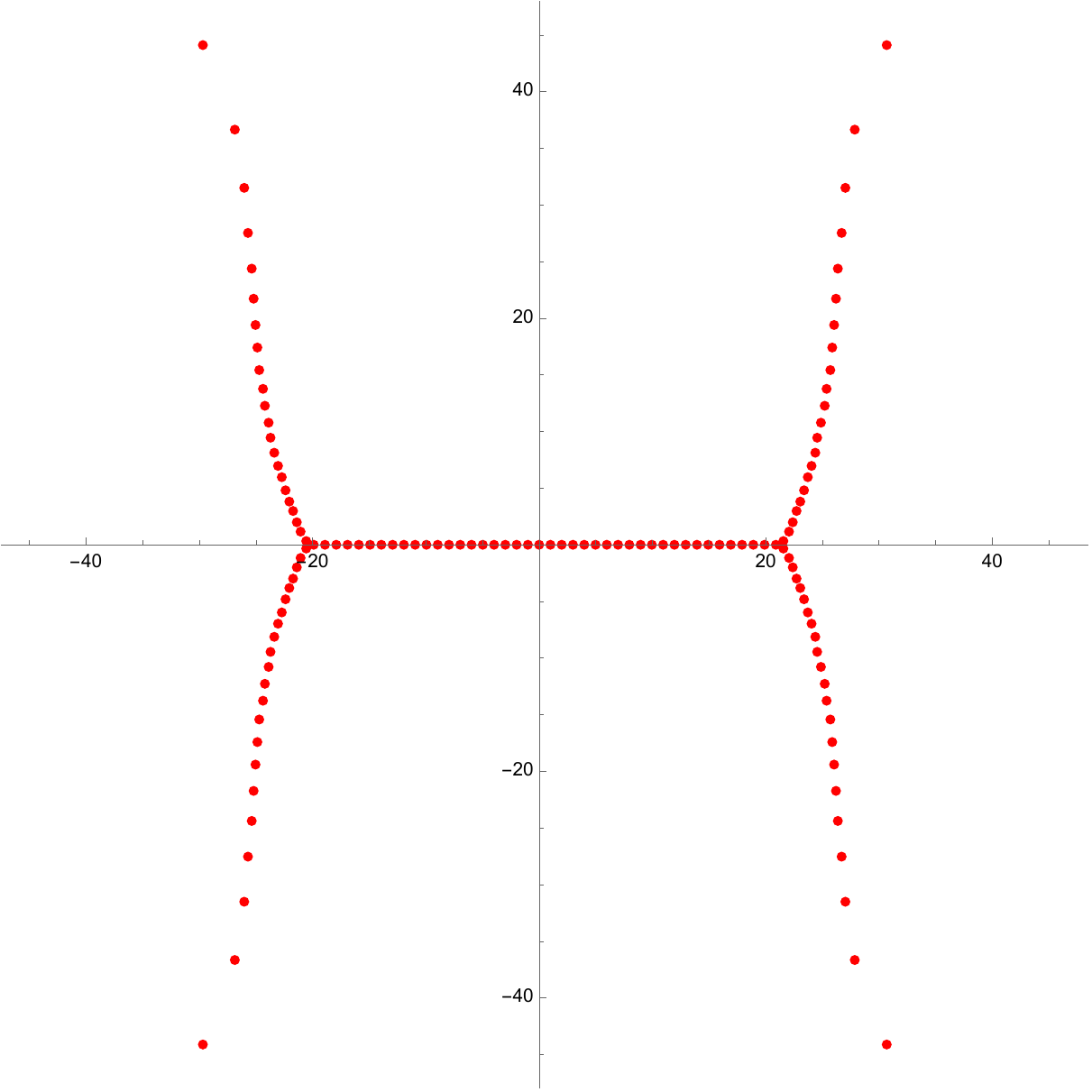}
\caption{$\Phi_{130}(z)$}
\end{subfigure}
\caption{Zeroes of $\Phi_n(z)$}
\label{fig:zeroplots}
\end{figure}

The difference is that the zeroes of largest magnitude of $\Phi_n(z)$ do not appear to approach the imaginary axis as $n$ increases, in contrast to the case of $B_n(x)$ and $E_n(x)$ in which the limiting shape of the zeroes as $n \rightarrow \infty$, under an appropriate scaling, is a closed curve \cite{boyergoh, dilcher2} related to that obtained by Szeg\"o for the zeroes of Taylor polynomials of the exponential function \cite{szego}.

\begin{figure}[h]
\begin{subfigure}{0.43\textwidth}
\includegraphics[width=\textwidth]{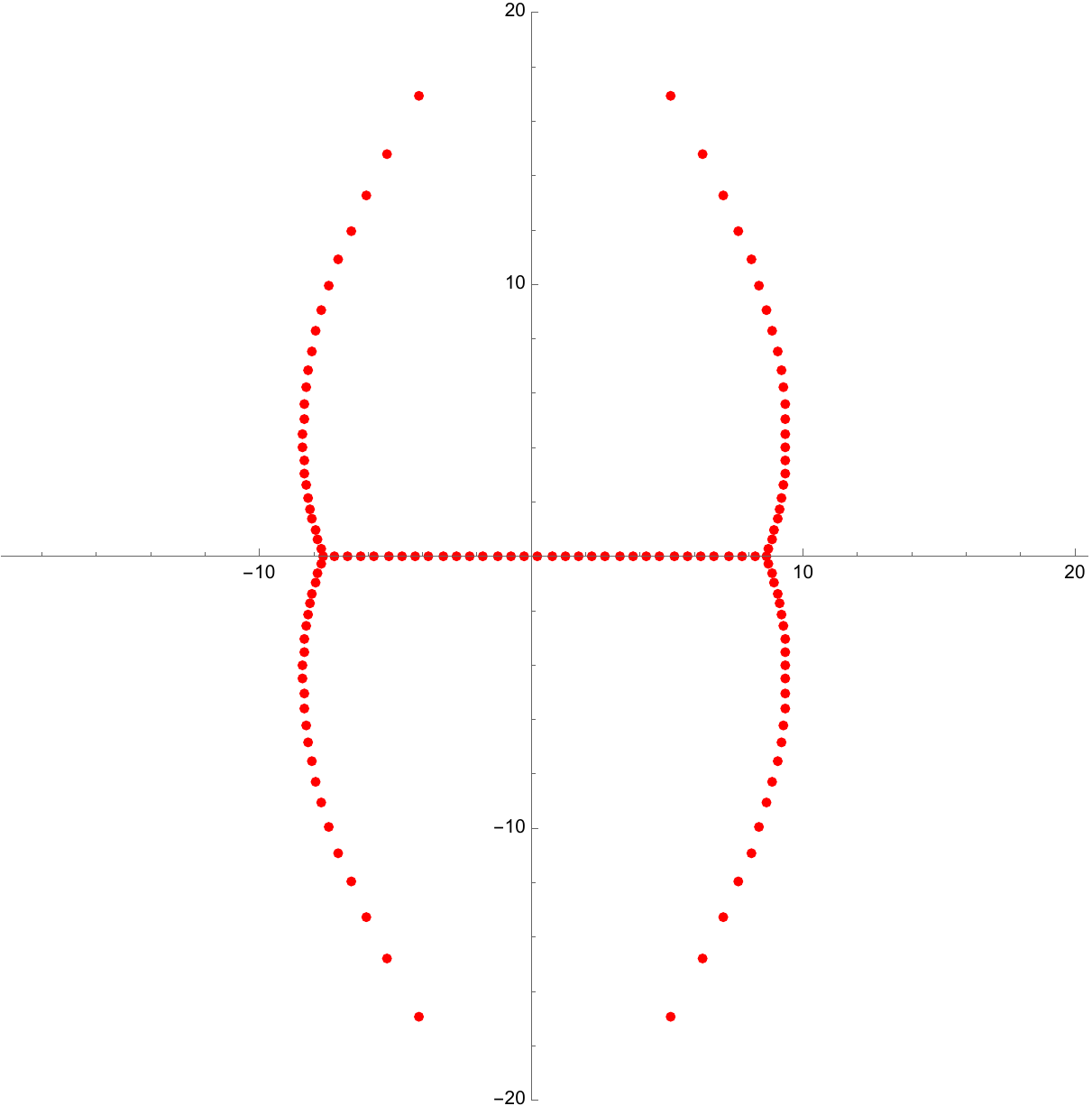}
\caption{$B_{130}(x)$}
\end{subfigure}
\quad
\begin{subfigure}{0.43\textwidth}
\includegraphics[width=\textwidth]{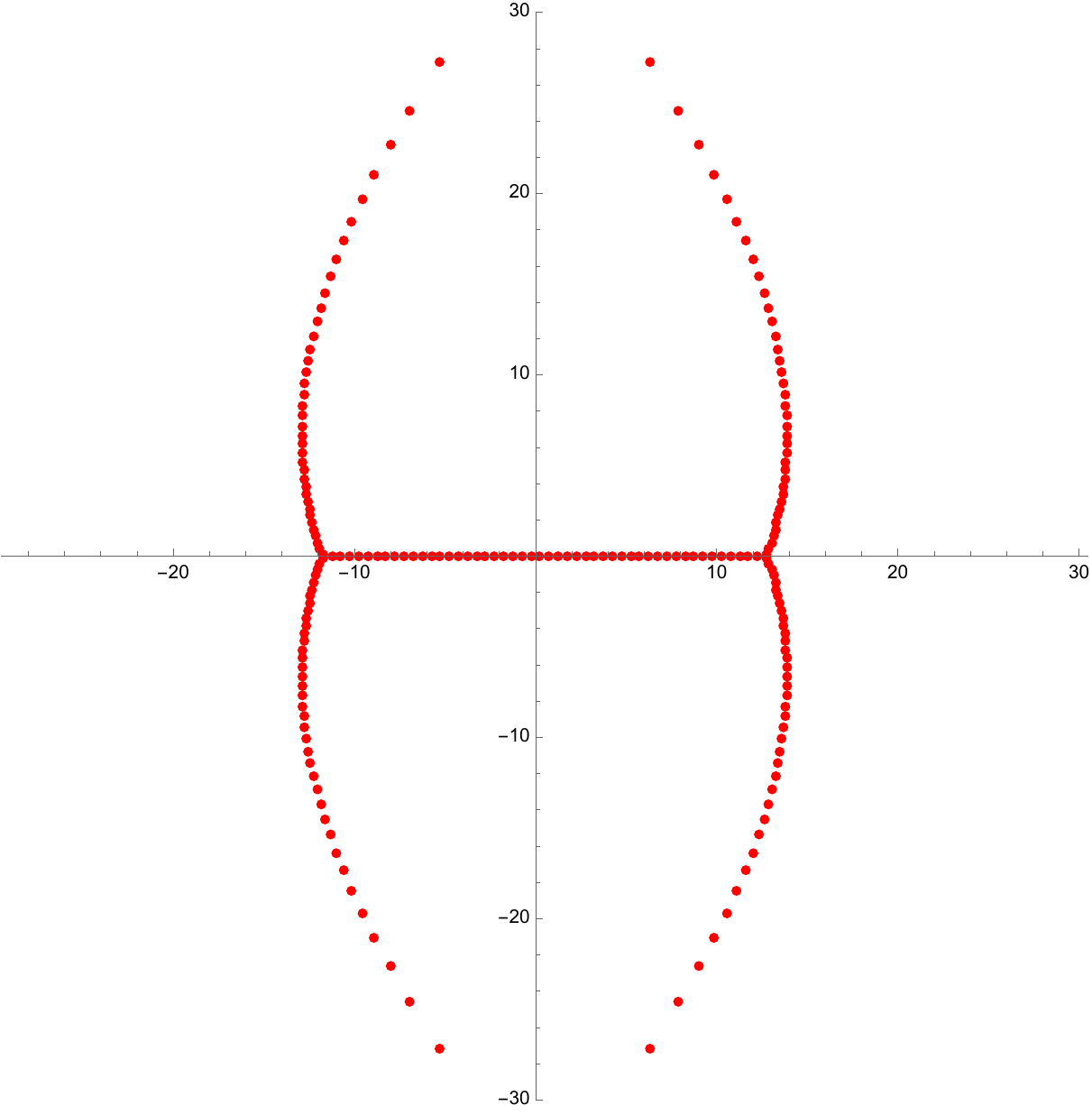}
\caption{$B_{200}(x)$}
\end{subfigure}
\caption{Zeroes of $B_n(x)$}
\label{fig:bernoullizeroes}
\end{figure}


The number of real zeroes of the polynomial $\Phi_n(z)$ also appears to grow approximately linearly with $n$, 
which is similar to $B_n(x)$ (see \cite{inkeri, delange1, delange2, leeming, veselovward, edwardsleeming}).


It may be instructive, due to the operator $2 \partial^{-1} \tanh(\partial / 2)$ appearing in the defining recurrence for $\Phi_n(z;\tau)$, to also look at the version of the Euler polynomials defined by
\begin{equation} \label{eulerishdef}
\widetilde{E}_n (z): = 2 \partial^{-1} \tanh(\partial / 2) z^n = \partial^{-1} (T-1) E_n(z), \qquad \widetilde{E}_n(0) = 0,
\end{equation}
The first few of these are 
$$
\widetilde{E}_1(z) = z, \,\,\widetilde{E}_2(z)=z^2, \,\, \widetilde{E}_3(z)= z^3 - \frac{1}{2} z, \,\, \widetilde{E}_4(z)= z^4 - z^2,\,\, \widetilde{E}_5(z) = z^5 - \frac{5}{3} z^3 + z, 
$$
$$
\widetilde{E}_6(z) = z^6 - \frac{5}{2} z^4 + 3 z^2, \,\, \widetilde{E}_7(z)= z^7 - \frac{7}{2} z^5 + 7 z^3 - \frac{17}{4} z,\,\,
\widetilde{E}_8(z)= z^8 - \frac{14}{3} z^6 + 14 z^4 - 17 z^2. 
$$
These polynomials satisfy the symmetry relation
\begin{equation}
\widetilde{E}_n(-z) = (-1)^n \widetilde{E}_n(z),
\end{equation}
which follows directly from the definition \eqref{eulerishdef}.

\subsection{Outer expansion}

The plots in Figure \ref{fig:phiplots2} suggest that the polynomials exhibit oscillatory behaviour in the `bulk' between the largest and smallest real zeroes, before transitioning to power law-type behaviour when the leading power dominates for large real $z$. 

To investigate the behaviour of the polynomials in this outer region we return to the delay Painlev\'e-I  equation in the form 
\begin{equation}
(T+1) f ' = (T-1) f^2+2\beta, \qquad T\varphi(z)=\varphi(z+1),
\end{equation}
and introduce the scaled independent variable $\xi$ according to
\begin{equation}
\xi = \beta(z - \tfrac{1}{2}).
\end{equation}
The shift in $z$ acts on $u(\xi) = f(z)$ by $T_\beta \, u (\xi) = u(\xi + \beta)$, and the equation becomes 
\begin{equation} \label{scaled}
\beta (T_\beta+1) u'  =  (T_\beta-1) u^2 +2\beta.
\end{equation}
This equation is satisfied if
\begin{equation} \label{new1}
u^2=\frac{e^{\beta D}+1} {e^{\beta D}-1}\beta D u-2\xi-\sum_{k=1}^\infty \tilde s_k \beta^k,
\end{equation}
where $D=\frac{d}{d\xi}$ and $\tilde s_k, \,\, k \in \mathbb N$ are the integration constants. Indeed, applying to both sides of this equation the operator $T_\beta-1=e^{\beta D}-1$, we get equation \eqref{scaled}.


\begin{lemma}
The integration constants $\tilde s_k=s_k$ are the same as in the conserved quantity (\ref{cons3}),(\ref{sk}). 
\end{lemma}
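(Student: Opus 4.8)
The plan is to reduce the identification $\tilde s_k=s_k$ to a purely operator-theoretic computation, by rewriting the conserved quantity $S(\beta)$ of \eqref{cons3}--\eqref{sk} directly in the scaled variable $\xi=\beta(z-\tfrac12)$ and recognising that it coincides with the constant term appearing in \eqref{new1}. First I would record that the definite integral over a unit period is itself a shift-operator expression: writing $\partial=\tfrac{d}{dz}$ one has $\int_z^{z+1}g(t)\,dt=\tfrac{T-1}{\partial}g$ by the fundamental theorem of calculus, and evaluating this on the explicit linear term gives $\int_z^{z+1}2\beta(t-\tfrac12)\,dt=2\beta z$. Hence the conserved quantity can be written, with no integral remaining on the linear part, as
$$S(\beta)=(T+1)f(z)-\frac{T-1}{\partial}f^2(z)-2\beta z.$$

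Next I would pass to the scaled variable. Under $\xi=\beta(z-\tfrac12)$, $u(\xi)=f(z)$ the shift becomes $T\mapsto T_\beta=e^{\beta D}$ with $D=\tfrac{d}{d\xi}$, the derivative transforms as $\partial\mapsto\beta D$, the averaging operator as $\tfrac{T-1}{\partial}\mapsto\tfrac{e^{\beta D}-1}{\beta D}$, and the linear term as $2\beta z=2\xi+\beta$. Since $S(\beta)$ does not depend on $z$, it is a genuine constant in $\xi$, and the previous display becomes
$$S(\beta)=(T_\beta+1)u-\frac{e^{\beta D}-1}{\beta D}u^2-2\xi-\beta.$$

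Then I would solve this for $u^2$ by applying the inverse averaging operator $\tfrac{\beta D}{e^{\beta D}-1}$, which is the generating operator of the Bernoulli numbers, to both sides. On the term $(T_\beta+1)u=(e^{\beta D}+1)u$ it produces exactly the first term $\tfrac{e^{\beta D}+1}{e^{\beta D}-1}\beta D\,u$ of \eqref{new1}; on the polynomial term $2\xi$ it produces $2\xi-\beta$, since only the Bernoulli coefficients $B_0=1$ and $B_1=-\tfrac12$ contribute ($D^2\xi=0$); and on the $\xi$-independent quantities $\beta$ and $S(\beta)$ it acts as the identity. Collecting everything, the spurious multiples of $\beta$ cancel and one is left precisely with
$$u^2=\frac{e^{\beta D}+1}{e^{\beta D}-1}\beta D\,u-2\xi-S(\beta).$$
Comparing with \eqref{new1} and using that the $\xi$-independent part uniquely determines the integration constants in the ring of formal power series in $\beta$, we conclude $\sum_k\tilde s_k\beta^k=S(\beta)=\sum_k s_k\beta^k$, i.e. $\tilde s_k=s_k$, as claimed.

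I expect the main technical point to be the justification of the operator calculus as an identity of formal series in $\beta$. Concretely, one must check that $\tfrac{e^{\beta D}-1}{\beta D}$ and $\tfrac{\beta D}{e^{\beta D}-1}$ act term-by-term and are genuine mutual inverses on the relevant space of $\beta$-series in which $u$ lives, that the integration constants $\tilde s_k$ are indeed $\xi$-independent (they lie in the kernel of $e^{\beta D}-1$ and thus are unaffected by the Bernoulli operator), and that the $\pm\beta$ bookkeeping arising both from the transformation of $2\beta z$ and from the Bernoulli expansion of $\tfrac{\beta D}{e^{\beta D}-1}(2\xi)$ cancels exactly. Once these formal-series matters are settled, the identification is immediate.
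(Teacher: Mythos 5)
Your proof is correct and is essentially the paper's argument run in reverse: the paper applies the averaging operator $(e^{\beta D}-1)(\beta D)^{-1}$, realised concretely as $\frac{1}{\beta}\int_\xi^{\xi+\beta}$, to equation \eqref{new1} and recognises the result as the conserved quantity \eqref{cons3}, while you apply the inverse (Bernoulli-number) operator to the conserved quantity written in the scaled variable and recover \eqref{new1}. The $\pm\beta$ bookkeeping you flag does cancel exactly as you describe, so there is no gap.
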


\begin{proof} To see this apply to the equation \eqref{new1} the operator $(e^{\beta D}-1)(\beta D)^{-1}$ acting as
$$
(e^{\beta D}-1)(\beta D)^{-1}\phi(\xi)=\frac{1}{\beta}\int_\xi^{\xi+\beta}\phi(t)dt:
$$
$$
\frac{1}{\beta}\int_\xi^{\xi+\beta}u^2(t)dt=(e^{\beta D}+1) u(\xi)-\frac{1}{\beta}\int_\xi^{\xi+\beta}(2t+\sum_{i=1}^\infty \tilde s_i \beta^i)dt=(T_\beta+1) u-2\xi-\beta-\sum_{i=1}^\infty \tilde s_i \beta^i.
$$
Comparing this with the conserved quantity $S$ given by (\ref{cons3}) and taking into account the change $\xi=\beta(z-\tfrac{1}{2})$ we have the claim.
\end{proof} 
Using now the Taylor expansion
$$\frac{t}{2}\frac{e^t+1}{e^t-1}=\frac{t}{2}\coth \frac{t}{2}=\sum_{n=1}^{\infty}\frac{B_{2n}}{(2n)!}t^{2n}=1+\frac{1}{12}t^2-\frac{1}{720}t^4+\dots,$$
where $B_{k}$ are Bernoulli numbers, we can rewrite the equation \eqref{new1} as
\begin{equation} \label{new2}
u^2= 2\mathcal L u-2\xi-\sum_{i=1}^\infty s_i \beta^i,
\end{equation}
where
\begin{equation} \label{newL}
\mathcal L=\sum_{n=0}^{\infty}\frac{B_{2n}}{(2n)!}\beta^{2n}D^{2n}=1+\frac{1}{12}\beta^2D^2-\frac{1}{720}\beta^4D^4+\dots.
\end{equation}

We look for a solution as a series in $\beta$:
\begin{equation} \label{seriesnew}
u(\xi) = \sum_{k=0}^{\infty} u_{k} (\xi) \beta^{k}.
\end{equation}
For $k=0$ we have 
$u_0^2=2u_0-2\xi,$ implying that
$
u_0=1-\sqrt{1-2\xi}.
$
Continuing this we have
$$
u_1=\frac{s_1}{2(1-u_0)}=\frac{s_1}{2\sqrt{1-2\xi}}, \quad u_2=\frac{u_1^2-\frac{1}{6}u_0''+s_2}{2(1-u_0)}=\frac{s_1^2}{8}(1-2\xi)^{-3/2}-\frac{1}{12}(1-2\xi)^{-2}+\frac{s_2}{2}(1-2\xi)^{-1/2}.
$$
Let now $w:=z-\frac{1}{2},$ so we have $\xi=\beta w$ and the expansion
$$
u=-\sum_{n=1}^\infty (-2)^n\binom{1/2}{n}\beta^n w^n+ \beta \frac{s_1}{2} \sum_{n=0}^\infty (-2)^{n}\binom{-1/2}{n}\beta^n w^n
$$
$$
+ \beta^2 \left[\frac{s_1^2}{8}\sum_{n=0}^\infty (-2)^{n}\binom{-3/2}{n}\beta^n w^n-\frac{1}{12}\sum_{n=0}^\infty (-2)^n\binom{-2}{n}\beta^n w^n
+\frac{s_2}{2}\sum_{n=0}^\infty (-2)^{n}\binom{-1/2}{n}\beta^n w^n\right]+\dots
$$
The coefficients in the expansion
$$
u=\sum_{n=0}^{\infty} \tilde\Phi_{n} (w;s) \beta^{n}
$$
are polynomials in $w$:
\begin{equation} \label{tildephi}
\tilde\Phi_n=-(-2)^n\binom{1/2}{n}w^n-(-2)^{n-2}s_1\binom{-1/2}{n-1} w^{n-1}
\end{equation}
$$
+(-2)^{n-2}\left[\frac{s_1^2}{8}\binom{-3/2}{n-2}-\frac{1}{12} \binom{-2}{n-2}+\frac{s_2}{2}\binom{-1/2}{n-2}\right]w^{n-2}+\dots.
$$

In particular, the first few of these are
\begin{equation*}
\begin{aligned}
\tilde\Phi_0&=0, \\
\tilde\Phi_1&=w+\tfrac{1}{2}s_1,\\
\tilde\Phi_2&=\tfrac{1}{2}w^2+\tfrac{1}{2}s_1 w+\tfrac{1}{8}s_1^2+\tfrac{1}{2}s_2-\tfrac{1}{12},\\
\tilde\Phi_3&=\tfrac{1}{3} w^3 + \tfrac{3}{4} s_1 w^2 + \left(\tfrac{3}{8}s_1^2 + \tfrac{1}{2}s_2  - \tfrac{1}{3}\right)w + \tfrac{1}{16}s_1^3 + \tfrac{1}{4} s_1 s_2 - \tfrac{1}{6}s_1 + \tfrac{1}{2}s_3, \\
\tilde\Phi_4&= \tfrac{5}{8} w^4 + \tfrac{5}{4} s_1 w^3 + \left(\tfrac{15}{16}s_1^2 + \tfrac{3}{4}s_2- 1 \right)w^2 + \left(\tfrac{5}{16}s_1^3 + \tfrac{3}{4} s_1 s_2 - s_1 + \tfrac{1}{2}s_3\right) w \\
&\quad + \tfrac{5}{128}s_1^4 - \tfrac{1}{4}s_1^2 + \tfrac{3}{16} s_1^2 s_2 + \tfrac{1}{4}s_1 s_3 + \tfrac{1}{8} s_2^2 - \tfrac{1}{6} s_2 + \tfrac{1}{2} s_4 + \tfrac{55}{288}, \\
\tilde\Phi_5&= \tfrac{7}{8} w^5 + \tfrac{35}{16}s_1 w^4 + \left( \tfrac{35}{16}s_1^2 + \tfrac{5}{4}s_2- \tfrac{8}{3}\right) w^3 + \left(\tfrac{35}{32}s_1^3 - 4 s_1+ \tfrac{15}{8} s_1 s_2+\tfrac{3}{4}s_3\right) w^2 \\
&\quad + \left( \tfrac{35}{128}s_1^4 -2s_1^2 + \tfrac{15}{16}s_1^2s_2 + \tfrac{3}{4}s_1s_3 + \tfrac{3}{8}s_2^2 - s_2 +\tfrac{1}{2}s_4 + \tfrac{163}{96}\right)w \\
&\quad + \tfrac{7}{256} s_1^5 -\tfrac{1}{3}s_1^3 + \tfrac{163}{192}s_1 + \tfrac{5}{32}s_1^3 s_2 + \tfrac{3}{16}s_1s_2^2  + \tfrac{3}{16} s_1^2 s_3 +\tfrac{1}{4}s_2s_3 -\tfrac{1}{6}s_3 + \tfrac{1}{4}s_1s_4 + \tfrac{1}{2}s_5.
\end{aligned}
\end{equation*}


\begin{proposition}
The polynomials $\tilde \Phi_n(w;s)$  coincide with the polynomials $\Phi_n(z;\tau)$ from the previous section
\begin{equation} \label{reltil}
\tilde \Phi_n(w;s)=\Phi_n(z; \tau),
\end{equation}
after the substitution $w=z-\tfrac{1}{2}$ and the triangular change of parameters:
\begin{equation} \label{change}
\tau_n=s_n+ T_n(s_1,\dots, s_{n-1}), \quad T_n\in \mathbb Q[s_1,\dots, s_{n-1}],
\end{equation}
which is the inversion of the transformation (\ref{change2}).

If all $s_{2k-1}=0$, then the same is true for $\tau_{2k-1}$ and the corresponding polynomials have the parity property
\begin{equation}
\tilde \Phi_n(-w;s)=(-1)^n \tilde \Phi_n(w;s).
\end{equation}
\end{proposition}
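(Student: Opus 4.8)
The plan is to recognise both families as formal solutions of the single equation \eqref{scaledperturbation} and then identify them using uniqueness of polynomial solutions together with the conserved quantity. First I would record that the substitution $\xi=\beta(z-\tfrac{1}{2})$, $u(\xi)=f(z)$ turns the operators into one another exactly: $u'(\xi)=\beta^{-1}f'(z)$ and $T_\beta u(\xi)=u(\xi+\beta)=f(z+1)=Tf(z)$. Consequently \eqref{scaled} is literally \eqref{scaledperturbation} rewritten, and since $u=\sum_k u_k(\xi)\beta^k$ solves \eqref{new1} (equivalently \eqref{new2}), the series $f(z):=\sum_n\tilde\Phi_n(z-\tfrac{1}{2};s)\beta^n$ solves \eqref{scaledperturbation}. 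The resummation is legitimate because $\xi=\beta w$ has no constant term in $\beta$, so each $u_k(\beta w)$ expands as a power series in $\beta$ with polynomial coefficients in $w$, and collecting powers of $\beta$ gives the finite sums $\tilde\Phi_n(w;s)$ of \eqref{tildephi}. The genuinely delicate point, which I expect to be the main obstacle, is to verify that this resummation commutes with the equation: substituting $\xi=\beta w$ into the formal identity \eqref{scaled} must intertwine $d/d\xi$ with $\beta^{-1}d/dw$ and $T_\beta$ with $T$, so that the transformed series really solves the transformed equation order by order in $\beta$.

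Next I would use uniqueness. Expanding \eqref{scaledperturbation} order by order for a series $\sum_{n\ge1}c_n(z)\beta^n$ with $c_0\equiv0$ reproduces exactly the recursion \eqref{recurrencephi}: at each order one solves $\Psi_n'=(\text{known})$ up to a single constant $\tau_n=\Psi_n(0)$ and then inverts $T+1$ (invertible on polynomials) to obtain $c_n$ uniquely. Hence the polynomial solutions of \eqref{scaledperturbation} with vanishing zeroth coefficient are precisely the $\sum_n\Phi_n(z;\tau)\beta^n$, parametrised bijectively by $\tau=(\tau_1,\tau_2,\dots)$. Since the outer series $\sum_n\tilde\Phi_n(z-\tfrac{1}{2};s)\beta^n$ is such a solution, there is a unique $\tau=\tau(s)$ with $\tilde\Phi_n(z-\tfrac{1}{2};s)=\Phi_n(z;\tau)$ for all $n$, which establishes \eqref{reltil} up to identifying the parameter change.

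To pin down $\tau(s)$ I would compare the conserved quantity $S(\beta)$ of this common solution in the two parametrisations. On the outer side the preceding Lemma gives that the integration constants coincide with the coefficients of $S$, so $S(\beta)=\sum_k s_k\beta^k$; on the inner side the Proposition yielding \eqref{change2} gives $S(\beta)=\sum_k\big(\tau_k+S_k(\tau)\big)\beta^k$. Equating the two and inverting the triangular transformation produces precisely \eqref{change}, namely $\tau_n=s_n+T_n(s_1,\dots,s_{n-1})$.

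Finally, for the parity statement I would first show $\tau_{2k-1}=0$. The triangular map $\Theta:\tau\mapsto s$ of \eqref{change2} commutes with the sign involution $\sigma:(x_i)\mapsto((-1)^ix_i)$, because the established symmetry $S_n(\tilde\tau)=(-1)^nS_n(\tau)$ forces $\Theta(\sigma\tau)=\sigma\Theta(\tau)$; hence $\Theta^{-1}$ commutes with $\sigma$ as well. When all $s_{2k-1}=0$ we have $\sigma s=s$, so $\sigma\tau=\Theta^{-1}(\sigma s)=\tau$, which forces $\tau_{2k-1}=0$ and $\tilde\tau=\tau$. Proposition~\ref{prop:generalsymmetry} then yields $\Phi_n(1-z;\tau)=(-1)^n\Phi_n(z;\tau)$; substituting $z=w+\tfrac{1}{2}$, so that $1-z=\tfrac{1}{2}-w$, this is exactly $\tilde\Phi_n(-w;s)=(-1)^n\tilde\Phi_n(w;s)$. (Alternatively, uniqueness of the outer solution under $\beta\mapsto-\beta$, which leaves \eqref{new2} invariant when the odd $s_i$ vanish, shows that $u$ is even in $\beta$, whence $u(\beta w)$ is invariant under $(\beta,w)\mapsto(-\beta,-w)$ and the parity follows at once.) Everything after the first paragraph rests on the already-established Lemma and Propositions, so the weight of the argument sits in justifying that the outer resummation is a genuine polynomial solution of \eqref{scaledperturbation}.
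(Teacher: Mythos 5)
Your argument is correct and is essentially the one the paper intends: the proposition is stated without an explicit proof, but the preceding Lemma (identifying the outer integration constants with the coefficients $s_k$ of the conserved quantity) and the earlier Proposition giving the triangular map $s_n=\tau_n+S_n(\tau_1,\dots,\tau_{n-1})$ together with its sign symmetry are exactly the ingredients you assemble. Your observation that both series are the unique polynomial formal solution of \eqref{scaledperturbation} with vanishing zeroth coefficient and prescribed conserved quantity, and your derivation of the parity from the equivariance of the triangular map under $\tau_i\mapsto(-1)^i\tau_i$ combined with Proposition~\ref{prop:generalsymmetry}, supply precisely the details the paper leaves implicit.
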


Explicitly for $k\leq 6$ the change of parameters has the following form:

\begin{equation} 
\begin{aligned}
\tau_1&=s_1, \\
\tau_2&=s_2+\tfrac{s_1^2}{4}+\tfrac{1}{12}, \\
\tau_3&=s_3+\tfrac{s_1^3}{8}+\tfrac{s_1 s_2}{2}+\tfrac{s_1}{24}, \\
\tau_4&= s_4+\tfrac{5 s_1^4}{64}+\tfrac{3s_1^2s_2 }{8} -\tfrac{s_1^2}{32}+\tfrac{s_1s_3}{2}+\tfrac{s_2^2}{4}+\tfrac{s_2}{24}-\tfrac{23}{576},\\
\tau_5&= s_5 + \tfrac{7s_1^5}{128}  + \tfrac{5s_1^3s_2 }{16} -\tfrac{23 s_1^3}{192}+\tfrac{3 s_1^2s_3}{8} +\tfrac{3 s_1s_2^2}{8} -\tfrac{s_1 s_2}{16}+\tfrac{s_1 s_4}{2}-\tfrac{11 s_1}{384}+\tfrac{s_2 s_3}{2}+\tfrac{s_3}{24},\\
\tau_6&= s_6 + \tfrac{21 s_1^6}{512}+\tfrac{35s_1^4s_2 }{128} -\tfrac{335 s_1^4}{1536}+\tfrac{5s_1^3 s_3 }{16} +\tfrac{15 s_1^2s_2^2}{32} -\tfrac{23 s_1^2s_2}{64} +\tfrac{3s_1^2s_4}{8} +\tfrac{373 s_1^2}{1536}+\tfrac{3s_1s_2 s_3 }{4}\\
&\qquad  +\tfrac{s_1 s_5}{2}-\tfrac{s_1 s_3}{16}+\tfrac{s_2^3}{8}+\tfrac{s_3^2}{4}+\tfrac{s_2 s_4}{2}+\tfrac{s_4}{24}-\tfrac{s_2^2}{32}-\tfrac{11 s_2}{384}+\tfrac{2923}{13824}.\\
\end{aligned}
\end{equation}

In the special case when all parameters $s_i=0$ we have the polynomials
\begin{equation} \label{tildeQ}
\tilde Q_n(w):=\tilde \Phi_n(w;0):
\end{equation}
$$
\tilde Q_0=0, \quad
\tilde Q_1=w, \quad \tilde Q_2=\tfrac{1}{2}w^2-\tfrac{1}{12}, \quad \tilde Q_3=\tfrac{1}{3}w^2-\tfrac{1}{3}w,
$$
$$
 \tilde Q_4=\tfrac{5}{8}w^4-w^2+\tfrac{55}{288}, \quad \tilde Q_5=\tfrac{7}{8}w^5-\tfrac{8}{3}w^3+\tfrac{163}{96}w.
$$

Let $a^{(\nu)}_j$ and $b^{(\nu)}_j$ be the coefficients of these polynomials for even and odd $n$ respectively:
\begin{equation}
Q_{2\nu} = \sum_{j=0}^{\nu} a_{j}^{(\nu)} w^{2j}, \qquad Q_{2\nu-1} = \sum_{j=1}^{\nu} b_{j}^{(\nu)} w^{2j-1}.
\end{equation}

\begin{proposition} \label{prop:alternatingsigns}
All the coefficients $a^{(\nu)}_j$ and $b^{(\nu)}_j$ are non-zero and have alternating signs:
\begin{equation}
\operatorname{sgn} a_{j}^{(\nu)} = (-1)^{\nu-j}, \qquad \operatorname{sgn} b_{j}^{(\nu)} = (-1)^{\nu-j}. 
\end{equation}
\end{proposition}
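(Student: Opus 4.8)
The plan is to rotate to the variable $w = iv$, which turns the claimed sign-alternation into manifest positivity, and then to induct on $n$ using the defining recurrence. First I would record the recurrence satisfied by $\tilde Q_n(w) = \tilde\Phi_n(w;0)$. Setting all $s_i = 0$ in \eqref{new2} and writing $\xi = \beta w$, so that $\beta D = \partial_w$ and $\mathcal L = \tfrac{\partial_w}{2}\coth\tfrac{\partial_w}{2}$, the coefficient of $\beta^n$ gives, for $n\ge 2$,
\begin{equation}
\tilde Q_n = \mathcal K \sum_{i=1}^{n-1} \tilde Q_i\,\tilde Q_{n-i}, \qquad \mathcal K = K(\partial_w) = \tfrac12\mathcal L^{-1} = \sum_{m\ge 0}\kappa_m\,\partial_w^{2m},
\end{equation}
with $K(t) = \tfrac1t\tanh\tfrac t2 = \sum_{m\ge0}\kappa_m t^{2m}$ and $\tilde Q_1 = w$. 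The one input about $\mathcal K$ that I need is the sign of its coefficients: since the Taylor coefficients of $\tanh$ strictly alternate (equivalently $(-1)^{m+1}B_{2m}>0$ and $B_{2m}\neq0$ for all $m$), one has $\kappa_m\neq0$ and $\operatorname{sgn}\kappa_m = (-1)^m$ for every $m\ge0$.

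Next I would set $R_n(v) := i^{-n}\tilde Q_n(iv)$. By the parity relation $\tilde Q_n(-w) = (-1)^n\tilde Q_n(w)$ established above, together with $\deg\tilde Q_n = n$ and leading coefficient $A_n = 2^{1-n}C_{n-1}>0$ (Proposition \ref{prop:leadingcoeffs}), each $R_n$ is a real polynomial of degree $n$ whose nonzero monomials are the $v^{n-2m}$ for $m=0,\dots,\lfloor n/2\rfloor$. Since $[w^{n-2m}]\tilde Q_n = (-1)^m\,[v^{n-2m}]R_n$, the proposition is exactly equivalent to the assertion that every coefficient of every $R_n$ is strictly positive. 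Because $\partial_w = -i\partial_v$ gives $\partial_w^{2m} = (-1)^m\partial_v^{2m}$, the operator $\mathcal K$ becomes
\begin{equation}
\hat{\mathcal K} := \sum_{m\ge0}(-1)^m\kappa_m\,\partial_v^{2m} = \sum_{m\ge0}|\kappa_m|\,\partial_v^{2m},
\end{equation}
an operator all of whose coefficients are strictly positive. Substituting $w = iv$ into the recurrence and cancelling the common factor $i^n$ yields the clean relation
\begin{equation}
R_n = \hat{\mathcal K}\,Q_n, \qquad Q_n := \sum_{i=1}^{n-1} R_i\,R_{n-i}, \quad n\ge2, \qquad R_1 = v.
\end{equation}

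Then I would induct on $n$, the hypothesis being that $R_1,\dots,R_{n-1}$ have strictly positive coefficients (the base cases $R_1 = v$ and $R_2 = \tfrac12 v^2 + \tfrac1{12}$ are immediate). Under the hypothesis every product $R_iR_{n-i}$, hence $Q_n$, has non-negative coefficients, while its top coefficient is $[v^n]Q_n = \sum_{i=1}^{n-1}A_iA_{n-i} = 2A_n > 0$. For an admissible exponent $k$ (of the parity of $n$, with $0\le k\le n$),
\begin{equation}
[v^k]R_n = \sum_{m\ge0}|\kappa_m|\,\frac{(k+2m)!}{k!}\,[v^{k+2m}]Q_n
\end{equation}
is a sum of non-negative terms, and the single term with $m = (n-k)/2$ equals $|\kappa_{(n-k)/2}|\,\tfrac{n!}{k!}\,[v^n]Q_n > 0$. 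Hence $[v^k]R_n > 0$, closing the induction, and translating back gives $\operatorname{sgn} a_j^{(\nu)} = \operatorname{sgn} b_j^{(\nu)} = (-1)^{\nu-j}$.

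The main obstacle is that the real content is \emph{strict} positivity, not mere non-negativity. Non-negativity of the coefficients of $R_n$ is automatic once $\hat{\mathcal K}$ is recognized as a positive operator, but to conclude that no admissible coefficient vanishes one must secure two facts: that every coefficient $\kappa_m$ of $K(t)$ is nonzero (equivalently, that no Bernoulli number $B_{2m}$ vanishes, so that $\operatorname{sgn}\kappa_m=(-1)^m$ genuinely alternates), and that $[v^n]Q_n>0$. These together allow the derivative terms of $\hat{\mathcal K}$ to transport the strictly positive leading coefficient of $Q_n$ down to every lower admissible coefficient of $R_n$.
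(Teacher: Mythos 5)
Your proof is correct and is in essence the same as the paper's: both induct on the Catalan-type recurrence $\tilde Q_n = \mathcal K\sum_{i}\tilde Q_i\tilde Q_{n-i}$ and both ultimately rest on the fact that the Bernoulli numbers $B_{2m}$ are nonzero with alternating signs. The difference is presentational rather than substantive --- the paper derives explicit recursions for the coefficients $a_t^{(n)}, b_t^{(n)}$ and factors out the signs term by term, whereas your substitution $w=iv$ converts the sign bookkeeping into manifest positivity of the operator $\hat{\mathcal K}$; if anything, your version is more careful on the one delicate point, namely \emph{strict} positivity, which you secure by isolating the nonvanishing contribution of the leading coefficient $[v^n]\bigl(\sum_i R_iR_{n-i}\bigr)=2A_n>0$ transported down by the nonzero $\kappa_{(n-k)/2}$, a step the paper's proof leaves implicit.
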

\begin{proof}
We proceed by induction. Using Proposition \ref{prop:5.1} from the next section, through a long but direct computation we obtain the following recursive formulas for the coefficients:
\begin{equation}
\begin{aligned}
b_t^{(n)} &= \sum_{r=1}^{n-t+1} \frac{2(2^{2r}-1)}{(2r)!} \frac{(2t+2r-3)!}{(2t-1)!} B_{2r}  \sum_{\mu=1}^{n-1} 
\sum_{j+k=t+r-1} 
\left( a_j^{(\mu)} b_{k}^{(n-\mu)} + b_{j}^{(\mu)} a_k^{(n-\mu)}\right) \\
a_t^{(n)} &= \sum_{r=1}^{n-t+1} \frac{2(2^{2r}-1)}{(2r)!} \frac{(2t+2r-2)!}{(2t)!} B_{2r} \sum_{\mu=1}^{n-1} 
{\Bigg[} 
\sum_{j+k=t+r-1}  a_j^{(\mu)} a_{k}^{(n-\mu)} 
+\sum_{j'+k'=t+r}  b_{j'}^{(\mu)} b_{k'}^{(n-\mu-1)},
{\Bigg]}  
\end{aligned}
\end{equation}
where $B_k$ are the Bernoulli numbers.
Then assume $a_j^{(\nu)} = (-1)^{\nu-j} |a_j^{(\nu)}|$ and $b_j^{(\nu)} = (-1)^{\nu-j} |b_j^{(\nu)}|$ for $\nu < n$, and note the alternating sign property $B_{2r}=(-1)^{r+1} | B_{2r}|$ of the Bernoulli numbers. 
The first formula above leads to 
\begin{equation}
b_t^{(n)} = (-1)^{n-t}\sum_{r=1}^{n-t+1} \frac{2(2^{2r}-1)}{(2r)!} \frac{(2t+2r-3)!}{(2t-1)!} |B_{2r}| \sum_{\mu=1}^{n-1} 
\sum_{j+k=t+r-1} 
\left( |a_j^{(\mu)} b_{k}^{(n-\mu)}| + |b_{j}^{(\mu)} a_k^{(n-\mu)}|\right),
\end{equation}
so $(-1)^{n-t} b_{t}^{(n)} > 0$ as required.
 The fact that $(-1)^{n-t} a_{t}^{(n)} > 0$ follows similarly from the second formula.
\end{proof}

Note that the same property holds for the shifted Bernoulli polynomials $\tilde B_n(x):=B_n(x+\frac{1}{2})$.
Indeed, they can be expressed in terms of the Bernoulli numbers as
\begin{equation} \label{tildeB}
\tilde B_n(x)=\sum_{k=0}^n\binom{n}{k}(2^{k-n+1}-1)B_{n-k}x^k,
\end{equation}
which follows from the well-known formulas
$$
B_n(x+a)=\sum_{k=0}^n\binom{n}{k}B_{n-k}(a)x^k, \quad
B_n(\tfrac{1}{2})=(2^{1-n}-1)B_n.
$$
We continue the parallel with the Bernoulli polynomials in the next section.


\section{Special case $s_i=0$: Bernoulli-Catalan polynomials} \label{sec:alls_izero}

When all parameters $s_i=0$ the equation (\ref{new2}) takes the form
\begin{equation} \label{new3}
u^2= 2\mathcal L u - 2\xi.
\end{equation}
This leads to the following recurrence relation for the corresponding polynomials 
\begin{equation} \label{Qn}
Q_n(z):=\tilde Q_n(z-\tfrac{1}{2})=\tilde\Phi_n(z-\tfrac{1}{2}; 0).
\end{equation}
\begin{proposition} \label{prop:5.1}
The polynomials $Q_n(z)$
can be determined by the recursive relation
\begin{equation}
\label{Qnrel}
Q_n(z)=\mathcal K \sum_{k=1}^{n-1}Q_k(z)Q_{n-k}(z),\quad n>1, \quad Q_0(z):=0, \,\,  Q_1(z):=z-\frac{1}{2},
\end{equation}
where
$$
\mathcal K= K(\tfrac{d}{dz}), \quad K(t)=\frac{1}{t}\tanh\frac{t}{2}.
$$
\end{proposition}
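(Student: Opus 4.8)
The plan is to read the recurrence off directly from the scaled equation \eqref{new3}, $u^2 = 2\mathcal{L}u - 2\xi$, by passing to the variable $w = z - \tfrac{1}{2}$ and extracting the coefficient of each power of $\beta$. The key preliminary observation, which I would establish first, is that although the operator $\mathcal{L}$ in \eqref{newL} carries explicit powers of $\beta$, these all disappear in the $w$-picture. Indeed, with $\xi = \beta w$ and $D = \tfrac{d}{d\xi}$ we have $\beta D = \tfrac{d}{dw}$, and since the parameter $\beta$ commutes with $D$,
\begin{equation}
\mathcal{L} = \sum_{n=0}^{\infty} \frac{B_{2n}}{(2n)!}\,\beta^{2n} D^{2n} = \sum_{n=0}^{\infty} \frac{B_{2n}}{(2n)!}\,\partial^{2n} = \frac{\partial}{2}\coth\frac{\partial}{2}, \qquad \partial := \frac{d}{dw} = \frac{d}{dz}.
\end{equation}
Thus in the $w$-variable $\mathcal{L}$ is a $\beta$-independent operator acting only on the $w$-dependence, and so it acts coefficientwise on any power series in $\beta$ whose coefficients are functions of $w$.

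With all $s_i = 0$ the outer-expansion solution is $u = \sum_{n=0}^{\infty} \tilde{Q}_n(w)\,\beta^n$ and $-2\xi = -2\beta w$. I would then substitute this into \eqref{new3} and collect the coefficient of $\beta^n$, which yields
\begin{equation}
\sum_{k=0}^{n} \tilde{Q}_k(w)\,\tilde{Q}_{n-k}(w) = 2\,\mathcal{L}\,\tilde{Q}_n(w) - 2w\,\delta_{n,1}.
\end{equation}
The cases $n=0,1$ recover $\tilde{Q}_0 = 0$ and the identity $\mathcal{L}\tilde{Q}_1 = w$ (consistent with $\tilde{Q}_1 = w$, since $\mathcal{L} = 1 + \tfrac{1}{12}\partial^2 + \cdots$ fixes linear functions). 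For $n \geq 2$ the terms $k=0$ and $k=n$ drop out because $\tilde{Q}_0 = 0$, leaving
\begin{equation}
2\,\mathcal{L}\,\tilde{Q}_n = \sum_{k=1}^{n-1} \tilde{Q}_k\,\tilde{Q}_{n-k}.
\end{equation}

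It then remains to invert $2\mathcal{L}$ and identify the inverse with $\mathcal{K}$. Since $\tfrac{t}{2}\coth\tfrac{t}{2}$ has constant term $1$, the operator $\mathcal{L}$ is an invertible formal power series in $\partial$, and from $\coth\tfrac{t}{2}\tanh\tfrac{t}{2} = 1$ I would read off $\mathcal{L}^{-1} = \tfrac{2}{\partial}\tanh\tfrac{\partial}{2}$, so that $(2\mathcal{L})^{-1} = \tfrac{1}{\partial}\tanh\tfrac{\partial}{2} = K(\partial) = \mathcal{K}$; here $K(t) = \tfrac{1}{2} - \tfrac{1}{24}t^2 + \cdots$ is a genuine power series, so $\mathcal{K}$ is well defined on polynomials. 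Applying $\mathcal{K}$ gives $\tilde{Q}_n = \mathcal{K}\sum_{k=1}^{n-1}\tilde{Q}_k\tilde{Q}_{n-k}$, and translating back via $Q_n(z) = \tilde{Q}_n(z - \tfrac{1}{2})$ (with $\partial = \tfrac{d}{dz} = \tfrac{d}{dw}$) yields exactly \eqref{Qnrel}. I expect the only genuinely delicate step to be the first one: verifying that the $\beta$-dependence of $\mathcal{L}$ is entirely absorbed into $\partial = \beta D$, which is precisely what legitimises the coefficientwise extraction. Everything after that is formal power-series bookkeeping.
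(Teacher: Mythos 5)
Your proof is correct and takes essentially the same route as the paper, which presents the proposition as an immediate consequence of the scaled equation $u^2 = 2\mathcal{L}u - 2\xi$ by extracting the coefficient of $\beta^n$ and inverting $2\mathcal{L}$. Your explicit check that in the $w$-variable $\mathcal{L}$ becomes the $\beta$-independent operator $\tfrac{\partial}{2}\coth\tfrac{\partial}{2}$, so that $(2\mathcal{L})^{-1} = \tfrac{1}{\partial}\tanh\tfrac{\partial}{2} = \mathcal{K}$ acts coefficientwise, is precisely the intended argument.
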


The Taylor expansion of $\tanh x$ is known to be
$$
\tanh x=\sum_{n=0}^\infty\frac{2^{2n}(2^{2n}-1)}{(2n)!}B_{2n}x^{2n-1},
$$
where $B_{2n}$ are Bernoulli numbers, so
\begin{equation}
\label{K}
\mathcal K=2\sum_{n=1}^\infty\frac{2^{2n}-1}{(2n)!}B_{2n}\left(\frac{d}{dz}\right)^{2n-2}=\frac{1}{2}-\frac{1}{24}\left(\frac{d}{dz}\right)^{2}+\frac{1}{240}\left(\frac{d}{dz}\right)^{4}-\frac{17}{40320}\left(\frac{d}{dz}\right)^{6}+\dots
\end{equation}

Since the polynomials $Q_n(z)$ satisfy the Bernoulli modification \eqref{Qnrel} of the Catalan recurrence relation \eqref{recCat}, we call them {\it Bernoulli-Catalan polynomials}.
They satisfy the Bernoulli symmetry relation
$$
Q_n(1-z)=(-1)^n Q_n(z)
$$
and the coefficients of the shifted polynomials $\tilde Q_n(w)=Q_n(w+\frac{1}{2})$ have alternating signs (see Proposition \ref{prop:alternatingsigns}).

Here are the first few of them:
\begin{align*}
Q_0(z) &= 0,\\
Q_1(z) &= z-\tfrac{1}{2}, 							\\
Q_2(z) &= \tfrac{z^2}{2} -\tfrac{z}{2}+\tfrac{1}{24},			\\
Q_3(z) &= \tfrac{z^3}{2}-\tfrac{3 z^2}{4}+\tfrac{z}{24}+\tfrac{5}{48},	\\
Q_4(z) &= \tfrac{5 z^4}{8}-\tfrac{5 z^3}{4}-\tfrac{z^2}{16}+\tfrac{11 z}{16}-\tfrac{23}{1152},	\\
Q_5(z) &=\tfrac{7 z^5}{8}-\tfrac{35 z^4}{16}-\tfrac{23 z^3}{48}+\tfrac{93 z^2}{32}-\tfrac{11 z}{384}-\tfrac{139}{256},\\
Q_6(z) &=\tfrac{21 z^6}{16}-\tfrac{63 z^5}{16}-\tfrac{335 z^4}{192}+\tfrac{965 z^3}{96}+\tfrac{373 z^2}{768}-\tfrac{1579 z}{256}+\tfrac{2923}{27648},\\
Q_7(z) &=\tfrac{33 z^7}{16}-\tfrac{231 z^6}{32}-\tfrac{331 z^5}{64}+\tfrac{3965 z^4}{128}+\tfrac{1111 z^3}{256}-\tfrac{21041 z^2}{512}+\tfrac{4289 z}{27648}+\tfrac{441049}{55296},\\
Q_8(z) &=\tfrac{429 z^8}{128}-\tfrac{429 z^7}{32}-\tfrac{5327 z^6}{384}+\tfrac{11333 z^5}{128}+\tfrac{71225 z^4}{3072}-\tfrac{107303 z^3}{512}-\tfrac{417451 z^2}{55296}+\tfrac{7151341 z}{55296}-\tfrac{20794951}{13271040}.
\end{align*}

Let $A_k^{(n)}$ be the coefficients of $Q_n(z)$:
$$
Q_{n}(z)=\sum_{k=0}^n A_k^{(n)} z^{k}.
$$
As a corollary of Proposition \ref{prop:leadingcoeffs} we have the following 

\begin{proposition}
\label{leadq}
The leading coefficients of the Bernoulli-Catalan polynomials $Q_n(z), \, n\geq 1$ are
\begin{equation}
\label{leadQ}
A_{n}^{(n)}=A_n, \quad A_{n-1}^{(n)}=-\frac{n}{2}A_n,\quad A_{n-2}^{(n)} =\frac{n(n-1)}{8}A_n-\frac{2^{n-2}(n-1)}{12}.
\end{equation}
where, as before, $A_n=2^{1-n}C_{n-1}=\frac{(2n-3)!!}{n!}$ and $C_n$ are the Catalan numbers.
\end{proposition}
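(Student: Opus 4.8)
The plan is to obtain everything by specialising Proposition~\ref{prop:leadingcoeffs}, since by \eqref{Qn} and the identification \eqref{reltil} we have $Q_n(z)=\Phi_n(z;\tau^{\ast})$, where $\tau^{\ast}$ is the image of $s=0$ under the triangular change of parameters \eqref{change}. Consequently the coefficients $A_k^{(n)}$ of $Q_n$ are exactly the coefficients $a_k^{(n)}$ of $\Phi_n(z;\tau)$ evaluated at $\tau=\tau^{\ast}$. My first step would be to read off the low-order components of $\tau^{\ast}$ from the explicit change of parameters: setting all $s_i=0$ gives $\tau_1=0$ and $\tau_2=\tfrac{1}{12}$, and these are the only values of $\tau$ that will enter the three leading coefficients.

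With these in hand the first two claims are immediate. Part~(a) of Proposition~\ref{prop:leadingcoeffs} states that $a_n^{(n)}=A_n$ independently of $\tau$, so $A_n^{(n)}=A_n$; and part~(b) gives $a_{n-1}^{(n)}=\tfrac{n}{2}A_n(\tau_1-1)$, which at $\tau_1=0$ becomes $A_{n-1}^{(n)}=-\tfrac{n}{2}A_n$.

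The step requiring care is $A_{n-2}^{(n)}$, for which I would substitute $\tau_1=0$ and $\tau_2=\tfrac{1}{12}$ into the general formula of part~(c). Setting $\tau_1=0$ removes the terms $L_n\tau_1^2-L_{n+1}\tau_1$, while the contribution $(n-1)A_{n-1}\tfrac{\tau_2}{2}$ becomes $\tfrac{n-1}{24}A_{n-1}$, cancelling exactly against the term $-\tfrac{n-1}{24}A_{n-1}$. What remains is precisely $\tfrac{n(n-1)}{8}A_n-\tfrac{2^{n-2}(n-1)}{12}$, as required by \eqref{leadQ}. There is no genuine obstacle in this argument: its whole content is the bookkeeping $\tau_1=0$, $\tau_2=\tfrac{1}{12}$ together with this cancellation. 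The only point worth checking explicitly is that the three top coefficients of $\Phi_n(z;\tau)$ depend on $\tau$ only through $\tau_1$ and $\tau_2$, so that the higher components of $\tau^{\ast}$ are irrelevant — but this is exactly the triangular structure already recorded in Proposition~\ref{prop:leadingcoeffs}.
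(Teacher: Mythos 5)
Your proposal is correct and is exactly the paper's route: the paper presents Proposition~\ref{leadq} as a corollary of Proposition~\ref{prop:leadingcoeffs}, obtained by specialising to $\tau=\tau^{\ast}$ with $\tau_1^{\ast}=0$, $\tau_2^{\ast}=\tfrac{1}{12}$, and your observation that the $(n-1)A_{n-1}\tfrac{\tau_2}{2}$ term cancels $-\tfrac{n-1}{24}A_{n-1}$ is the only computation needed. Nothing is missing.
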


The Bernoulli-Catalan polynomials have the form
$$Q_n(z)=\Phi_n(z; \tau^*),$$
where $\tau_k^*$ are the special values of parameters $\tau_k$ corresponding to the zero values of $s_i$, which we call the {\it Bernoulli-Catalan numbers}:
\begin{equation} \label{special}
\tau_1^*=0, \,\, \tau_2^*=\frac{1}{12},\,\, \tau_3^*=0,\,\, \tau_4^*=-\frac{23}{576},\,\, \tau_5^*=0\,\, \tau_6^*=\frac{2923}{13824},\,\, \tau_7^*=0,\,\, \tau_8^*=-\frac{20794951}{6635520}, \dots
\end{equation}
We conjecture that, like in the Bernoulli case, the signs of $\tau_{2n}^*$ are alternating.

Note that using the conservation form (\ref{conserved2}), 
we can express the numbers $\tau^*$ as the integrals
\begin{equation} \label{tauint}
\tau_n^* = \sum_{m=1}^{n}  \int_0^1 Q_m(z) Q_{n-m}(z) dz,
\end{equation}
which is reminiscent of both the Catalan recurrence \eqref{recCat} and the relation for the Bernoulli numbers
$$
B_{m+n}=(-1)^m\binom{m+n}{n} \int_0^1 B_m(z)B_n(z)dz,
$$
(see e.g. \cite{MV}, formula (4.12)).


The polynomials $Q_n(z)$ are qualitatively similar to those with $\tau=0$ studied in Section \ref{sec:polynomials}, in particular their behaviour for real $z$ and the distributions of their zeroes in the complex plane (see Figure \ref{fig:Qpolycomparison}).

\begin{figure}[h]
\centering
\begin{subfigure}{0.33\textwidth}
\includegraphics[width=\textwidth]{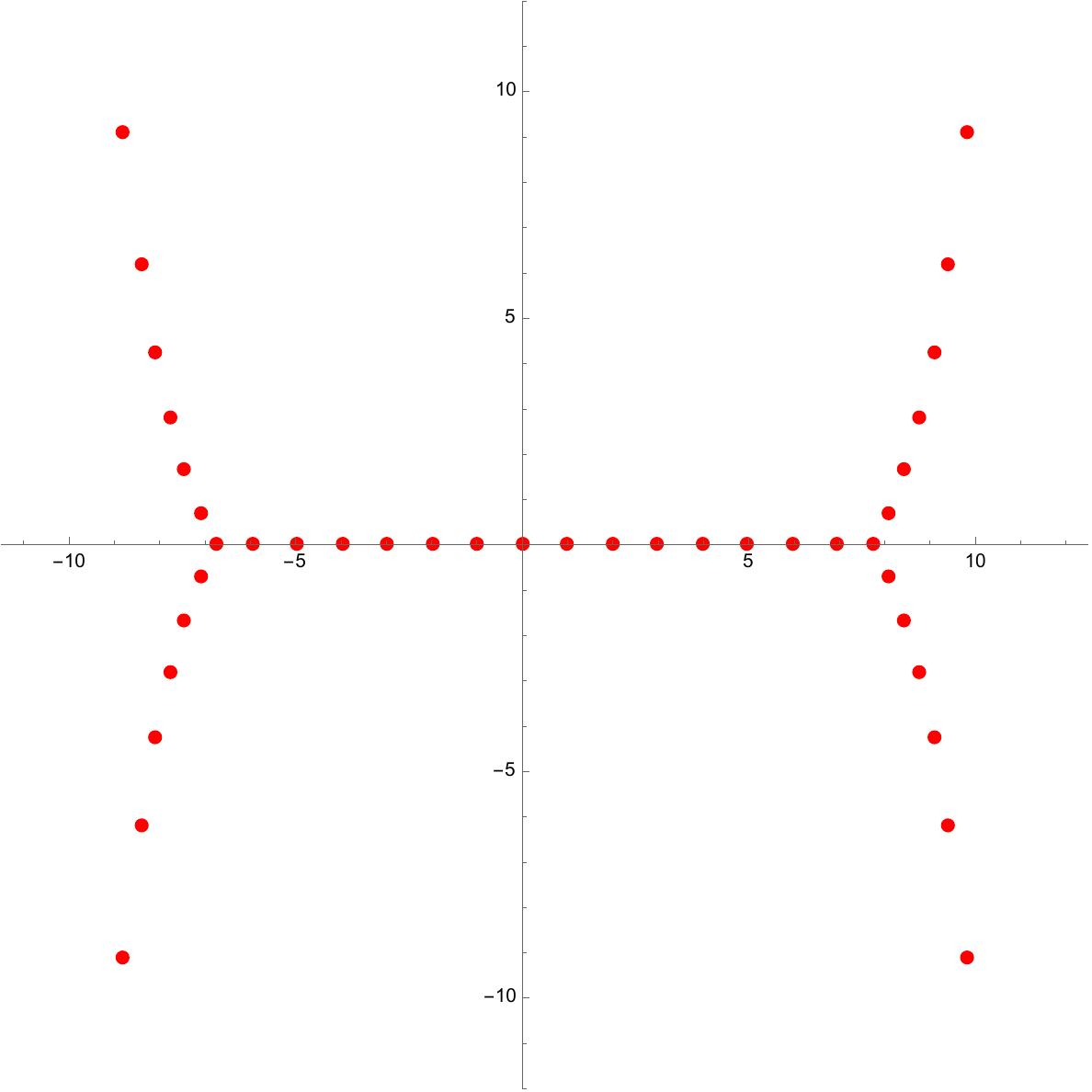}
\caption{Zeroes of $Q_{40}(z)$}
\end{subfigure}
\quad
\begin{subfigure}{0.33\textwidth}
\includegraphics[width=\textwidth]{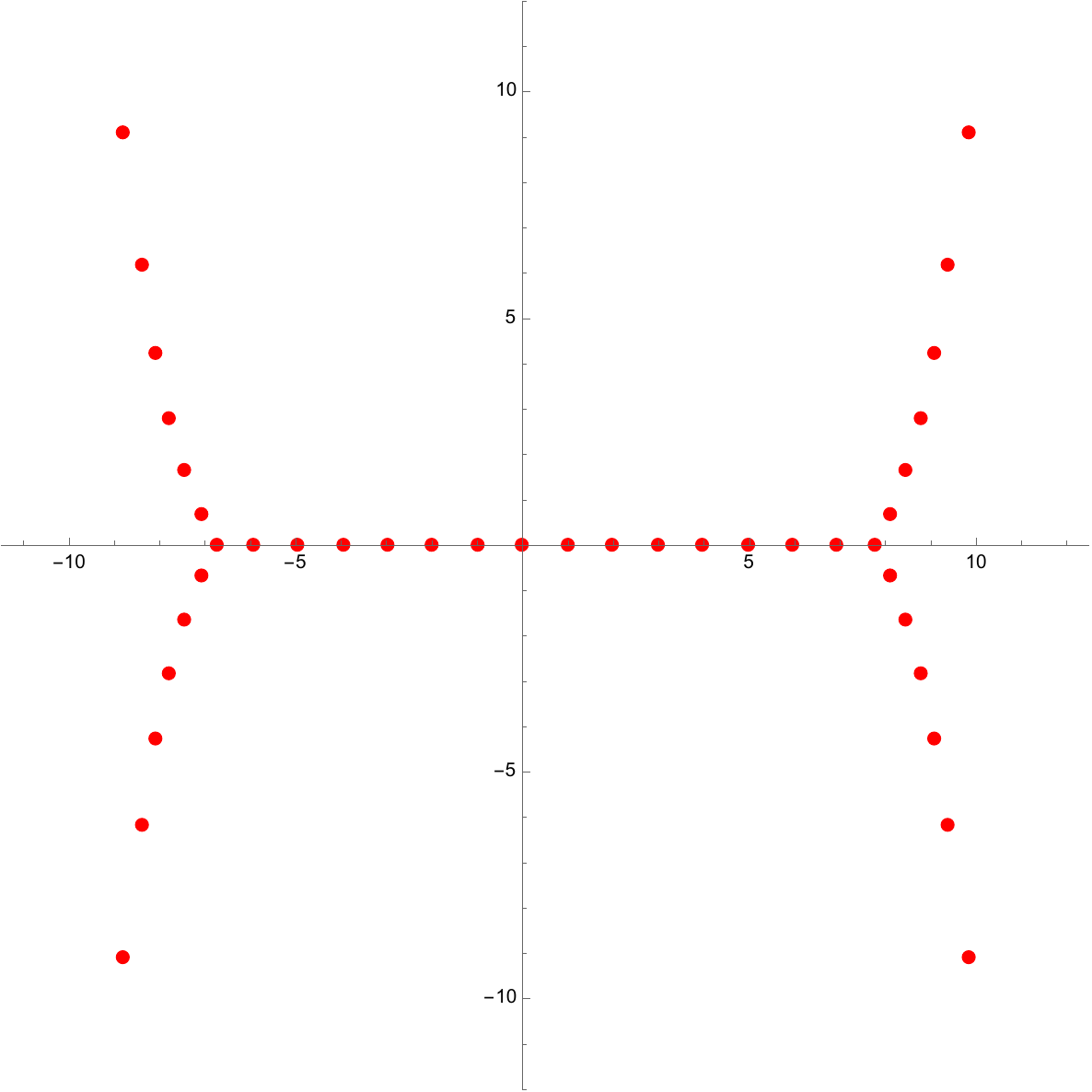}
\caption{Zeroes of $\Phi_{40}(z)$}
\end{subfigure}
\\
\begin{subfigure}{0.33\textwidth}
\includegraphics[width=\textwidth]{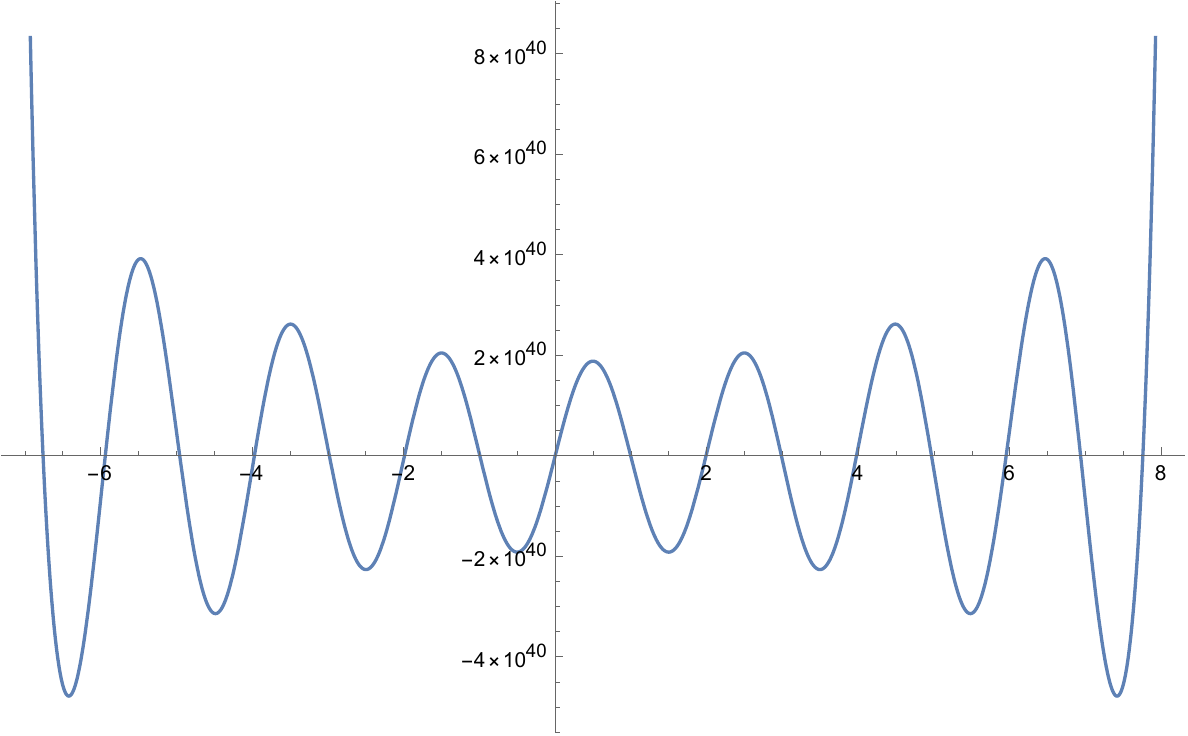}
\caption{$Q_{40}(z)$ for real $z$}
\end{subfigure}
\quad
\begin{subfigure}{0.33\textwidth}
\includegraphics[width=\textwidth]{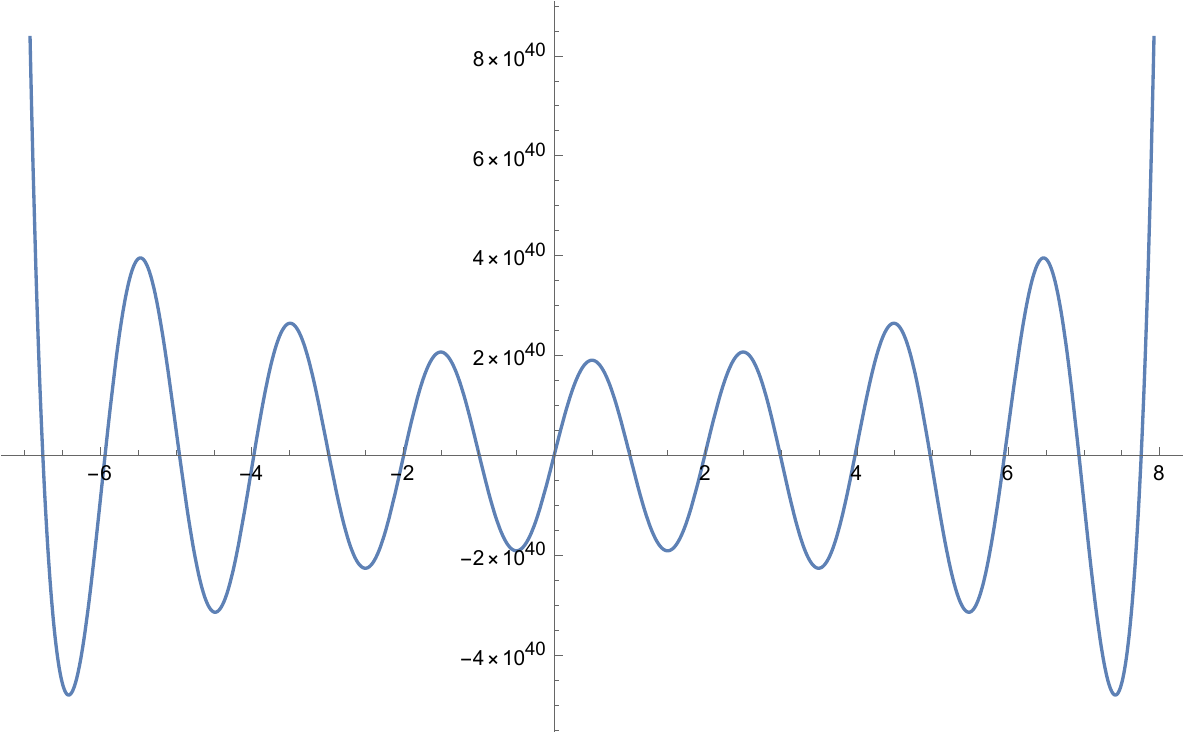}
\caption{$\Phi_{40}(z)$ for real $z$}
\end{subfigure}
\caption{Comparison between $Q_n(z)$ and $\Phi_n(z)$}
\label{fig:Qpolycomparison}
\end{figure}


We will see in Section \ref{sec:masurveech} that the polynomials $Q_n(z)$ naturally appear in relation with the computation of the Masur-Veech volumes.

\section{Meromorphic solutions and singularity structure} \label{sec:meromorphicsols}

We now turn our attention to meromorphic solutions of equation \eqref{delayP1}. 
In the case $\alpha = 0$, Berntson \cite{bjornthesis, Ber18} showed the existence of multiparameter families of elliptic function solutions to this equation. 
We also we have a wealth of meromorphic solutions coming from the class of `anti-periodic' functions \
\begin{equation}
\tilde{V}_0^- = \left\{ f \text{ meromorphic} ~:~ (T+1)f = 0 \right\}.
\end{equation}
For example, it is quick to verify that the Jacobi elliptic function 
\begin{equation}
f(z) = \operatorname{sn}\left( 2K z + z_0; m \right)
\end{equation}
is a solution, where $K$ is the first elliptic integral for the modulus $m$. 
However some of Berntson's solutions come from outside the antiperiodic class. 
For example the function
\begin{equation} \label{zetasol}
f(z)= \zeta(z+a-h ; g_2, g_3) - \zeta(z+a ; g_2, g_3) + \zeta(h ; g_2, g_3)
\end{equation}
can be verified to solve the $\alpha = 0$ case of equation \eqref{delayP1} with shift $h$ using the identity
\begin{equation}
\left( \zeta(z_1 ; g_2,  g_3)+\zeta(z_2 ; g_2,  g_3)+\zeta(z_3 ; g_2,  g_3) \right)^2 + \zeta'(z_1 ; g_2,  g_3) + \zeta'(z_2 ; g_2,  g_3) + \zeta'(z_3 ; g_2,  g_3) = 0,
\end{equation}
which holds if $z_1 +z_2 + z_3 = 0$.
Berntson also presented an alternative form of this family of solutions in terms of the Jacobi $\operatorname{sn}$ function \cite{bjornthesis}.

\subsection{Singularity confinement}

The delay-differential equation \eqref{delayP1} is a special case of one isolated by Grammaticos, Ramani and Moreira among those of so-called bi-Riccati type using a kind of singularity confinement criterion \cite{GRM93}.
Singularity confinement was introduced as an integrability criterion for discrete systems in \cite{GRP91}, by analogy with the Painlev\'e property for ordinary differential equations. 
In the case of discrete systems equivalent to birational mappings of the plane, the singularity confinement property is now known to be equivalent to the existence of a space of initial conditions, i.e. a family of rational surfaces between which the mappings become isomorphisms \cite{SAKAI2001, takenawageometricapproach, MASE}.
In the case of higher-order equations, i.e. birational mappings in higher dimensions, the notion of a space of initial conditions generalises to a family of rational varieties between which the mappings become pseudo-isomorphisms, i.e. isomorphisms in codimension one \cite{dolgachevortland, takenawatsuda, bedfordkimmatrix, takenawacarstea}.
In the case of delay-differential and differential-difference equations the situation is much more complicated and the structure of singularities of an equation can be much richer \cite{delaySC, carstea}. 
In particular, even when the equation can be iterated in a single-valued way as a discrete system on germs of meromorphic functions, it requires some work to formulate a definition of singularity confinement \cite{delaySC}.

To formally state our results on the singularity structure of the delay Painlev\'e-I  equation we make the following definition, following \cite{bjornthesis}.
\begin{definition}
Consider a delay-differential equation of the form 
\begin{equation} \label{delaydiffeq}
\begin{gathered}
F( f, f', \dots, T f, Tf', \dots, T^{m}f, T^m f', \dots ) = 0, \\
f=f(z), \quad T^k f=f(z+k h), \quad '=\frac{d}{dz},
\end{gathered}
\end{equation}
where $F$ is a differential polynomial in $f, \dots, T^m f$ with coefficients rational functions of $z$.
If a sequence of formal Puiseux series in $\zeta =  z-z_0$, say
\begin{equation}
\begin{gathered}
f = f^{(0)}, \bar{f}=f^{(1)} = T f, \dots, f^{(n)} = T^n f,  \quad n\geq m,\\
f^{(k)}  \in \C(\!(\zeta^{1/\rho_k})\!), \quad \rho_k \in \Z_{> 0}, \quad \zeta = z-z_0,
\end{gathered}
\end{equation}
satisfy formally the equation \eqref{delaydiffeq} (as well as its relevant up-shifts if $n>m$), then we say that $(f, f^{(1)},\dots, f^{(n)})$ is an admissible sequence. 
\end{definition}
We sometimes refer to the series $f^{(1)}, f^{(2)}, \dots$ following some $f$ in an admissible sequence as iterates, by analogy with the purely discrete case where singularity analysis involves iterating the equation or system as a birational map. 
A rough definition of singularity confinement as a property of a delay-differential equation of the form \eqref{delaydiffeq} along the lines of how it is used in \cite{GRM93} would be that any admissible sequence beginning with a Taylor series consists only of Laurent series, and contains finitely many iterates with nonzero singular part before returning to Taylor series.

It is common to refer to the leading behaviours of the series in an admissible sequence as a singularity pattern, and to say that the equations admits such a pattern if there exists a corresponding admissible sequence. 
For example, it can be checked by direct calculation that the delay Painlev\'e-I  equation \eqref{delayP1} admits the singularity pattern
\begin{equation} \label{singpatternsimple}
(\operatorname{rg}, \infty^{1}_{-1}, \infty^1_{+1}, \operatorname{rg}),
\end{equation}
where $\infty^r_{a}$ indicates a Laurent series expansion about a pole of order $r$, with leading term $a \zeta^{-r}$, and $\operatorname{rg}$ indicates a Taylor series, standing for `regular'.
It is common to refer to such singularity patterns beginning with regular iterates that return to regularity after finitely many iterates as `confining', with the `singularity' associated with the solution developing a pole being `confined'.

\subsection{An affine Weyl group structure of singularity patterns}
For the delay Painlev\'e-I  equation \eqref{delayP1} we study the admissible patterns involving only Laurent and Taylor series.
First note that if $f$ is a Taylor series, then if the next iterate $\bar{f}$ is a Laurent series with nonzero singular part then it must be of the form $\bar{f} = \frac{- 1 }{\zeta} + \mathcal{O}(1)$. Similarly if $\bar{f}$ is a Taylor series but the previous iterate is a Laurent series with nonzero singular part then it must be of the form $f = \frac{+1}{\zeta} + \mathcal{O}(1)$.
Further, if a pair of Laurent series 
$f = \sum_{i = - \rho}^{\infty} a_i \zeta^i$,  $\bar{f} = \sum_{i = - \bar{\rho}}^{\infty} \bar{a}_i \zeta^i$
with $\rho, \bar{\rho} > 0$ formally satisfy equation \eqref{delayP1} then $\rho = \bar{\rho}$.
Therefore to study the admissible sequences beginning with a Taylor series it is sufficient to consider pairs of Laurent series 
\begin{equation} \label{laurentseriessimplepoles}
f = \sum_{i = - 1}^{\infty} a_i \zeta^i , \qquad \bar{f} = \sum_{i = - 1}^{\infty} \bar{a}_i \zeta^i.
\end{equation}
We have the following by direct calculation.
\begin{lemma}
If a pair of formal Laurent series \eqref{laurentseriessimplepoles} satisfy the delay Painlev\'e-I  equation \eqref{delayP1}, then the following relations hold among the coefficients $a_i$, $\bar{a}_i$:
\begin{gather}
\left( \bar{a}_{-1} + a_{-1} \right)\left( \bar{a}_{-1} - a_{-1} + 1 \right) = 0,  \label{singeq1}\\
\left( 2 \bar{a}_{-1} - k - 1 \right)\bar{a}_{k+1}  = \left(2 a_{-1} + k +1 \right) a_{k+1} + \sum_{i=0}^{k} \left( \bar{a}_i \bar{a}_{k-i} + a_i a_{k-i} \right), \qquad k\geq 0. \label{singeq2}
\end{gather}
\end{lemma}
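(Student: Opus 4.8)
The plan is to substitute the two Laurent series \eqref{laurentseriessimplepoles} directly into the delay Painlev\'e-I equation \eqref{delayP1} and match coefficients of each power of $\zeta$. It is cleanest to work with the equation multiplied out so that no shift operator remains implicit: writing \eqref{delayP1} as $\bar{f}' + f' = \bar{f}^2 - f^2 + \alpha$, where $\bar{f} = Tf$ and the derivatives are with respect to $\zeta = z - z_0$. Since $f = \sum_{i\geq -1} a_i \zeta^i$ and $\bar{f} = \sum_{i\geq -1} \bar{a}_i \zeta^i$ are simple-pole Laurent series, every term in this identity is again a Laurent series, and I would compute the expansions of each side.

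The key computation is to expand the three ingredients. First, $f' = \sum_{i\geq -1} i\, a_i \zeta^{i-1}$ and likewise for $\bar{f}'$, so the left-hand side $\bar{f}' + f'$ contributes $\sum_j (j+1)(\bar{a}_{j+1} + a_{j+1})\zeta^{j}$ after reindexing, with the lowest power being $\zeta^{-2}$. Second, the Cauchy product gives $f^2 = \sum_j \big(\sum_{p+q=j} a_p a_q\big)\zeta^j$ with $p,q\geq -1$, and similarly for $\bar{f}^2$; the lowest power here is also $\zeta^{-2}$, with coefficient $a_{-1}^2$ (resp. $\bar a_{-1}^2$). I would first read off the $\zeta^{-2}$ coefficient of the whole equation. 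On the left this is $-(\bar a_{-1} + a_{-1})$ coming from the $i=-1$ terms of the derivatives (since $(-1)a_{-1}\zeta^{-2}$), and on the right it is $\bar a_{-1}^2 - a_{-1}^2$. Setting these equal yields $\bar a_{-1}^2 - a_{-1}^2 + \bar a_{-1} + a_{-1} = 0$, which factors exactly as $(\bar a_{-1} + a_{-1})(\bar a_{-1} - a_{-1} + 1) = 0$, giving \eqref{singeq1}.

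For \eqref{singeq2} I would extract the coefficient of $\zeta^{k}$ for each $k\geq 0$ (equivalently, $\zeta^{k-1}$ after suitable bookkeeping). The derivative terms contribute $(k+2)(\bar a_{k+2} + a_{k+2})$ to a given power, but the more transparent route is to match the coefficient of $\zeta^{k-1}$ in the equation $\bar f' + f' - \bar f^2 + f^2 = \alpha$: the constant $\alpha$ only affects the $\zeta^0$ coefficient, and for the relevant range the right-hand grouping separates the two extreme terms of each Cauchy product (those involving the pole coefficient $a_{-1}$ or $\bar a_{-1}$) from the regular convolution. Carefully isolating the terms $2\bar a_{-1}\bar a_{k+1}$ and $2 a_{-1} a_{k+1}$ from the squares, and the terms $(k+1)\bar a_{k+1}$, $(k+1) a_{k+1}$ from the derivatives, and collecting the middle of the convolution into $\sum_{i=0}^{k}(\bar a_i \bar a_{k-i} + a_i a_{k-i})$, rearranges precisely into \eqref{singeq2}. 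The main obstacle is purely bookkeeping: getting the index shifts in the derivative terms and in the two Cauchy products to line up so that the pole-coefficient contributions land on the left as the stated factors $(2\bar a_{-1} - k - 1)$ and $(2 a_{-1} + k + 1)$, while everything else collapses into the displayed finite sum. I expect no conceptual difficulty beyond this, since the claim is explicitly labelled as following ``by direct calculation.''
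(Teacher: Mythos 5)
Your overall strategy is exactly the paper's: the lemma is presented there with no more justification than ``by direct calculation'', and substituting the two series into $\bar{f}' + f' = \bar{f}^2 - f^2 + \alpha$ and matching powers of $\zeta$ is that calculation. Your treatment of the $\zeta^{-2}$ coefficient is complete and correct and yields \eqref{singeq1}.

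However, the last step for \eqref{singeq2} is asserted rather than performed, and that is precisely where the content lies. Carrying out the bookkeeping you describe, the coefficient of $\zeta^{k}$ (for $k\geq 0$) gives
\[
(k+1)\left(\bar{a}_{k+1}+a_{k+1}\right) \;=\; 2\bar{a}_{-1}\bar{a}_{k+1} - 2a_{-1}a_{k+1} + \sum_{i=0}^{k}\bar{a}_i\bar{a}_{k-i} - \sum_{i=0}^{k}a_i a_{k-i} + \alpha\,\delta_{k,0},
\]
which rearranges to
\[
\left(2\bar{a}_{-1}-k-1\right)\bar{a}_{k+1} \;=\; \left(2a_{-1}+k+1\right)a_{k+1} \;-\; \sum_{i=0}^{k}\bar{a}_i\bar{a}_{k-i} \;+\; \sum_{i=0}^{k}a_i a_{k-i} \;-\; \alpha\,\delta_{k,0}.
\]
This differs from the displayed \eqref{singeq2} in the relative sign of the two convolution sums and in the $\alpha$ contribution at $k=0$ (which you correctly observe must enter at order $\zeta^0$, but then never account for). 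So you must either exhibit a rearrangement that actually reproduces \eqref{singeq2} as printed, or record the identity you in fact obtain; the sentence ``rearranges precisely into \eqref{singeq2}'' is not a verification. The structural conclusions drawn from the lemma afterwards (the factor $2\bar{a}_{-1}-k-1$ multiplying $\bar{a}_{k+1}$, hence the rational determination of the higher coefficients and the $W(A_1^{(1)})$ action on the residues) are insensitive to these discrepancies, but the lemma is an exact identity and your write-up leaves its verification open at the decisive point.
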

The first equation \eqref{singeq1} appears also in the study of the Painleve property of the dressing chain (see formula (31) in Shabat-Veselov \cite{VS}).
This implies that for a given $a_{-1}$, the coefficient $\bar{a}_{-1}$ is determined to be either 
\begin{equation} \label{residuerelations}
\bar{a}_{-1} = - a_{-1}, \quad \text{or} \quad \bar{a}_{-1} = a_{-1}- 1.
\end{equation}
The two maps $a_{-1} \mapsto \bar{a}_{-1}$ in \eqref{residuerelations} are nothing but the usual action of the affine Weyl group $W(A_1^{(1)})$, i.e. 
\begin{equation}
\begin{aligned}
r : \R &\rightarrow \R, \\
a_{-1} &\mapsto - a_{-1},
\end{aligned}
\qquad 
\begin{aligned}
T : \R &\rightarrow \R, \\
a_{-1} &\mapsto a_{-1} - 1.
\end{aligned}
\end{equation}
After choosing either $r$ or $T$ to determine $\bar{a}_{-1}$ in terms of $a_{-1}$, the rest of the coefficients in $\bar{f}$ are determined by \eqref{singeq2}, i.e. $\bar{a}_k$ is given as a rational function of $a_{-1},\dots, a_{k}$ for any $k\geq 0$, so we can extend $r$ and $T$ to rational maps on the space of sequences $( a_{-1}, a_0, \dots )$.
Then one can regard the computation of successive Laurent series in an admissible pattern as a kind of multivalued dynamical system governed by words in the affine Weyl group $W(A_1^{(1)})$.
For example, the singularity pattern \eqref{singpatternsimple} corresponds to iterating from $a_{-1}=0$ according to the word $T r T$ as follows:
\begin{equation}
\operatorname{rg} \xrightarrow{~T~} \infty^{1}_{-1} \xrightarrow{~r~} \infty^{1}_{+1} \xrightarrow{~T~} \operatorname{rg}. 
\end{equation}
Note that the word $T r T$ is equivalent to the identity element in $W(A_1^{(1)})$. It turns out that any such word gives an admissible sequence that is confining in a similar way. 
For example it can be verified by direct calculation that the delay Painlev\'e-I equation admits the singularity pattern
\begin{equation} \label{singpattern2}
(\operatorname{rg}, \infty^{1}_{-1}, \infty^{1}_{-2},  \infty^{1}_{+2}, \infty^1_{+1}, \operatorname{rg}),
\end{equation}
which corresponds to the word $T^2 r T^2$:
\begin{equation}
\operatorname{rg} \xrightarrow{~T~} \infty^{1}_{-1} \xrightarrow{~T~} \infty^{1}_{-2} \xrightarrow{~r~} \infty^{1}_{+2}\xrightarrow{~T~}  \infty^{1}_{+1} \xrightarrow{~T~} \operatorname{rg}. 
\end{equation}
The next result follows from the definition of the rational maps $r$ and $T$ on formal Laurent series as above.
\begin{theorem}
For every word $w_m \cdots w_2 w_1$ with $w_i \in \left\{r, T \right\}$ which is equivalent to the identity in the affine Weyl group $W(A_1^{(1)})$, there exists an admissible sequence for the delay Painlev\'e-I  equation beginning and ending in Taylor series, obtained by successively applying $w_1$, $w_2$, $\dots$, $w_m$ to a Taylor series. 
\end{theorem}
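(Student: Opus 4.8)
The plan is to decouple the evolution of the leading Laurent coefficient (the residue) from that of the remaining coefficients, and to concentrate the work on the finitely many degenerate steps. First I would observe that applying $w_1, \ldots, w_m$ drives the residue $a_{-1}$ along the orbit of the $W(A_1^{(1)})$-action generated by the maps $r$ and $T$ of \eqref{residuerelations}. Since $w$ is equivalent to the identity, this action fixes $a_{-1}=0$; as a simple-pole series \eqref{laurentseriessimplepoles} with vanishing residue is a Taylor series, the sequence begins and ends in regular iterates. All intermediate residues lie in the orbit of $0$, which is exactly $\Z$, so every intermediate iterate is a genuine Laurent series with at most a simple pole, as admissibility requires.

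Next I would run the recursion \eqref{singeq2}, together with the order-$\zeta^{-1}$ matching $\bar a_{-1}\bar a_0 = a_{-1}a_0$ that fixes the constant term. Away from degeneracies each step determines all coefficients of $\bar f$ rationally from those of $f$. Because the residues are integers, a coefficient fails to be determined only in two situations: when $\bar a_{-1} = m$ is a positive integer, at the single index $k = 2m-1$, where $\bar a_{2m}$ is freed and \eqref{singeq2} instead imposes one compatibility condition (its right-hand side must vanish); and, at the $\zeta^{-1}$ level, whenever residue $0$ is reached from a nonzero residue, where the incoming constant term is then forced to vanish. Neither can occur at the first step, since leaving $a_{-1}=0$ produces residue $0$ or $-1$.

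The main obstacle is to show that each such compatibility condition is met. My plan is to verify them by induction along the word, starting from a Taylor series and tracking the coefficient combinations that appear on the right-hand side of \eqref{singeq2}. The mechanism is already transparent in the shortest pattern \eqref{singpatternsimple}: the relation $\bar a_{-1}\bar a_0 = a_{-1}a_0$ forces the constant term of the $\infty^1_{-1}$ iterate to vanish, because that iterate issues from a residue-$0$ series, and this makes the $m=1$ compatibility, which reduces to $a_0(a_1-a_0^2)=0$, hold automatically. I would then show that such vanishings, seeded by the regular starting series, propagate along the word so that the right-hand side of \eqref{singeq2} at each resonant index $k = 2m-1$ collapses to zero, and likewise that the terminal constant-term constraint is satisfied.

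With the compatibility conditions in hand the construction closes: applying $w_1, \ldots, w_m$ to the chosen Taylor series yields formal Laurent series with at most simple poles, regular at both ends, satisfying \eqref{delayP1} and its relevant up-shifts termwise, hence an admissible sequence. The point I expect to demand the most care is the resonances with $m \geq 2$, where the condition sits at the larger index $k=2m-1$ and involves a nontrivial convolution of lower coefficients; there I would organize the induction using the symmetry of the leading coefficient relations under $z \mapsto -z$ (which negates residues and reverses the shift) to cut down the number of cases that must be checked by hand.
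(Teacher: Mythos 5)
Your overall strategy is the same as the paper's: extend $r$ and $T$ from the residue level \eqref{residuerelations} to maps on full coefficient sequences via \eqref{singeq2}, and use the fact that a word equivalent to the identity in $W(A_1^{(1)})$ returns the residue $a_{-1}=0$ to itself, so that the sequence begins and ends in Taylor series. In fact the paper's entire proof is the single sentence that the theorem ``follows from the definition of the rational maps $r$ and $T$'', so your analysis is considerably more careful: you correctly locate the two places where these maps degenerate --- the order-$\zeta^{-1}$ balance $\bar a_{-1}\bar a_0=a_{-1}a_0$ (which forces the incoming constant term to vanish when the residue returns to $0$ from a nonzero value), and the resonance at $k=2m-1$ in \eqref{singeq2} when $\bar a_{-1}=m$ is a positive integer, where $\bar a_{2m}$ is freed and the right-hand side must vanish --- and you correctly observe that neither degeneracy can occur at the first step out of a regular seed.

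There is, however, a genuine gap: the claim that ``such vanishings, seeded by the regular starting series, propagate along the word so that the right-hand side of \eqref{singeq2} at each resonant index $k=2m-1$ collapses to zero'' is precisely the nontrivial content of the theorem, and your proposal asserts it rather than proves it. You verify it only for the shortest pattern \eqref{singpatternsimple}, where the single $m=1$ resonance is killed by the vanishing constant term inherited from the regular seed. For a general identity word the residues can reach any positive integer $m$, the compatibility condition then sits at depth $2m$ in the Laurent expansion and involves a convolution of all lower coefficients of both neighbouring iterates, and those coefficients in turn depend on the free parameters released at earlier resonances. What is needed is a precise inductive invariant --- a statement of exactly which coefficient combinations vanish identically, as functions of the seed and of the released parameters, after applying an arbitrary prefix of the word --- together with a verification that it is preserved by both $r$ and $T$; you have neither formulated nor established such an invariant, and the appeal to the $z\mapsto -z$ symmetry is an organisational device rather than a substitute for it. To be fair, the paper's own proof does not address this point at all, so your proposal, while incomplete at its crucial step, is closer to a genuine argument than what is printed.
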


This shows that the delay Painlev\'e-I  equation admits infinitely many confining singularity patterns, which has also been shown to be the case in other examples of delay-differential analogues of Painlev\'e equations \cite{delaySC}.
While affine Weyl groups and their extensions play a prominent role in the theory of differential and discrete Painlev\'e equations, to our knowledge this is their first appearance in the context of delay-differential Painlev\'e equations. 
It is not yet clear what the significance of this structure is, but we believe it warrants further investigation.

In Section \ref{sec:concludingremarks} we will see how the delay Painlev\'e-I equation arises from a contiguity relation satisfied by solutions of Painlev\'e-IV which are related by a B\"acklund transformation corresponding to a discrete Painlev\'e equation known as d$\pain{I}$.
Both Painlev\'e-IV and d$\pain{I}$ have surface type $E_6^{(1)}$ in the Sakai classification \cite{SAKAI2001}, for which the associated symmetry type in the generic case is $A_2^{(1)}$ (which is quite natural from the dressing chain point of view \cite{VS,Adler}).
Therefore the appearance of $A_1^{(1)}$ in the singularity structure of the delay Painlev\'e-I equation here may at first glance be surprising. 

However the d$\pain{I}$ equation is in fact associated with a non-translation element of infinite order in the symmetry group $\widetilde{W}(A_2^{(1)})$ of a generic $E_6^{(1)}$-surface, and has less than the full parameter freedom for its surface type,
a phenomenon known as projective reduction \cite{projectivereduction}. 
Such discrete Painlev\'e equations can be understood in terms of the normalizer theory of Coxeter groups \cite{yangnormalizers}, and we expect that d$\pain{I}$ is associated to a translation element of a subgroup of type $A_1^{(1)}$, explaining the appearance of $A_1^{(1)}$ in the singularity structure of the delay Painlev\'e-I equation.

\section{Link with the calculation of Masur-Veech volumes} \label{sec:masurveech}

Consider the moduli space $\mathcal{M}_{g,n}$ of algebraic curves $\mathcal{C}$ of genus $g$ with $n$ marked points. Its cotangent bundle can be identified with
the moduli space $\mathcal{Q}_{g,n}$ of pairs $(\mathcal{C}, q)$, where $q$ is a quadratic differential on $\mathcal{C}$ with simple poles at the marked points, and thus can be supplied with the canonical symplectic structure, and hence with corresponding canonical volume form. A quadratic differential $q$ defines a flat metric $|q|$ (with conical singularities) on the curve $\mathcal C$ with a finite area $Area(\mathcal C,q)$.
The \emph{Masur-Veech volume} $\operatorname{Vol} \mathcal{Q}_{g,n}$ introduced in 
\cite{masur, veech} is the volume of the subset of  $\mathcal{Q}_{g,n}$ with $Area(\mathcal C,q)\leq 1/2$.

These volumes have been extensively studied in recent years (see \cite{ADGZZ} and references therein). In particular, Chen, M\"oller and Sauvaget \cite{CMS} found a formula for $\operatorname{Vol} \mathcal{Q}_{g,n}$ as Hodge integrals over Deligne-Mumford compactifications of $\mathcal{M}_{g,n}$, and Kazarian \cite{KAZARIAN1} derived a recursion relation for the Masur-Veech volumes using a connection to the KP hierarchy from \cite{KAZARIAN0}.

Recently Yang, Zagier and Zhang \cite{YZZ} used the formula of Chen, M\"oller and Sauvaget and the results of Buryak \cite{Buryak} to derive another recursive procedure for calculating $\operatorname{Vol} \mathcal{Q}_{g,n}$.
In particular, they showed that the {\it Masur-Veech free energy} defined as the generating function
\begin{equation} \label{MVgenfunction}
\mathcal{H}(x,\epsilon) = \sum_{g,n \geq 0} \epsilon^{2g-2} \frac{x^n}{n!} a_{g,n},
\end{equation}
of the rational numbers $a_{g,n}$ related to the Masur-Veech volumes by
\begin{equation} \label{MVa}
\operatorname{Vol} \mathcal{Q}_{g,n}=2^{2g+1}\frac{\pi^{6g-6+2n}(4g-4+n)!}{(6g-7+2n)!} a_{g,n},
\end{equation}
satisfies the equation
\begin{equation} \label{Hdelayeq}
\left[ \partial_x \left( \mathcal{H}_+ - \mathcal{H}_- \right) \right]^2  + \partial_x^2 \left( \mathcal{H}_+ + \mathcal{H}_- \right) = \frac{2 x }{\epsilon^2},
\end{equation}
where $\mathcal{H}_{\pm}(x,\epsilon) =  \mathcal{H}\left(x \pm \frac{ i \epsilon}{2}, \epsilon \right)$. 

Our first observation is the following relation between the YZZ equation \eqref{Hdelayeq} and the delay Painlev\'e-I  equation.
Define the function
\begin{equation} \label{PhidefMV}
F(x,\epsilon): = \partial_x \left(\mathcal{H}_+ - \mathcal{H}_-\right).
\end{equation}

\begin{proposition} \label{HeqntodelayP1}
Let $\mathcal{H}(x,\epsilon)$ be a solution of equation \eqref{Hdelayeq},
then the function
\begin{equation} \label{MVfdefinition}
f_{MV}(z): = -i \epsilon F\left( i \epsilon\left( z - \tfrac{1}{2} \right)  , \epsilon\right),
\end{equation}
 is a formal entire solution of the delay Painlev\'e-I equation \eqref{scaledperturbation} with parameter $\beta=i \epsilon$.
\end{proposition}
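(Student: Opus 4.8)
The plan is to substitute \eqref{MVfdefinition} into the scaled delay Painlev\'e-I equation \eqref{scaledperturbation} with $\beta = i\epsilon$, and to recognise the result as nothing more than the difference between the YZZ equation \eqref{Hdelayeq} and one $i\epsilon$-shifted copy of itself. The first step is to translate shifts: under $z \mapsto z+1$ the argument $x = i\epsilon(z-\tfrac12)$ of $F$ advances by $i\epsilon$, so the operator $T$ acting on $f_{MV}$ corresponds to the shift $x \mapsto x + i\epsilon$. Writing $F' = \partial_x F$ and applying the chain rule to \eqref{MVfdefinition} gives $f_{MV}'(z) = \epsilon^2 F'(x)$ and $f_{MV}(z)^2 = -\epsilon^2 F(x)^2$. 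Feeding these into \eqref{scaledperturbation} and cancelling a factor of $\epsilon^2$, the identity to be proved reduces to
\begin{equation}
F'(x+i\epsilon) + F'(x) + F(x+i\epsilon)^2 - F(x)^2 = \frac{2i}{\epsilon}. \tag{$\ast$}
\end{equation}

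Next I would set $G = \mathcal{H}_+ - \mathcal{H}_-$ and $P = \mathcal{H}_+ + \mathcal{H}_-$, so that $F = \partial_x G$ and equation \eqref{Hdelayeq} reads $F^2 + \partial_x^2 P = 2x/\epsilon^2$. Applying the difference operator (shift by $i\epsilon$) minus (identity) in $x$ to both sides, the right-hand side becomes $2(i\epsilon)/\epsilon^2 = 2i/\epsilon$, matching the right-hand side of $(\ast)$, while the left-hand side becomes $F(x+i\epsilon)^2 - F(x)^2 + \partial_x^2\big[P(x+i\epsilon) - P(x)\big]$. The one genuinely structural point is to identify the last term with $F'(x+i\epsilon) + F'(x)$. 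Expanding $\mathcal{H}_{\pm}(x) = \mathcal{H}(x \pm \tfrac{i\epsilon}{2})$, the half-integer shifts telescope:
\begin{equation}
P(x+i\epsilon) - P(x) = \mathcal{H}(x+\tfrac{3i\epsilon}{2}) - \mathcal{H}(x-\tfrac{i\epsilon}{2}) = G(x+i\epsilon) + G(x),
\end{equation}
whence $\partial_x^2[P(x+i\epsilon) - P(x)] = \partial_x[F(x+i\epsilon) + F(x)] = F'(x+i\epsilon) + F'(x)$, and $(\ast)$ follows immediately.

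The main obstacle is purely one of bookkeeping: one must carefully keep the half-integer shifts built into $\mathcal{H}_{\pm}$ distinct from the unit shift $T$ of the delay equation, and verify that the symmetric combination $P$ and the antisymmetric combination $G$ interlace correctly on the half-integer lattice so that the telescoping above holds. Once the identity $\partial_x[P(x+i\epsilon) - P(x)] = F(x+i\epsilon) + F(x)$ is established, no real computation remains. Finally, since $\mathcal{H}$ enters only as a formal series in $\epsilon$ whose coefficients are entire functions of $x$ (being power series in $x$ by \eqref{MVgenfunction}), the identity $(\ast)$ holds order by order in $\epsilon$, making $f_{MV}$ a formal solution whose $\epsilon$-coefficients are entire in $z$, in agreement with the existence result of Section \ref{sec:existence}.
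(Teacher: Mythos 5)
Your verification is correct, and it is essentially the argument the paper has in mind: the paper declares this proof ``straightforward'' and carries out the same computation inside the proof of Theorem~\ref{th:masurveechpolynomials}, where the YZZ equation \eqref{Hdelayeq} is recast as the integrated delay equation \eqref{SPEC3} and the delay Painlev\'e-I equation is recovered by applying $T_\beta-1$. Your telescoping identity $P(x+i\epsilon)-P(x)=G(x+i\epsilon)+G(x)$ is exactly the concrete, elementary form of the paper's operator identity $\tfrac{T_\beta+1}{T_\beta-1}(T_\beta-1)=T_\beta+1$, so the two derivations coincide step for step.
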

The proof is straightforward.
The question is then how to characterise this very special solution of the delay Painlev\'e-I equation.
In terms of the parameters $s_k$ the answer turns out to be remarkably simple.


\begin{theorem} \label{th:masurveechpolynomials}
The function $f_{MV}(z)$ constructed from the Masur-Veech free energy by formula (\ref{MVfdefinition})
coincides with the special formal solution of the delay Painlev\'e-I equation \eqref{delayP1}, satisfying \eqref{new1} with parameters $s_\ell=0,\,\, \ell\in \mathbb N$.
More explicitly,
\begin{equation} \label{SPEC}
f_{MV}(z)=  \sum_{\ell=1}^{\infty} Q_{\ell}(z) \beta^{\ell},
\end{equation}
where $Q_{\ell}(z)=\Phi_\ell(z,\tau^*)$ are the Bernoulli-Catalan polynomials \eqref{Qnrel}.

In particular, the corresponding special parameters $\tau^*$ in \eqref{special} are related to the Masur-Veech volumes according to 
\begin{equation} \label{SPEC1}
\tau^*_{\ell}= \sum_{0\leq g\leq \frac{\ell}{2}} (-1)^g \frac{2a_{g,l-2g+2}}{(l-2g+1)!},
\end{equation} 
for $\ell$ even, and vanish for $\ell$ odd.
\end{theorem}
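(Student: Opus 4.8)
The plan is to prove the theorem in two essentially independent computations. The first shows that the substitution \eqref{MVfdefinition} carries the YZZ equation \eqref{Hdelayeq} \emph{exactly} into the outer-expansion equation \eqref{new3}, i.e.\ equation \eqref{new2} with all $s_i=0$; since Proposition~\ref{HeqntodelayP1} already gives $f_{MV}$ as a formal entire solution of \eqref{scaledperturbation}, this pins it down as the $s_\ell=0$ solution and, by the recursive construction of Section~\ref{sec:alls_izero}, identifies its coefficients with the Bernoulli-Catalan polynomials $Q_\ell$. The second computation establishes \eqref{SPEC1} by evaluating $f_{MV}$ directly at $z=0$ and $z=1$.

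For the first part I would rewrite \eqref{Hdelayeq} in operator form. Writing $S_\pm=e^{\pm\frac{i\epsilon}{2}\partial_x}$ for the half-shifts, so $\mathcal{H}_\pm=S_\pm\mathcal{H}$ and $F=\partial_x(\mathcal{H}_+-\mathcal{H}_-)=(S_+-S_-)\partial_x\mathcal{H}$, the second term of \eqref{Hdelayeq} is $\partial_x^2(\mathcal{H}_++\mathcal{H}_-)=(S_++S_-)\partial_x^2\mathcal{H}=\coth\!\big(\tfrac{i\epsilon}{2}\partial_x\big)\partial_x F$, using $\frac{S_++S_-}{S_+-S_-}=\coth\!\big(\tfrac{i\epsilon}{2}\partial_x\big)$. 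Thus \eqref{Hdelayeq} reads $F^2+\coth\!\big(\tfrac{i\epsilon}{2}\partial_x\big)\partial_x F=\tfrac{2x}{\epsilon^2}$. Now substituting $u(\xi)=f_{MV}(z)=-i\epsilon\,F(\xi,\epsilon)$ with $\xi=x=\beta(z-\tfrac12)$ and $\beta=i\epsilon$, so that $F=-u/\beta$, $\partial_x=D$, $\tfrac{i\epsilon}{2}=\tfrac{\beta}{2}$ and $\epsilon^2=-\beta^2$, and multiplying through by $\beta^2$, the identity $\beta\coth(\tfrac{\beta}{2}D)D=\coth(\tfrac{\beta}{2}D)\beta D=2\mathcal{L}$ (valid since $\tfrac{t}{2}\coth\tfrac{t}{2}=\sum_n\tfrac{B_{2n}}{(2n)!}t^{2n}$ with $t=\beta D$ defines the operator $\mathcal{L}$ of \eqref{newL}) turns the equation into exactly $u^2=2\mathcal{L}u-2\xi$, which is \eqref{new3}. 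By uniqueness of the formal series solution of \eqref{new2} once the branch $u_0=1-\sqrt{1-2\xi}$ is fixed, $f_{MV}$ must equal $\sum_{\ell}Q_\ell(z)\beta^\ell$; the correct branch is selected because $f_{MV}$ has vanishing $\beta^0$-coefficient, a consequence of the stability range $2g-2+n\geq 1$ in \eqref{MVgenfunction}.

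For \eqref{SPEC1} I would compute the $\tau$-parameters of this solution, which by their definition satisfy $\tau^*_\ell=Q_\ell(1)+Q_\ell(0)$, via the generating identity $\sum_{\ell\geq 1}\tau^*_\ell\beta^\ell=f_{MV}(1)+f_{MV}(0)$. Evaluating \eqref{MVfdefinition} at $z=0,1$ sends the argument of $F$ to $\mp\tfrac{i\epsilon}{2}$, and the telescoping $F(\tfrac{i\epsilon}{2})+F(-\tfrac{i\epsilon}{2})=\mathcal{H}'(i\epsilon)-\mathcal{H}'(-i\epsilon)$ gives $f_{MV}(1)+f_{MV}(0)=-i\epsilon\big(\mathcal{H}'(i\epsilon)-\mathcal{H}'(-i\epsilon)\big)$. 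Inserting the series \eqref{MVgenfunction} for $\mathcal{H}'=\partial_x\mathcal{H}$, only even $n=2\nu$ survive the antisymmetrization; tracking the powers $\epsilon^{2g+2\nu-2}$, the factors of $i$ from $(i\epsilon)^{2\nu-1}-(-i\epsilon)^{2\nu-1}$ and the prefactor $-i\epsilon$, and re-expressing in $\beta=i\epsilon$ via $\epsilon^{2m}=(-1)^m\beta^{2m}$, collapses all the $i$'s into a single sign $(-1)^g$ and yields $\tau^*_\ell=\sum_{0\leq g\leq \ell/2}(-1)^g\tfrac{2a_{g,\ell-2g+2}}{(\ell-2g+1)!}$ for even $\ell$, with $\tau^*_\ell=0$ for odd $\ell$ since the exponent $2g+2\nu-2$ is always even.

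The genuinely substantive step is the first computation: recognizing that the single operator identity $\frac{S_++S_-}{S_+-S_-}=\coth(\tfrac{i\epsilon}{2}\partial_x)$ together with $2\mathcal{L}=\coth(\tfrac{\beta}{2}D)\beta D$ turns \eqref{Hdelayeq} into \eqref{new3} with no residual integration constants, which is exactly what forces $s_\ell\equiv 0$. The remaining difficulties are bookkeeping rather than conceptual: one must keep the two roles of $\epsilon$ separate (the explicit half-shift $\tfrac{i\epsilon}{2}$ versus the genus-grading $\epsilon^{2g-2}$), and in the last step carefully reconcile the factors of $i$ with the factorials so that the signs in \eqref{SPEC1} emerge correctly. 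A minor point to address is the justification that $f_{MV}$ lies on the correct branch and has polynomial coefficients, i.e.\ corresponds to the trivial seed, which follows from the power counting $\ell=2g-2+n\geq 1$.
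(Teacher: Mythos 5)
Your proof is correct and follows essentially the same route as the paper: the same operator identity converts the YZZ equation into $u^2=2\mathcal L u-2\xi$ with no residual integration constants, forcing $s_\ell=0$, and your telescoping evaluation of $f_{MV}(0)+f_{MV}(1)$ is just a repackaging of the paper's computation via $\Delta z^{k}(0)+\Delta z^{k}(1)$ applied to \eqref{qldefinition}. Your explicit justification of the branch choice and of the vanishing $\beta^0$-coefficient via the stability range is a welcome detail that the paper leaves implicit.
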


\begin{proof}
With $\beta=i\varepsilon$, by definition 
$$f_{MV}(z)=-\beta \partial_x (e^{\frac{1}{2}\beta \partial_x}-e^{-\frac{1}{2}\beta \partial_x}) \mathcal H(x,\varepsilon), \quad x=\beta(z-\tfrac{1}{2}).$$
The YZZ equation (\ref{Hdelayeq}) for $u(x,\varepsilon):=f_{MV}(z), \, x=\beta(z-\frac{1}{2})$ takes the form
$$
\frac{u^2}{\beta^2}-\frac{e^{\frac{1}{2}\beta \partial_x}+e^{-\frac{1}{2}\beta \partial_x}}{e^{\frac{1}{2}\beta \partial_x}-e^{-\frac{1}{2}\beta \partial_x}}\frac{\partial_x u}{\beta}=-\frac{2 x }{\beta^2},
$$
or, after multiplication by $\beta^2$:
\begin{equation} \label{SPEC3}
u^2-\frac{e^{\beta \partial_x}+1} {e^{\beta \partial_x}-1}\beta \partial_x u=-2x.
\end{equation}
Comparing this with the equation (\ref{new1})
$$
u^2=\frac{e^{\beta D}+1} {e^{\beta D}-1} \beta D u  - 2\xi -\sum_{k=1}^\infty s_k \beta^k, \quad \xi=\beta(z-\tfrac{1}{2}), \quad D=\partial_\xi,
$$
we see that they coincide when all the constants of integration $s_{k}=0.$
This proves relation \eqref{SPEC}.

Comparing now  \eqref{SPEC} with the definition of $f_{MV}(z)$ we have the following expression of the Bernoulli-Catalan polynomials in terms of Masur-Veech volumes
\begin{equation} \label{qldefinition}
Q_{\ell}(z)  = \sum_{0\leq g\leq \lfloor \frac{\ell}{2}\rfloor}(-1)^g a_{g,l-2g+2}\frac{\Delta z^{l-2g+1}}{(l-2g+1)!},
\end{equation} 
where we defined $$
\Delta z^k:=z^k-(z-1)^k, \quad k \in \mathbb N.
$$
To prove relation \eqref{SPEC1} recall that $\tau_\ell=\Phi_\ell(0)+\Phi_\ell(1)=Q_\ell(0)+Q_\ell(1)$. 
Since $$\Delta z^k(0)+\Delta z^k(1)=1+(-1)^k$$ from (\ref{qldefinition}) we see that the corresponding parameters $\tau_\ell=\tau_\ell^*$ are zero for odd $\ell$ and are given by  (\ref{SPEC1}) for even $\ell.$
\end{proof}

Let $A_k^{(n)}:=a_k^{(n)}(\tau^*)$ as before be the coefficients of the Bernoulli-Catalan polynomials
$$
Q_n(z)=\sum_{k=0}^nA_k^{(n)} z^{k}.
$$

\begin{corollary}
The Masur-Veech numbers can be expressed via the coefficients $A_k^{(n)}$ by the formula
\begin{equation} \label{MVBer}
a_{g,n}=(-1)^g(n-2)!\sum_{k=0}^{2g}(-1)^k  \binom{n+2g-k-2 }{2g-k}B_{2g-k}A_{n+2g-2-k}^{(n+2g-2)},
\end{equation}
where $B_m$ are the Bernoulli numbers.
\end{corollary}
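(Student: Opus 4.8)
The plan is to read the corollary as the explicit inversion of formula \eqref{qldefinition}. First I would fix the genus $g$ and the number of points $n$ and observe that $a_{g,n}$ enters the family $\{Q_\ell\}$ in exactly one polynomial: it appears in $Q_\ell$ when $\ell - 2g' + 2 = n$ with summation index $g'=g$, i.e. in $Q_N$ with $N := n+2g-2$. More generally, in $Q_N$ the term indexed by $g'$ carries $a_{g',\,n+2(g-g')}$ against $\Delta z^{\,N-2g'+1}/(N-2g'+1)!$, which is a polynomial of degree $N-2g'$. Consequently the coefficients $A_{N-k}^{(N)}$ of the top $2g+1$ powers $z^{N}, z^{N-1},\dots,z^{N-2g}$ of $Q_N$ involve only the finitely many numbers $a_{g',\,n+2(g-g')}$ with $0\le g'\le g$, so the right-hand side of \eqref{MVBer} is a well-defined finite combination.

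Next I would extract these coefficients explicitly. Since $\Delta z^m = z^m-(z-1)^m$, the coefficient of $z^{\,N-k}$ in $\Delta z^m/m!$ with $m=N-2g'+1$ equals $(-1)^{k}\big/\big((N-k)!\,(k-2g'+1)!\big)$, and is nonzero exactly when $k\ge 2g'$. Substituting into \eqref{qldefinition} gives
\begin{equation*}
A_{N-k}^{(N)} = \frac{(-1)^{k}}{(N-k)!}\sum_{0\le g'\le k/2} \frac{(-1)^{g'}\,a_{g',\,n+2(g-g')}}{(k-2g'+1)!}.
\end{equation*}
I would then insert this into the sum in \eqref{MVBer}, clear the factorials using $\binom{N-k}{2g-k}\big/(N-k)! = 1\big/\big((2g-k)!\,(n-2)!\big)$ (recall $N-2g=n-2$), note that the two factors $(-1)^k$ cancel, and swap the order of the two summations.

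After the swap, writing $p:=g-g'$ and $j:=k-2g'$, the surviving sum over $k$ for each fixed $g'$ becomes
\begin{equation*}
I_p := \sum_{j=0}^{2p}\frac{B_{2p-j}}{(2p-j)!\,(j+1)!},
\end{equation*}
and the whole right-hand side of \eqref{MVBer} collapses to $(-1)^{g}\sum_{g'}(-1)^{g'}a_{g',\,n+2(g-g')}\,I_{g-g'}$. The crux is then the purely numerical claim that $I_p=\delta_{p,0}$: multiplying by $(2p+1)!$ turns $I_p$ into $\sum_{i=0}^{2p}\binom{2p+1}{i}B_i$, which vanishes for $p\ge 1$ by the standard Bernoulli recurrence $\sum_{i=0}^{m}\binom{m+1}{i}B_i=0$ (valid for $m\ge1$ with $B_1=-\tfrac12$), while $I_0=B_0=1$. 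This kills every term except $g'=g$, leaving $(-1)^{2g}a_{g,n}=a_{g,n}$, exactly the asserted formula.

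The remainder is routine bookkeeping of signs and binomial coefficients; the one genuinely nontrivial ingredient is recognizing the inner sum as an instance of the defining recurrence for the Bernoulli numbers, which is precisely what makes the Bernoulli-weighted combination in \eqref{MVBer} the correct inverse of \eqref{qldefinition}. I expect the main obstacle to be keeping the index ranges and the three interacting parities (those of $k$, of $2g'$, and of the Bernoulli index) consistent through the summation swap, so that the collapse into $I_p$ is clean; once the inner sum is brought into the form $\sum_i\binom{2p+1}{i}B_i$, the conclusion is immediate.
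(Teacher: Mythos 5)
Your proof is correct, and it inverts \eqref{qldefinition} by a route that differs in mechanics from the paper's. The paper \emph{derives} the formula: it uses the symmetry $Q_{\ell}(1-z)=(-1)^{\ell}Q_{\ell}(z)$ (valid since $\tau^*_{2k+1}=0$) to write $Q_{\ell}(z+1)=\sum_k(-1)^{\ell-k}A_k^{(\ell)}z^k$, then substitutes the Bernoulli-polynomial identity $z^k=\frac{1}{k+1}\sum_j\binom{k+1}{j}B_j\,(T-1)z^{k+1-j}$ to re-expand this in the basis $\{(T-1)z^m\}$ and reads off $a_{g,n}$ by comparison with \eqref{plus2}. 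You instead go the forward direction — extract $A_{N-k}^{(N)}$ explicitly from $\Delta z^m=z^m-(z-1)^m$ in \eqref{qldefinition} as a finite sum over $g'\le k/2$ — and then \emph{verify} that the Bernoulli-weighted combination in \eqref{MVBer} collapses to $a_{g,n}$, the collapse resting on $\sum_{i=0}^{2p}\binom{2p+1}{i}B_i=0$ for $p\ge1$. The two arguments are of course cousins: the recurrence you invoke is exactly the statement that $t/(e^t-1)$ inverts $(e^t-1)/t$, which is also what underlies $(T-1)B_n(z)=nz^{n-1}$. What your version buys is that it bypasses the symmetry property of $Q_\ell$ entirely (you work at $z$ rather than $z+1$, so you never need $\tau^*_{\mathrm{odd}}=0$) and it makes the triangular, finite nature of the inversion completely explicit; what the paper's version buys is brevity and a conceptual reading of \eqref{MVBer} as a change of basis between monomials and forward differences. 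Your index bookkeeping checks out: the coefficient of $z^{N-k}$ in $\Delta z^{N-2g'+1}/(N-2g'+1)!$ is indeed $(-1)^k/\bigl((N-k)!\,(k-2g'+1)!\bigr)$ supported on $k\ge 2g'$, the factor $\binom{N-k}{2g-k}/(N-k)!=1/\bigl((2g-k)!\,(n-2)!\bigr)$ is right since $N-2g=n-2$, and the convention $B_1=-\tfrac12$ implicit in the paper's $B_n(x)$ is the one for which your recurrence holds.
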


\begin{proof}
Note that since $\tau^*_{\ell}=0$ for odd $\ell$, the corresponding polynomials have the symmetry property $Q_{\ell}(1-z) = (-1)^{\ell} Q_{\ell}(z)$, implying that
\begin{equation} \label{plus1}
Q_{\ell}(z+1) = (-1)^{\ell} Q_{\ell}(-z)=\sum_{k=0}^\ell (-1)^{\ell-k} A_k^{(\ell)} z^{k}.
\end{equation}
On the other hand, from (\ref{qldefinition}) and (\ref{SPEC}) we have
\begin{equation} \label{plus2}
Q_{\ell}(z+1)= \sum_{0\leq g\leq \lfloor \frac{\ell}{2}\rfloor}(-1)^g a_{g,l-2g+2}\frac{(T-1)z^{l-2g+1}}{(l-2g+1)!},
\end{equation}
where $T$ as before is the shift $Tf(z) = f(z+1)$. 
Recall now that the Bernoulli polynomials 
$$
B_n(z):=\sum_{k=0}^n \binom{n}{k}B_k z^{n-k},
$$
have the property
$
(T-1)B_n(z)=n z^{n-1},
$
so 
\begin{equation} \label{plus3}
z^k=\frac{1}{k+1}(T-1)B_{k+1}(z)=\frac{1}{k+1}\sum_{j=0}^{k} \binom{k+1}{j}B_j (T-1)z^{k+1-j}.
\end{equation}

Combining the relations (\ref{plus1}), (\ref{plus2}), (\ref{plus3}), we have the claim.
\end{proof}

We can use our results to derive the formula
\begin{equation}
\operatorname{Vol} \mathcal{Q}_{0,n} = \frac{\pi^{2n-6}}{2^{n-5}}, \quad n \geq 3,
\end{equation}
which was conjectured by Kontsevich and first proved by Athreya, Eskin and Zorich in \cite{AEZ}.

Indeed, for $g=0$ the formula (\ref{MVBer}) gives
$$
a_{0,n}=(n-2)!A_{n-2}^{(n-2)}=(2n-7)!!=\frac{(2n-7)!}{2^{n-4}(n-4)!},
$$
where we used Proposition \ref{leadq} saying that
$
A_{\ell}^{(\ell)} =\frac{ (2\ell-3)!!}{\ell!}.
$
Now using (\ref{MVa}) we have
\begin{equation}
\operatorname{Vol} \mathcal{Q}_{0,n} =2 \frac{\pi^{2n-6}(n-4)!}{(2n-7)!}a_{0,n}= 2 \frac{\pi^{2n-6}(n-4)!}{(2n-7)!}\frac{(2n-7)!}{2^{n-4}(n-4)!}=\frac{\pi^{2n-6}}{2^{n-5}},
\end{equation}
which is the Kontsevich formula.

Similarly, for $g=1$ formula (\ref{MVBer}) gives
$$
a_{1,n}=-(n-2)!\left[\binom{n}{2}B_2A_n^{(n)}-\binom{n-1}{1}B_1A_{n-1}^{(n)}+\binom{n-2}{0}B_0A_{n-2}^{(n)}\right].
$$
From Proposition \ref{leadq} we have
$$
A_{\ell}^{(\ell)} =\frac{ (2\ell-3)!!}{\ell!}, \quad A_{\ell-1}^{(\ell)}=-\frac{\ell}{2}A_{\ell}^{(\ell)},\quad A_{\ell-2}^{(\ell)} =\frac{\ell(\ell-1)}{8}A_{\ell}^{(\ell)}-\frac{2^{\ell-2}(\ell-1)}{12}.
$$
Using this we can find $a_{1,\ell}$ as
$$
a_{1,\ell}=\frac{(2\ell -3)!!}{24}+\frac{2^{\ell-2}(\ell-1)!}{12},
$$
and hence the formula for the Masur-Veech volume
\begin{equation}\label{MV2}
\operatorname{Vol}  Q_{1,n}=\frac{\pi^{2n}}{3}\left[\frac{n!}{(2n-1)!!}+\frac{n}{2^{n-1}(2n-1)}\right],
\end{equation}
which was first conjectured by Anderson et al \cite{ABCDGLW} and proved by Chen et al \cite{CMS} (see also \cite{YZZ}).

We should make clear that our derivation is based on the results of Yang, Zagier and Zhang \cite{YZZ}, and thus cannot be considered as new independent proof of these formulas. We are using these calculations mainly to justify the further study of our polynomials $\Phi_n$. One of the most important questions is to provide an alternative description of the special values of parameters $\tau_\ell^*$.

Note that in \cite{YZZ} it was also shown that the Masur-Veech free energy satisfies in addition to \eqref{Hdelayeq} a second equation 
\begin{equation} \label{MVsecondeq}
\left( \epsilon \partial_{\epsilon} + \frac{1}{2} x \partial_x - \frac{\epsilon^2}{24} \partial_x^3 \right) \left( \mathcal{H}_+ -\mathcal{H_-} \right) + \frac{\epsilon^2}{12} \left[ \partial_x \left(\mathcal{H}_+ - \mathcal{H}_- \right) \right]^3 = 0,
\end{equation}
as well as a certain boundary condition (see \cite{YZZ} for details). 
We observe that this second equation \eqref{MVsecondeq} is nothing but a special case of the second modified KdV equation \cite{nakamurahirota}. Indeed, letting $h(x,\epsilon)\defeq \mathcal{H}_+ - \mathcal{H}_-$, then introducing $\xi = x \epsilon^{-1/2}$, $t = \epsilon^{1/2}$, $v(\xi,t) = h(x,\epsilon)$,
we obtain 
\begin{equation}
v_t = \frac{1}{12} \left(v_{\xi \xi \xi} - 2 v_{\xi}^3 \right).
\end{equation}
The equation \eqref{MVsecondeq} implies that the family of solutions $f = f_{MV}(z;\beta)$ of delay Painlev\'e-I  parametrised by $\beta$ as defined by \eqref{MVfdefinition} satisfy the equation
\begin{equation}
 2 f_{\beta} \beta^2 - \left(f + \left(z-\tfrac{1}{2}\right)  f_z \right)\beta- \frac{1}{12} f_{zzz} +  \frac{1}{2}f_z f^2 = 0.
\end{equation}
In terms of the polynomials $Q_{\ell}(z)$ in the series $f_{MV}(z;\beta) = \sum_{\ell=1}^{\infty} Q_{\ell}(z) \beta^{\ell}$ this becomes the third order recurrence relation
\begin{equation}
Q_1''' = 0, \quad 
\frac{1}{6} Q_{\ell}''' + (4\ell-6) Q_{\ell-1} - 2(z-\tfrac{1}{2}) Q_{\ell-1}' - \sum_{k=1}^{\ell-1} Q_k' \sum_{j=1}^{\ell-k-1}Q_j Q_{\ell-j-k} = 0, \quad \ell \geq 2. 
\end{equation}


\section{Concluding remarks} \label{sec:concludingremarks}

The Painlev\'e-$\mathrm{I}$ equation plays a special role among all Painlev\'e equations, being the simplest and the only one without parameters. Its solutions are remarkable meromorphic transcendental functions, which were studied for more than a hundred years starting from Boutroux \cite{Boutroux}, and appear in many important areas, including quantum gravity.

A discrete version of Painlev\'e-$\mathrm{I}$ equation first appeared in the theory of orthogonal polynomials in the work of Shohat \cite{shohat}. Its general form is  
\begin{equation} \label{discreteP1}
x_{n+1} + x_n + x_{n-1}  = \frac{a n + b}{x_n} + c, 
\end{equation}
where $a,b,c$ are parameters such that $a\neq0$. Shohat proved that the coefficients $x_n=a_n^2$ in the three-term recurrence relation 
$$
P_{n+1}-xP_n+a_n^2 P_{n-1}=0,
$$
for the monic orthogonal polynomials $P_n=P_n(x;t)$ with the exponential weight $w(x;t)=e^{tx^2-x^4}$ satisfy the nonlinear recurrence \eqref{discreteP1} with $a=\frac{1}{4}, \, b=0, \, c=\frac{t}{2}$.


Magnus \cite{Magnus} had shown that
the d$\pain{I}$ equation \eqref{discreteP1} provides a B\"acklund transformation for the fourth Painlev\'e equation in the form
\begin{equation} \label{P4scalar}
\frac{d^2x}{dt^2} = \frac{(\frac{dx}{dt})^2}{2x} + \frac{3}{2} x^3 + 4 t x^2 + 2 (t^2 - \alpha)x + \frac{\beta}{x}.
\end{equation}
Namely, let $x=x_n(t)$ be a solution of \eqref{P4scalar} with parameters 
$$\alpha = \frac{1}{2}(n-b), \quad \beta = - \frac{1}{2}(n-b)^2,$$ 
then 
\begin{equation}
x_{n+1} = \frac{b - n - 2 t x_n - x_n^2 + \frac{dx_n}{dt}}{2 x_n},
\end{equation}
gives a solution of \eqref{P4scalar} with the same parameters but with $n\to n+1$. 
There is also an inverse transformation corresponding to the parameter shift $n\to n-1$, and the sequence $x_n, n\in \mathbb{Z}$ of such solutions satisfy the d$\pain{I}$ equation \eqref{discreteP1} with $a = -1$, $c=-2t$.

Further, solutions $x_n$ and $x_{n+1}$ related by this B\"acklund transformation satisfy a kind of contiguity relation, which takes the form of the differential-difference equation
\begin{equation} \label{contiguity}
\frac{dx_{n+1}}{dt} + \frac{dx_{n}}{dt} = x_{n}^2 - x_{n+1}^2 + 2 t (x_n-x_{n+1}) - 1.
\end{equation}
Combining a change of variables with a delay reduction by introducing 
\begin{equation}
f(z) = - i \sqrt{\alpha} \,\left(x_n(t) + t \right), \quad z = \frac{i}{\sqrt{\alpha}} t + n h,
\end{equation}
the equation \eqref{contiguity} becomes
\begin{equation}
f'(z+h) + f'(z) = f(z+h)^2 - f(z)^2 + \alpha,
\end{equation}
which is the delay Painlev\'e-I equation \eqref{delayP1}.
Note that one can consider here a more general closure of the dressing chain with
$$
f_{j+1}(z)=q f_j(q z+h),
$$
but for $q \neq 1$ this can be reduced to the closure $f_{j+1}(z)=qf_j(qz)$, studied in detail by Shabat \cite{Shabat}, Adler \cite{Adler2} and Liu \cite{Liu}.


The delay Painlev\'e-$\mathrm{I}$ equation, which we considered here, seems to be the most natural delay-differential version of the Painlev\'e-$\mathrm{I}$ equation.
It is related to the ILW equation in the same way as the Painlev\'e-$\mathrm{I}$ equation is related to the KdV equation.

Note that the ILW equation with a small parameter $\varepsilon$
\begin{equation}
u_t = 2 u u_x - i\varepsilon\frac{T+1}{T-1} u_{x x}, \quad T=e^{i\varepsilon\frac{\partial}{\partial x}},
\end{equation}
can be re-written as
\begin{equation}
\label{newILW}
\frac{\partial u}{\partial t} = 2 u \frac{\partial u}{\partial x} - 2\sum_{g=1}^\infty \varepsilon^{2g} \frac{(-1)^g B_{2g}}{(2g)!} \frac{\partial ^{2g+1} u}{\partial x^{2g+1}},
\end{equation}
where $B_n$ are Bernoulli numbers. 
In this form the ILW equation appeared in the work of Buryak \cite{Buryak}, who proved that the Hodge integrals satisfy the ILW hierarchy, which was used by Yang, Zagier and Zhang \cite{YZZ} to derive the key relation \eqref{Hdelayeq}. 

This makes the appearance of Bernoulli-like polynomials in relation with the delay Painlev\'e-$\mathrm{I}$ equation and Masur-Veech volumes not that surprising, but, unlike in the Bernoulli case, the recurrence relation for the new polynomials $\Phi_n$ is nonlinear of Catalan type, which makes their study much more difficult. 
We should mention that there are polynomial versions of Catalan numbers known, in particular, from the theory of $q,t$-Catalan numbers (see e.g. \cite{garsiahaiman} and \cite{Stanley99}, Ch. 6), but they seem to be of a different type. 
In any case, the question about possible combinatorial interpretation of the coefficients of the new polynomials is very natural. 

 Another interesting parallel is between the polynomials $\Phi_n(z;\tau)$ and the Burchnall-Chaundy (also known as Adler-Moser) polynomials $P_n(z;\tau)$, which also satisfy a nonlinear recurrence relation and depend on parameters related to the times $t_k$ in the KdV hierarchy by a triangular polynomial transformation (see \cite{adlermoser}, \cite{ralphsasha}).

Finally, we would like to explore if our approach can help to prove some conjectures from \cite{ADGZZ, YZZ} about asymptotics of Masur-Veech volumes $\operatorname{Vol} \mathcal{Q}_{g,n}$ for large $g.$

%
%

\subsection*{Acknowledgements} 
We are grateful to Vsevolod Adler for very helpful discussions and comments on the earlier version of this paper.

This work was started while AS was supported by a London Mathematical Society Early Career Fellowship, and AS gratefully acknowledges the support of the London Mathematical Society, in particular, to visit Loughborough University in September 2021.
AS was supported by a Japan Society for the Promotion of Science (JSPS) Postdoctoral Fellowship for Research in Japan and also acknowledges the support of JSPS KAKENHI Grant Numbers 21F21775 and 22KF0073. 
Part of the preparation of this manuscript was done while AS was visiting the Okinawa Institute of Science and Technology (OIST) for a week in October 2023, through the Theoretical Sciences Visiting Program (TSVP) as part of the TSVP Thematic Program ``Exact Asymptotics: From Fluid Dynamics to Quantum Geometry".

\bibliographystyle{amsalpha}

\end{document}